\documentclass[a4paper,11pt]{article}
\usepackage[margin=2.5cm]{geometry}
\usepackage[utf8]{inputenc}
\usepackage[T1]{fontenc}
\usepackage{pdflscape}
\usepackage{enumitem}
\usepackage{hyperref}
\usepackage{tikz}
\usepackage{ifthen}
\usepackage{xspace}
\usepackage[sort&compress,numbers,square,comma,longnamesfirst]{natbib}
\usepackage{subcaption}
\usepackage{authblk}

\usetikzlibrary{patterns,arrows,decorations.pathreplacing,decorations.pathmorphing,calc,arrows.meta,shapes,backgrounds}

\definecolor[named]{urlblue}{cmyk}{1,0.58,0,0.21}

\hypersetup{
    breaklinks=true,
    colorlinks=true,
    citecolor=purple!70!blue!60!black,
    linkcolor=purple!70!blue!90!black,
    urlcolor=urlblue,
    pdflang={en},
    pdfborder={0 0 0},
    pdftitle={Tight Complexity Bounds for Counting Generalized Dominating Sets in Bounded-Treewidth Graphs Part I: Algorithmic Results},
    pdfauthor={Jacob Focke, Dániel Marx, Fionn {Mc Inerney}, Daniel Neuen, Govind S. {Sankar}, Philipp Schepper, and Philip Wellnitz},
    pdfcreator={},
}

\usepackage{amssymb}

\usepackage{amstext, amsthm, mathtools}
\usepackage{thm-restate}

\makeatletter
\DeclareRobustCommand{\crefnosort}[1]{%
	\begingroup\@cref@sortfalse\cref{#1}\endgroup
}
\makeatother

\newcommand{\CA}{{\mathcal A}}

\newcommand{\CC}{{\mathcal C}}

\newcommand{\CR}{{\mathcal R}}
\newcommand{\CS}{{\mathcal S}}

\newcommand{\CW}{{\mathcal W}}

\newcommand{\CZ}{{\mathcal Z}}

\newcommand{\ZZ}{{\mathbb Z}}
\newcommand{\FF}{{\mathbb F}}

\newcommand{\NN}{{\ZZ_{\geq 0}}}

\newcommand{\tw}{\textup{\textsf{tw}}}
\newcommand{\pw}{\textup{\textsf{pw}}}

\DeclareMathOperator{\cost}{cost}

\DeclareMathOperator{\bit}{bit}

\mathchardef\mhyph="2D
\newcommand{\DomSetGeneral}[4]{\ensuremath{(#1,#2)\mhyph}\textsc{{#3}Dom\-Set\ensuremath{^{#4}}}\xspace}

\newcommand{\srDomSet}{\DomSetGeneral{\sigma}{\rho}{}{}}

\newcommand{\srCountDomSet}{\DomSetGeneral{\sigma}{\rho}{\#}{}}


\DeclarePairedDelimiter{\floor}{\lfloor}{\rfloor}
\DeclarePairedDelimiter{\abs}{\lvert}{\rvert}


\renewcommand{\O}{O}
\newcommand{\Oh}{O} 


\newcommand{\ttop}{\textup{top}}
\newcommand{\rhoMax}{r_{\ttop}}
\newcommand{\sigMax}{s_{\ttop}}
\newcommand{\allMax}{t_{\ttop}}
\newcommand{\allCost}{t_{\cost}}

\newcommand{\rhoStates}{{\mathbb R}}
\newcommand{\sigStates}{{\mathbb S}}
\newcommand{\allStates}{{\mathbb A}}

\newcommand{\rhoStatesExt}{\mathbb R_{{\sf full}}}
\newcommand{\sigStatesExt}{\mathbb S_{{\sf full}}}
\newcommand{\allStatesExt}{\mathbb A_{{\sf full}}}

\usepackage{bm}

\def\fragmentco#1#2{\bm{[}\,#1\,\bm{.\,.}\,#2\,\bm{)}}
\def\fragmentoc#1#2{\bm{(}\,#1\,\bm{.\,.}\,#2\,\bm{]}}
\def\fragmentoo#1#2{\bm{(}\,#1\,\bm{.\,.}\,#2\,\bm{)}}
\def\fragment#1#2{\bm{[}\,#1\,\bm{.\,.}\,#2\,\bm{]}}
\def\nset#1{\numb{#1}}
\def\position#1{\bm{[}\,#1\,\bm{]}}
\let\pos\position
\def\vposition#1{\bm{[}\,#1\,\bm{]}}
\def\occ#1#2{\ensuremath \#_{#2}(#1)}

\def\mname{{\mathrm{m}}}
\def\sigvec#1{\ensuremath \vec{\sigma}(#1)}
\def\comp#1#2{\ensuremath {{#2\!\!\downarrow}_{#1}}}
\def\degvec#1{\ensuremath \vec{w}(#1)}
\def\capvec#1{\ensuremath \mathrm{cap}_{#1}}
\def\remvec#1#2#3{\ensuremath \mathrm{rem}_{#1}(#2, #3)}
\def\mdegvec#1{\ensuremath \vec{w}_{\mname}(#1)}

\newcommand{\numb}[1]{\fragment{1}{#1}}

\makeatletter
\renewenvironment{cases}{%
  \matrix@check\cases\env@cases
}{%
  \endarray\right.%
}
\def\env@cases{%
  \let\@ifnextchar\new@ifnextchar
  \left\lbrace
  \def\arraystretch{1.1}%
  \array{@{\;}c@{\quad}l@{}}%
}
\makeatother

\def\emptyset{\varnothing}

\def\epsilon{\varepsilon}

\usepackage[noabbrev,capitalize]{cleveref}

\numberwithin{equation}{section}
\numberwithin{figure}{section}

\newtheorem{theorem}{Theorem}[section]
\newtheorem{lemma}[theorem]{Lemma}
\newtheorem{fact}[theorem]{Fact}
\newtheorem{corollary}[theorem]{Corollary}

\newtheorem{example}[theorem]{Example}

\newtheorem{definition}[theorem]{Definition}
\newtheorem{remark}[theorem]{Remark}
\newtheorem{claim}[theorem]{Claim}

\crefname{mtheorem}{Main Theorem}{Main Theorems}
\Crefname{mtheorem}{Main Theorem}{Main Theorems}
\crefname{obs}{Observation}{Observations}
\Crefname{obs}{Observation}{Observations}
\crefname{fact}{Fact}{Facts}
\Crefname{fact}{Fact}{Facts}
\crefname{problem}{Problem}{Problems}
\Crefname{problem}{Problem}{Problems}
\crefname{conjecture}{Conjecture}{Conjectures}
\Crefname{conjecture}{Conjecture}{Conjectures}
\crefname{claim}{Claim}{Claims}
\Crefname{claim}{Claim}{Claims}

\newenvironment{claimproof}[1][\unskip]{ \begin{proof}[Proof of Claim #1.]
   }{ \end{proof} }

\title{	Tight Complexity Bounds for\\
		Counting Generalized Dominating Sets in\\
		Bounded-Treewidth Graphs\\[1ex]
		\large Part I: Algorithmic Results}
\author[1]{Jacob Focke}
\author[1]{D\'{a}niel Marx}
\author[2]{Fionn {Mc Inerney}}
\author[3]{Daniel Neuen}
\author[4]{\\Govind S. {Sankar}}
\author[1]{Philipp Schepper}
\author[5]{Philip Wellnitz}
\affil[1]{CISPA Helmholtz Center for Information Security}
\affil[2]{Algorithms and Complexity Group, TU Wien}
\affil[3]{Max Planck Institute for Informatics, SIC}
\affil[4]{Duke University}
\affil[5]{National Institute of Informatics and
    The Graduate University for Advanced Studies, SOKENDAI}
\date{}

\begin{document}

\maketitle

\vspace*{-1.0cm}

\begin{abstract}
  We investigate how efficiently a well-studied family of domination-type problems can be solved on bounded-treewidth graphs.
  For sets $\sigma,\rho$ of non-negative integers, a $(\sigma,\rho)$-set of a graph $G$ is a set $S$ of vertices such that $|N(u)\cap S|\in \sigma$ for every $u\in S$, and $|N(v)\cap S|\in \rho$ for every $v\not\in S$.
  The problem of finding a $(\sigma,\rho)$-set (of a certain size) unifies standard problems such as \textsc{Independent Set}, \textsc{Dominating Set}, \textsc{Independent Dominating Set}, and many others.

  For all pairs of finite or cofinite sets $(\sigma,\rho)$, we determine
  (under standard complexity assumptions) the best possible value $c_{\sigma,\rho}$ such that there is an algorithm that counts $(\sigma,\rho)$-sets in time $c_{\sigma,\rho}^\tw\cdot n^{\O(1)}$ (if a tree decomposition of width $\tw$ is given in the input).
  Let $\sigMax$ denote the largest element of $\sigma$ if $\sigma$ is finite, or the largest missing integer $+1$ if $\sigma$ is cofinite; $\rhoMax$ is defined analogously for $\rho$.
  Surprisingly, $c_{\sigma,\rho}$ is often significantly smaller than the natural bound $\sigMax+\rhoMax+2$ achieved by existing algorithms [van Rooij, 2020].
  Toward defining $c_{\sigma,\rho}$, we say that $(\sigma, \rho)$ is $\mname$-structured if there is a pair $(\alpha,\beta)$ such that every integer in $\sigma$ equals $\alpha$ mod $\mname$, and every integer in $\rho$ equals $\beta$ mod $\mname$.
  Then, setting
  \begin{itemize}
  	\item $c_{\sigma,\rho}=\sigMax+\rhoMax+2$ if $(\sigma,\rho)$ is not
  	$\mname$-structured for any $\mname\ge 2$,
  	\item $c_{\sigma,\rho}=\max\{\sigMax,\rhoMax\}+2$ if $(\sigma,\rho)$ is 2-structured,
  	but not $\mname$-structured for any $\mname \ge 3$, and $\sigMax=\rhoMax$ is even, and
  	\item $c_{\sigma,\rho}=\max\{\sigMax,\rhoMax\}+1$, otherwise,
  \end{itemize}
  we provide algorithms counting $(\sigma,\rho)$-sets in time $c_{\sigma,\rho}^\tw\cdot n^{\O(1)}$.
  For example, for the \textsc{Exact Independent Dominating Set} problem (also known as
  \textsc{Perfect Code}) corresponding to $\sigma=\{0\}$ and $\rho=\{1\}$, this improves the
  $3^\tw\cdot n^{\O(1)}$ algorithm
  of van Rooij to $2^\tw\cdot n^{\O(1)}$.

  Despite the unusually delicate definition of $c_{\sigma,\rho}$,
  an accompanying paper
  shows that our algorithms are most likely optimal, that is,
  for any pair $(\sigma, \rho)$ of finite or cofinite sets where the problem is non-trivial, and any $\varepsilon>0$, a $(c_{\sigma,\rho}-\varepsilon)^\tw\cdot
  n^{\O(1)}$-algorithm counting the number of $(\sigma,\rho)$-sets would violate the Counting Strong Exponential-Time Hypothesis (\#SETH).
  For finite sets $\sigma$ and $\rho$, these lower bounds also extend to the decision version, and hence, our algorithms are optimal in this setting as well.
  In contrast, for many cofinite sets, we show that further significant improvements for the decision and optimization versions are possible using the technique of representative sets.
\end{abstract}

\makeatletter{\renewcommand*{\@makefnmark}{}
	\footnotetext{%
	An extended abstract containing the results of this work
  and the accompanying paper \cite{FockeMMNSSW23ii}
  appeared in the proceedings of the 2023 {ACM-SIAM} Symposium on Discrete Algorithms (SODA 2023) \cite{FockeMINSSW23soda}.
  Research supported by the European Research Council (ERC) consolidator grant No.~725978 SYSTEMATICGRAPH and the Austrian Science Foundation (FWF, project Y1329).}\makeatother}

\section{Introduction}
\label{sec:intro}

Since treewidth was defined independently in multiple equivalent ways in the 70s \cite{halin72,DBLP:journals/jct/BerteleB73,DBLP:journals/jct/RobertsonS84}, algorithms on bounded-treewidth graphs have been investigated for decades from different viewpoints.
It was observed already in the 80s that many of the basic NP-hard problems can be solved efficiently on bounded-treewidth graphs using a dynamic programming approach \cite{DBLP:conf/icalp/Bodlaender88,DBLP:journals/dam/ArnborgP89,BERN1987216}.
Courcelle's Theorem \cite{COURCELLE199012} formalized this observation for a large class of algorithmic problems definable in monadic second-order logic.
Algorithms on bounded-treewidth graphs were studied not only for their own sake, but also because they served as useful tools in other algorithms, most notably for planar problems and problems in the parameterized setting \cite{DBLP:journals/talg/BorradaileKM09,DBLP:conf/focs/Klein05,DBLP:conf/soda/AroraGKKW98,DemaineH08,DemaineFHT05,CyganFKLMPPS15}.

Over the years, the focus shifted to trying to make the algorithms as efficient as possible.
For example, given a tree decomposition of width $\tw$, the \textsc{Dominating Set} problem can be solved in time $3^\tw\cdot n^{\Oh(1)}$ using dynamic programming and subset convolution, but this running time was achieved only after multiple rounds of improvements \cite{TelleP93,AlberN02,RooijBR09,BjorklundHKK07}.
The search for more efficient algorithms is complemented by conditional lower bounds showing that certain forms of running times are the best possible.
For problems that can be solved in time $c^\tw\cdot n^{\Oh(1)}$, a line of research started by \citet{LokshtanovMS18} gives tight lower bounds on the best possible $c$ appearing in the running time \cite{DBLP:journals/siamcomp/OkrasaR21,DBLP:conf/esa/OkrasaPR20,DBLP:conf/stacs/EgriMR18,DBLP:conf/soda/CurticapeanLN18,DBLP:conf/iwpec/BorradaileL16,DBLP:journals/dam/KatsikarelisLP19,MarxSS21,CurticapeanM16,DBLP:journals/talg/FockeMR24}.
For example, \citet{LokshtanovMS18} showed that $3^\tw\cdot n^{\Oh(1)}$ is probably optimal for \textsc{Dominating Set}: assuming the Strong Exponential-Time Hypothesis (SETH), there is no algorithm for \textsc{Dominating Set} that solves the problem in time $(3-\epsilon)^\tw \cdot n^{\Oh(1)}$ for some $\epsilon>0$ if given a graph with a tree decomposition of width $\tw$.
The goal of this paper together with the accompanying paper \cite{FockeMMNSSW23ii}
is to prove similar tight bounds for a class of generalized domination problems.
In this paper, we focus on the algorithmic side, and provide improved algorithms for an infinite subclass of generalized domination problems.
Our findings show that, despite decades of intensive research, even very simple problems are poorly understood, and surprises can be found even when looking at the simplest of problems.

\citet{Telle94} introduced the notion of $(\sigma,\rho)$-sets as a common generalization of independent sets and dominating sets.
For sets $\sigma,\rho$ of non-negative integers, a \emph{$(\sigma,\rho)$-set} of a graph $G$ is a set $S$ of vertices such that $|N(u)\cap S|\in \sigma$ for every $u\in S$, and $|N(v)\cap S|\in \rho$ for every $v\not\in S$.
 With different choices of $\sigma$ and $\rho$, the problem of finding a $(\sigma, \rho)$-set (of a certain size) can express various well-studied algorithmic problems:
\begin{itemize}
\item $\sigma=\{0\}$, $\rho=\{0,1,\ldots\}$\\
  \textsc{Independent Set}: find a set $S$ of $k$ vertices that are pairwise non-adjacent.
\item $\sigma=\{0\}$, $\rho=\{0,1\}$\\
  \textsc{Strong Independent Set}: find a set $S$ of $k$ vertices that are pairwise at distance at least 3.
\item $\sigma=\{0,1,\ldots\}$, $\rho=\{1,2,\ldots\}$\\
  \textsc{Dominating Set}: find a set $S$ of $k$ vertices such that every remaining vertex has a neighbor in $S$.
\item $\sigma=\{0\}$, $\rho=\{1,2,\ldots\}$\\
  \textsc{Independent Dominating Set}: find an independent set $S$ of $k$ vertices such that every remaining vertex has a neighbor in $S$.
\item $\sigma=\{0\}$, $\rho=\{1\}$\\
  \textsc{Exact Independent Dominating Set/Perfect Code}: find an independent set $S$ of $k$ vertices such that every remaining vertex has \emph{exactly one} neighbor in $S$.
\item $\sigma=\{1,2,\ldots\}$, $\rho=\{1,2,\ldots\}$\\
 \textsc{Total Dominating Set}: find a set $S$ of $k$ vertices such that every vertex in the graph has \emph{at least one} neighbor in $S$.
\item $\sigma=\{0,1,\ldots\}$, $\rho=\{1\}$\\
  \textsc{Perfect Dominating Set}: find a set $S$ of $k$ vertices such that every remaining vertex has \emph{exactly one} neighbor in $S$.
\item $\sigma=\{0,1\ldots, d\}$, $\rho=\{0,1,\ldots\}$\\
\textsc{Induced Bounded-Degree Subgraph}: find a set $S$ of $k$ vertices that have \emph{at most $d$} neighbors in $S$.
\item $\sigma=\{d\}$, $\rho=\{0,1,\ldots\}$\\
\textsc{Induced $d$-Regular Subgraph}: find a set $S$ of $k$ vertices that have \emph{exactly $d$} neighbors in $S$.
\end{itemize}

Problems related to finding $(\sigma,\rho)$-sets received significant attention both from the complexity viewpoint and for demonstrating the robustness of algorithmic techniques \cite{DBLP:conf/iwpec/JaffkeKST18,DBLP:journals/dam/HalldorssonKT00,DBLP:journals/siamdm/TelleP97,DBLP:journals/algorithmica/FominGKKL11,DBLP:journals/corr/abs-1004-2642,DBLP:journals/ipl/FominGKKL09,DBLP:journals/dam/GolovachKS12,DBLP:journals/tcs/Bui-XuanTV13,DBLP:journals/tcs/Bui-XuanTV11,Rooij20,DBLP:conf/csr/Rooij21}.
Some authors call these types of problems \emph{locally checkable vertex subset problems} (LC-VSP) \cite{DBLP:journals/tcs/Bui-XuanTV13}.

For the case when each of $\sigma$ and $\rho$ is finite or cofinite, \citet{Rooij20} presented a general technique for finding $(\sigma,\rho)$-sets on graphs of bounded treewidth.
For a set $\sigma$ of finite or cofinite integers, we write $\sigMax$ to denote the maximum element of $\sigma$ if $\sigma$ is finite, and the maximum missing integer plus one if $\sigma$ is cofinite (for $\sigma = \NN$ we set $\sigMax = 0$); $\rhoMax$ is defined analogously based on $\rho$.
When one tries to solve a problem in time $f(\tw)\cdot n^{\Oh(1)}$ on a graph with a given tree decomposition of width $\tw$, and the goal is to make the function $f(\tw)$ as slowly growing as possible, then typically there are two main bottlenecks: the number of subproblems in the dynamic programming, and the efficient handling of join nodes.
It was observed by \citet{Rooij20} that, for the problem of finding a $(\sigma,\rho)$-set, each vertex in a partial solution has essentially $\sigMax+\rhoMax+2$ states.
For example, if a vertex is unselected in a partial solution and $\rho$ is finite, then we need to distinguish between having exactly $0$,~$1$,~$\dots$,~$\rhoMax$ neighbors in the partial solution, yielding $\rhoMax+1$ possibilities.
If $\rho$ is cofinite, then we need to distinguish between having exactly $0$, $1$, $\dots$, $\rhoMax-1$, or at least $\rhoMax$ neighbors (again $\rhoMax+1$ possibilities). In a similar way, a selected vertex has $\sigMax+1$ different states, giving a total number of $\sigMax+\rhoMax+2$ states for each vertex.
This suggests that we need to consider about $(\sigMax+\rhoMax+2)^\tw$
different subproblems at each node of the tree decomposition. Furthermore, \citet{Rooij20} showed that all these subproblems can be solved in time $(\sigMax+\rhoMax+2)^\tw\cdot n^{\Oh(1)}$ by using a fast generalized convolution algorithm in each step.
The algorithm can be extended to require a specific size for the set $S$, thus, allowing us to solve minimization/maximization problems or to count the number of solutions.

\begin{theorem}[\citet{Rooij20}]
 \label{thm:vanrooij-intro}
 Let $\sigma$ and $\rho$ be two finite or cofinite sets. Given a graph $G$ with a tree decomposition of width $\tw$ and an integer $k$, we can count the number of $(\sigma,\rho)$-sets of size exactly $k$ in time $(\sigMax+\rhoMax+2)^\tw\cdot n^{\Oh(1)}$.
\end{theorem}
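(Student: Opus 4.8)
The plan is to solve the problem by a dynamic program over a tree decomposition, with the only non-routine ingredient being a fast convolution at the join nodes. First I would transform, in polynomial time and without increasing the width, the given tree decomposition into a \emph{nice} one with leaf, introduce-vertex, introduce-edge, forget, and join nodes, under the standard convention that each edge is introduced exactly once and that all edges incident to a vertex are introduced before that vertex is forgotten; write $G_t$ and $B_t$ for the graph built up at node $t$ and its bag.

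For a vertex $v\in B_t$ I would maintain one of $\sigMax+\rhoMax+2$ \emph{states}: if $v$ is tentatively in the solution $S$, one of $\sigMax+1$ states recording $\min\{\,\abs{N(v)\cap S},\sigMax\,\}$ taken over the edges introduced so far, where the top state is absorbing and means ``${}\ge\sigMax$'' when $\sigma$ is cofinite, and means ``exactly $\sigMax$, so any further selected neighbour kills the partial solution'' when $\sigma$ is finite; symmetrically, one of $\rhoMax+1$ states when $v$ is tentatively outside $S$, with respect to $\rhoMax$ and $\rho$. The table $A_t$ is indexed by a state assignment $\phi$ to $B_t$ together with an integer $k'\in\{0,\dots,n\}$, and $A_t[\phi,k']$ counts the sets $S\subseteq V(G_t)$ with $\abs{S}=k'$ that agree with $\phi$ on $B_t$ and already satisfy the $(\sigma,\rho)$-requirement at every vertex of $V(G_t)\setminus B_t$; there are at most $(\sigMax+\rhoMax+2)^{\tw+1}$ state assignments per bag.

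The transitions at leaf, introduce-vertex, introduce-edge, and forget nodes are routine local updates. Introduce-vertex branches $v$ into ``selected'' (incrementing $k'$) and ``unselected'', both with neighbour-count $0$; introduce-edge $uv$ increments the stored count of $u$ whenever $v$ is selected and of $v$ whenever $u$ is selected, sending the entry to $0$ (finite case) or to the absorbing state (cofinite case) on overflow; forget-$v$ sums over the states of $v$, keeping only those whose now-final count at $v$ lies in $\sigma$ (if $v$ is selected) or in $\rho$ (otherwise). Each of these costs $(\sigMax+\rhoMax+2)^{\tw+1}\cdot n^{\Oh(1)}$; the answer is $A_r[\emptyset,k]$ at the root $r$ (whose bag is empty), and correctness follows by induction over the decomposition.

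The crux is the join node $t$ with children $t_1,t_2$, where $B_{t_1}=B_{t_2}=B_t$, the edge sets of $G_{t_1}$ and $G_{t_2}$ are disjoint, and $V(G_{t_1})\cap V(G_{t_2})=B_t$. A solution $S$ of $G_t$ splits uniquely into $S_1=S\cap V(G_{t_1})$ and $S_2=S\cap V(G_{t_2})$, which share exactly $S\cap B_t$ and induce the same in/out pattern on $B_t$; since the two edge sets are disjoint, the final count at each $v\in B_t$ is the capped sum of its counts in $S_1$ and in $S_2$. Hence, fixing an in/out pattern $\chi$ on $B_t$, letting $s_\chi=\abs{\chi^{-1}(\text{selected})}$, and writing $f,f_1,f_2$ for count vectors,
\[
  A_t[\chi,f,k']\;=\;\sum_{\substack{f_1\oplus f_2=f\\ k'_1+k'_2=k'+s_\chi}}A_{t_1}[\chi,f_1,k'_1]\cdot A_{t_2}[\chi,f_2,k'_2],
\]
where $f_1\oplus f_2$ is coordinatewise capped addition. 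Evaluating this directly costs $(\sigMax+\rhoMax+2)^{2\tw}$, and even a plain multidimensional FFT only yields about $(2\sigMax+2\rhoMax+2)^{\tw}$ (e.g.\ $4^{\tw}$ for \textsc{Dominating Set}); reaching the stated bound requires computing this multidimensional ``capped convolution'' with the fast generalized convolution of \citet{Rooij20} --- the analogue, for the chain-monoid of states arising here, of fast subset convolution --- whose cost, summed over all patterns $\chi$ via the binomial theorem, adds up to $(\sigMax+\rhoMax+2)^{\tw+1}\cdot\tw^{\Oh(1)}\cdot n^{\Oh(1)}$. This convolution is precisely the step where a naive approach fails, so it is the main obstacle; tracking the size parameter $k'$ only adds polynomial factors throughout.
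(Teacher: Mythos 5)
Your proposal is correct and is essentially the standard van~Rooij approach that the paper cites for this theorem (the paper gives no proof of its own, since the result is attributed). You correctly identify the $\sigMax+\rhoMax+2$ states per bag vertex, the split of semantics between finite and cofinite sets at the capped state, the size-tracking coordinate $k'$, the binomial-theorem accounting over in/out patterns $\chi$ that yields $(\sigMax+\rhoMax+2)^{\tw+1}$ rather than $(\max\{\sigMax,\rhoMax\}+1)^{\tw+1}\cdot 2^{\tw+1}$, and --- the crux --- the need for a fast generalized convolution at join nodes, which you appropriately defer to van~Rooij since it is the very machinery being cited (and which the paper later invokes as Fact~\ref{thm:alg-convolution}). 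The only nit is that a complete argument would have to spell out how the capped/overflowing convolution is reduced to the cyclic-convolution primitive --- this is the nontrivial technical content of van~Rooij's lemma --- but for a cited theorem it is reasonable to black-box it, and you correctly flag it as the one non-routine step.
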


No better algorithms were known for any pair of finite or cofinite sets $(\sigma,\rho)$ for any of the following variants: decision, minimization/maximization, and counting (ignoring problems for which polynomial-time algorithms are known\footnote{
For example, for the \textsc{Dominating Set} problem, the entire vertex set is always a solution, and hence, both the decision and maximization versions are trivial.
Further polynomial-time solvable cases are listed in \cite[Table~1]{Telle94}.}).

Is the upper bound in \cref{thm:vanrooij-intro} optimal for every pair $(\sigma,\rho)$?
As the main result of this work, we show that there are infinitely many pairs $(\sigma,\rho)$ for which $(\sigMax+\rhoMax+2)^\tw$ overstates the number of possible subproblems that we need to consider at each step of the dynamic programming algorithm.
Together with efficient convolution techniques that we develop for this problem, it follows that there are pairs $(\sigma,\rho)$ for which the  $(\sigMax+\rhoMax+2)^\tw\cdot n^{\Oh(1)}$ algorithm is not optimal and can be improved.

To be more specific, we say that $(\sigma, \rho)$ is \emph{$\mname$-structured} if there is a pair $(\alpha,\beta)$ such that $s \equiv \alpha \pmod \mname$ for every $s \in \sigma$, and $r \equiv \beta \pmod \mname$ for every $r \in \rho$.
For example, the pairs $(\{0,3\},\{3\})$ and $(\{0,3\},\{1,4\})$ are both $3$-structured, but the pair
$(\{0,3\},\{3,4\})$ is not $\mname$-structured for any $\mname\ge 2$. Notice that if a set is cofinite, then it cannot be $\mname$-structured for any $\mname\ge 2$. Furthermore, if $|\sigma|=|\rho|=1$, then $(\sigma,\rho)$ is $\mname$-structured for every $\mname$.
\begin{definition}
  \label{def:intro:baseOfRunningTime}
Let $\sigma$ and $\rho$ be two finite or cofinite sets of non-negative integers. We define
$c_{\sigma,\rho}$ by setting
  \begin{itemize}
  	\item $c_{\sigma,\rho} \coloneqq \sigMax+\rhoMax+2$ if $(\sigma,\rho)$ is not $\mname$-structured for any $\mname\ge 2$,
  	\item $c_{\sigma,\rho} \coloneqq \max\{\sigMax,\rhoMax\}+2$ if $(\sigma,\rho)$ is 2-structured, but not $\mname$-structured for any $\mname\ge 3$, and $\sigMax=\rhoMax$ is even, and
  	\item $c_{\sigma,\rho} \coloneqq \max\{\sigMax,\rhoMax\}+1$, otherwise.
  \end{itemize}
\end{definition}
Note that $c_{\sigma,\rho} = \max\{\sigMax,\rhoMax\}+1$ applies if $(\sigma,\rho)$ is $\mname$-structured for some $\mname\ge 3$, or 2-structured with $\sigMax\neq \rhoMax$,  or 2-structured with $\sigMax=\rhoMax$ being odd.
For example, $c_{\{0,3\},\{3\}}=4$, $c_{\{0,3\},\{1,4\}}=5$, $c_{\{1,3\},\{4\}}=5$, and $c_{\{2,4\},\{4\}}=6$.  Our main observation is that we need to consider only roughly $(c_{\sigma,\rho})^\tw$ subproblems at each step of the dynamic programming algorithm: if $(\sigma,\rho)$ is $\mname$-structured, then parity/linear algebra type of arguments show that many of the subproblems cannot be solved, and hence, need not be considered in the dynamic programming.

\begin{theorem}\label{thm:alg-main-intro}
Let $\sigma$ and $\rho$ denote two finite or cofinite sets. Given a graph $G$ with a tree decomposition of width $\tw$ and an integer $k$, we can count the number of $(\sigma,\rho)$-sets of size exactly $k$ in time $(c_{\sigma,\rho})^\tw\cdot n^{\Oh(1)}$.
\end{theorem}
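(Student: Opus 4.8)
Here the plan is a dynamic-programming algorithm over a nice tree decomposition, where the only real novelty over \cref{thm:vanrooij-intro} is a structural bound on how many subproblems are \emph{nonempty} when $(\sigma,\rho)$ is $\mname$-structured, together with a fast join tailored to the resulting, smaller state space.

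\emph{Set-up and the unstructured case.} First I would turn the given tree decomposition into a nice one (introduce-vertex, introduce-edge, forget, join) of the same width in polynomial time, and reformulate the local $(\sigma,\rho)$-constraints so that the constraint of a vertex is checked at the moment it is forgotten. For a partial $(\sigma,\rho)$-set on the subgraph $G_t$ processed so far, each bag vertex $v$ lies in one of the states in $\sigStates\cup\rhoStates$, where $\sigStates=\{\sigma_0,\dots,\sigma_{\sigMax}\}$ records that $v$ is selected and has so far $i$ selected neighbours, capped --- the top state meaning ``$\ge\sigMax$'' if $\sigma$ is cofinite --- and $\rhoStates=\{\rho_0,\dots,\rho_{\rhoMax}\}$ is the analogue for unselected $v$; so $|\sigStates\cup\rhoStates|=\sigMax+\rhoMax+2$. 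The table at a node maps a pair (state vector on the bag, value of $|S|$) to the number of partial $(\sigma,\rho)$-sets realising it; introduce-vertex/-edge and forget are immediate, and the join node is, coordinate-wise, the addition of partial counts, i.e.\ a generalized convolution over $\sigStates\cup\rhoStates$. If $(\sigma,\rho)$ is not $\mname$-structured for any $\mname\ge2$ then $c_{\sigma,\rho}=\sigMax+\rhoMax+2$ and invoking van~Rooij's convolution (\cref{thm:vanrooij-intro}) already proves the statement; so assume from now on that $(\sigma,\rho)$ is $\mname$-structured with residues $(\alpha,\beta)$ for some $\mname\ge2$ --- hence $\sigma$ and $\rho$ are both finite.

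\emph{The structural bound.} The key point is that in \emph{any} $(\sigma,\rho)$-set $S$ and for \emph{every} vertex $v$,
\[
 |N(v)\cap S|\;\equiv\;\alpha\cdot[\,v\in S\,]+\beta\cdot[\,v\notin S\,]\pmod{\mname},
\]
because $|N(v)\cap S|=\deg_{G[S]}(v)$ when $v\in S$. Since the partial counts maintained by the dynamic program are exactly the restrictions of the numbers $|N(v)\cap S|$ to $G_t$, combining these congruences over a bag (after a constant amount of global modular bookkeeping, which only costs a constant factor) yields linear relations over $\ZZ_\mname$ that every realisable state vector must satisfy; equivalently, state vectors violating them correspond to subproblems with no solution and can be dropped. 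A parity / linear-algebra analysis of these relations --- splitting into the cases ``$\mname$-structured for some $\mname\ge3$'', ``only $2$-structured with $\sigMax\neq\rhoMax$ or with $\sigMax=\rhoMax$ odd'', and ``only $2$-structured with $\sigMax=\rhoMax$ even'' --- shows that in each case there is a set of at most $c_{\sigma,\rho}^{\,\tw}\cdot n^{\Oh(1)}$ state vectors that contains every realisable one, matching the three cases of \cref{def:intro:baseOfRunningTime}; the delicate ``$+2$ versus $+1$'' for $2$-structured pairs with $\sigMax=\rhoMax$ comes precisely from whether a ``diagonal'' family of states can still be realised under the parity constraint.

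\emph{The join and conclusion.} Restricting each table to these $c_{\sigma,\rho}^{\,\tw}\cdot n^{\Oh(1)}$ entries is not yet enough: the join step must be carried out within the same time bound, whereas a naive convolution over $\sigStates\cup\rhoStates$ would be more expensive. I would handle this by developing a fast convolution adapted to the $\mname$-structured state space, extending van~Rooij's generalized convolution and the standard zeta/M\"obius-transform technique and using the modular structure to block-diagonalise the transform so that each block is convolved independently within the target budget; tracking $|S|$ costs only a further $n^{\Oh(1)}$ factor, and correctness follows by the usual bottom-up induction over the tree decomposition. I expect the two genuine difficulties to be (i) pinning the count of realisable state vectors to exactly $c_{\sigma,\rho}^{\,\tw}$ --- in particular getting the even/odd dichotomy for $2$-structured pairs right --- and (ii) designing a convolution that actually attains base $c_{\sigma,\rho}$, since the surviving state set carries no group structure and off-the-shelf fast-convolution machinery does not directly apply.
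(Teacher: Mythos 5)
Your high-level plan — handle the unstructured case by van Rooij, then for $\mname$-structured $(\sigma,\rho)$ bound the number of realizable state vectors and design a faster join — is the same as the paper's. But the structural step, which is the heart of the theorem, has a genuine gap.

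Your stated structural observation is that in any \emph{full} $(\sigma,\rho)$-set $S$, each vertex $v$ satisfies $|N(v)\cap S|\equiv\alpha[v\in S]+\beta[v\notin S]\pmod{\mname}$. That is true, but it is essentially the definition of being $\mname$-structured and does not constrain the \emph{partial} degrees of bag vertices, since those vertices may receive further selected neighbors from outside $V_t$. You then assert that ``combining these congruences over a bag (after a constant amount of global modular bookkeeping) yields linear relations over $\ZZ_\mname$ that every realisable state vector must satisfy.'' This is the step that does not work: there is no fixed set of linear relations over $\ZZ_\mname$ satisfied by all realizable state vectors. For example, for $\sigma=\{0\},\rho=\{1\}$ with $\mname=2$ and a suitable subgraph, the whole language $\rhoStates^n$ can be realizable; for other subgraphs the realizable language is (close to) $\{v:\degvec{v}\position{\ell}\equiv_2 0\ \forall\ell\}$. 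These two languages are not cut out by a common system of linear constraints. Summing your congruence over the interior of $V_t$ does give one constraint, but its right-hand side involves $|E(G[S\cap V_t])|$ and therefore varies with $S$ — it is not a constraint on the state vector alone.

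What the paper actually proves (Lemma 3.3 / $\CR^n$ in Eq.~(3.1)) is a \emph{pairwise} orthogonality relation between any two realizable state vectors $x,y$ whose witnesses satisfy $|S_x\setminus X_t|\equiv_\mname|S_y\setminus X_t|$, namely $\sigvec{x}\cdot\mdegvec{y}\equiv_\mname\sigvec{y}\cdot\mdegvec{x}$; the proof counts edges between the two partial solutions $S_x$ and $S_y$. From this pairwise relation, together with the notion of a $\sigma$-defining set (Definition 3.8 and Lemma 3.10), the paper derives $|L|\le c_{\sigma,\rho}^{\,|X_t|}$. The pairwise relation is essential: it encodes how the set of $\sigma$-vectors in $L$ is ``orthogonal'' to the set of $\mname$-weight vectors, and this trades off (via the size of the $\sigma$-defining set) rather than producing a fixed linear subspace. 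This trade-off is also exactly what makes the join step delicate — because the relevant decomposition into a $\sigma$-defining part and its complement depends on the realized language, the paper has to design a compression $\comp{o}{\cdot}$ (with an origin $o$ and explicit overflow checksums) so that compressed vectors add correctly under joins, rather than block-diagonalise a fixed transform as you suggest.

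In short: your per-vertex congruence is too weak to yield the required bound, and the ``linear relations that every realisable state vector must satisfy'' do not exist in the form you describe. You need the pairwise orthogonality relation between partial solutions (and the $\sigma$-defining-set argument to turn it into a cardinality bound), and the join algorithm has to be built around that data.
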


In particular, as a notable example, for \textsc{Exact Independent Dominating Set}
(that is, $\sigma=\{0\}$, $\rho=\{1\}$), we have $\sigMax+\rhoMax+2=3$,
while $c_{\sigma,\rho}=2$ as $(\sigma,\rho)$ is 3-structured.
Therefore, \cref{thm:vanrooij-intro} gives a $3^\tw\cdot n^{\Oh(1)}$ time algorithm,
which we improve to $2^\tw\cdot n^{\Oh(1)}$ by \cref{thm:alg-main-intro}.
This shows that despite the decades-long interest
in algorithms for bounded-treewidth graphs,
there were new algorithmic ideas to discover
even for the \emph{most basic}
of the non-trivial $(\sigma,\rho)$-set problems.

The improvement from $\sigMax+\rhoMax+2$ to $c_{\sigma,\rho}=\max\{\sigMax,\rhoMax\}+1$ in the base of the exponent can be significant.
However, it is fair to say that the improvements of \cref{thm:alg-main-intro} over \cref{thm:vanrooij-intro} apply in somewhat exceptional cases (and there is even an exception to the exception where the improvement is only to $c_{\sigma,\rho}=\max\{\sigMax,\rhoMax\}+2$).
From the list of problems listed at the beginning of \cref{sec:intro}, algorithmic improvements are obtained only for the \textsc{Perfect Code} problem.
One may suspect that further improvements are possible, possibly leading to improvements that can be described in a more uniform way.
However, an accompanying paper \cite{FockeMMNSSW23ii} shows that the delicate nature of this improvement
is not a shortcoming of our algorithmic techniques,
but inherent to the problem: for the counting version,
$c_{\sigma,\rho}$ \emph{precisely} characterizes the best possible base of the exponent.
For the following lower bound, pairs $(\sigma,\rho)$
where the problem is trivially solvable need to be excluded.
A pair $(\sigma,\rho)$ is \emph{non-trivial}
if $\rho\neq \{0\}$, and $(\sigma,\rho)\neq (\{0,1,\ldots\},\{0,1,\ldots\})$.

\begin{theorem}[\cite{FockeMMNSSW23ii}]
    \label{thm:lower-main-intro}
	Let $(\sigma,\rho)$ denote a non-trivial pair of finite or cofinite sets.
	If there is an $\epsilon>0$ and an algorithm that
	counts in time $(c_{\sigma,\rho}-\epsilon)^\pw\cdot n^{\Oh(1)}$
	the number of $(\sigma,\rho)$-sets in a given graph
	with a given path decomposition of width $\pw$,
	then the Counting Strong Exponential Time Hypothesis (\#SETH) fails.
\end{theorem}

The algorithm of \cref{thm:alg-main-intro} achieves its running time by considering
roughly $(c_{\sigma,\rho})^\tw$ subproblems at each node of the tree decomposition. The
lower bound of \cref{thm:lower-main-intro} can be interpreted as showing that at least
that many subproblems really need to be considered by any algorithm that solves the counting problem, even when restricted to graphs of bounded pathwidth.
Does this remain true for the decision version as well?
For finite $\sigma$ and $\rho$ this is indeed the case as shown in \cite{FockeMMNSSW23ii}.

\begin{theorem}[\cite{FockeMMNSSW23ii}]
    \label{thm:lower-main-intro-decision}
	Let $(\sigma,\rho)$ be a pair of finite sets such that $0 \notin \rho$.
	If there is an $\epsilon>0$ and an algorithm that decides in time
	$(c_{\sigma,\rho}-\epsilon)^\pw\cdot n^{\Oh(1)}$
	whether there is a $(\sigma,\rho)$-set in a given graph
	with a given path decomposition of width $\pw$,
	then the Strong Exponential Time Hypothesis (SETH) fails.
\end{theorem}

Note that pairs $(\sigma,\rho)$ with $0 \in \rho$ have to be excluded since the empty set is a $(\sigma,\rho)$-set of any graph in this case.

Intriguingly, we show that the lower bounds from \cref{thm:lower-main-intro-decision} do not apply if $\sigma$ or $\rho$ is cofinite.
Indeed, it turns out that the technique
of {\em representative
sets}~\cite{FominLPS17,FominLPS16,Monien85,DBLP:conf/wg/PlehnV90,Marx09,DBLP:journals/iandc/BodlaenderCKN15}
can be used to significantly reduce the number of subproblems that need to be considered
at each node. The main idea is that we do not need to solve every subproblem, but rather,
we need only a small representative set of partial solutions with the property that if
there is a solution to the whole instance, then there is one that extends a  partial
solution in our representative set. This idea becomes relevant for example when $\sigma$
or $\rho$ is cofinite with only a few missing integers: then we do not need a collection
with every possible type of partial solution, but rather, we need only a collection of
partial solutions that can avoid a small list of forbidden degrees at every vertex. We
formalize this idea by presenting an algorithm where the base of the exponent does not depend on the largest missing integer in the cofinite set, but depends only on the number of missing integers.
We write $\emptyset \neq \tau \subseteq \NN$ for a finite or cofinite set.
If $\tau$ is finite, then we define $\cost(\tau) \coloneqq \max(\tau)$.
Otherwise, $\tau$ is cofinite and we define $\cost(\tau) \coloneqq |\NN \setminus \tau|$.
Further, let $\omega$ denote the matrix multiplication exponent \cite{AlmanW21}.

\begin{theorem}
 \label{thm:dp-representative-set-intro}
 Suppose $\sigma, \rho \subseteq \NN$ are finite or cofinite.
 Also, set $\allCost \coloneqq \max\{\cost(\sigma),\cost(\rho)\}$.
 Given a graph $G$ with a tree decomposition of width $\tw$, we can decide whether $G$ has a $(\sigma,\rho)$-set in time
 \[2^\tw \cdot (\allCost + 1)^{\tw(\omega + 1)} \cdot (\allCost + \tw)^{\Oh(1)} \cdot n.\]
\end{theorem}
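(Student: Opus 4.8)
The plan is to run a dynamic program over a nice tree decomposition, but to keep at every node only a \emph{representative family} of partial solutions, of size controlled by $\allCost=\max\{\cost(\sigma),\cost(\rho)\}$ rather than by $\sigMax$ and $\rhoMax$. The engine is a polynomial encoding: for a cofinite set $\tau$ with complement $M_\tau$ (so $|M_\tau|=\cost(\tau)$) put $P_\tau(x):=\prod_{m\in M_\tau}(x-m)$, a polynomial of degree $\cost(\tau)$; over a prime field $\FF$ of order slightly larger than $n$ the constraint ``$|N(v)\cap S|\in\tau$'' is equivalent to ``$P_\tau(|N(v)\cap S|)\neq 0$''. Fix a node $t$ with bag $B_t$ and a selection pattern $X\subseteq B_t$. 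To a partial solution $P$ compatible with $X$ associate $q(P):=\bigotimes_{v\in B_t}q_v(P)$, where $q_v(P)\in\FF^{\cost(\sigma)+1}$ if $v\in X$ is the coefficient vector, in a formal ``future contribution'' variable $Y_v$, of $P_\sigma(d_v^P+Y_v)$, and $q_v(P)\in\FF^{\cost(\rho)+1}$ if $v\notin X$ is that of $P_\rho(d_v^P+Y_v)$, with $d_v^P$ the current number of selected neighbours of $v$ (for a \emph{finite} side $\tau$, $q_v(P)$ is instead the unit vector $e_{d_v^P}\in\FF^{\cost(\tau)+1}$, and $P$ is dropped once a degree exceeds $\sigMax$ or $\rhoMax$). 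Since $\langle\bigotimes a_v,\bigotimes b_v\rangle=\prod_v\langle a_v,b_v\rangle$ and, for the vector $z_v=(1,y_v,y_v^2,\dots)$ recording a future contribution $y_v$, the pairing $\langle q_v(P),z_v\rangle$ is (up to a nonzero scalar) the indicator that $v$'s constraint is met once $v$ gets $y_v$ more selected neighbours---concretely $P_\tau(d_v^P+y_v)$ in the cofinite case---we get: for any completion $C$ of the instance below $t$, $\langle q(P),z(C)\rangle\neq 0$ \emph{if and only if} $P$ extends through $C$ to a $(\sigma,\rho)$-set. Hence a subfamily is representative precisely when its $q$-vectors span the same space as all partial solutions with pattern $X$, so a basis suffices, and it has size at most $\dim\bigotimes_{v\in B_t}\FF^{\cost(\cdot)+1}\le(\allCost+1)^{\tw+1}$; over the $\le 2^{\tw+1}$ patterns, a node's table has size $\le 2^{\tw+1}(\allCost+1)^{\tw+1}$.

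I would then check that such a basis can be maintained because every transition acts on the $q$-vectors by a \emph{fixed} linear or bilinear map. Introducing a vertex tensors each vector with a fixed vector (the pattern splits according to whether the new vertex is selected). Introducing an edge $uv$ applies the fixed ``shift by one'' linear map to the $u$-factor (and, if $u$ is selected, to the $v$-factor), which on a finite side sends the top unit vector to $0$, implementing the ``degree exceeded'' discard. Forgetting a vertex $v$ contracts its factor against a fixed vector---$e_0$ in the cofinite case---so the factor collapses to $P_\tau(d_v^P)$, which is nonzero exactly when $v$'s now-final constraint holds; thus one linear map both removes $v$ and annihilates the partial solutions violating the constraint. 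At a join node the two children's vectors combine by a fixed bilinear map: the coefficient vector of $P_\tau(a+Y)$ is an invertible linear image of $(1,a,\dots,a^{\cost(\tau)})$, so passing from $(a,b)$ to $a+b$ (the addition of degrees, after the routine bookkeeping preventing double-counting of edges inside the bag) is bilinear in the two coefficient vectors, and on a finite side this is just truncated monomial multiplication $e_a\otimes e_b\mapsto e_{a+b}$. Because a (bi)linear image of bases spans the image of the old span(s), applying a transition to a representative basis and recomputing a basis by Gaussian elimination preserves the invariant. At the root the bag is empty, the $q$-space is one-dimensional, and a partial solution is exactly a $(\sigma,\rho)$-set, so the instance is a yes-instance iff the root's representative family is nonempty.

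For the running time, all non-join transitions only push $\le(\allCost+1)^{\tw+1}$ vectors through fixed maps and re-basis, costing $(\allCost+\tw)^{\Oh(1)}\cdot(\allCost+1)^{\Oh(\tw)}$. At a join node and for each of the $\le 2^{\tw+1}$ patterns, form all $\le\bigl((\allCost+1)^{\tw+1}\bigr)^2$ combined vectors, which live in a space of dimension $\le(\allCost+1)^{\tw+1}$, and extract a row basis; a row basis of an $r\times c$ matrix with $r\ge c$ costs $\Oh(r\,c^{\omega-1})$ field operations, i.e. $\Oh\bigl((\allCost+1)^{(\tw+1)(\omega+1)}\bigr)$ here, and each field operation is $\Oh(1)$ on a word RAM since $|\FF|=\Oh(n)$. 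Multiplying by the $\le 2^{\tw+1}$ patterns and the $\Oh(\tw\cdot n)$ nodes of the decomposition gives a total of $2^\tw\cdot(\allCost+1)^{\tw(\omega+1)}\cdot(\allCost+\tw)^{\Oh(1)}\cdot n$, as claimed.

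The crux---and the step I expect to be most delicate---combines two things. First, faithfulness of the encoding: one must verify that collapsing each vertex's state to the $\cost(\tau)+1$ coordinates of $P_\tau(\,\cdot\,+Y_v)$, rather than the $\sigMax+1$ or $\rhoMax+1$ coordinates of an exact degree, loses no information, i.e. that $\langle q(P),z(C)\rangle\neq0$ is genuinely equivalent to ``$P$ completes to a $(\sigma,\rho)$-set via $C$'', uniformly across mixed finite/cofinite $(\sigma,\rho)$ and for a suitable field. Second, checking that \emph{all} transitions---the join in particular---are (bi)linear in the $q$-vectors, since this is exactly what lets a basis of size $(\allCost+1)^{\Oh(\tw)}$ propagate instead of the full $(\sigMax+\rhoMax+2)^{\tw}$ table of \cref{thm:vanrooij-intro}; getting the edge-accounting at joins right and phrasing ``forget'' as ``contract-then-implicitly-discard'' are the fiddly parts of this second point.
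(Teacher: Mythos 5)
Your proposal is correct and reaches the paper's bound by the same high-level strategy (representative families of partial solutions of size $(\allCost+1)^{\tw+1}$ per $\sigma$-vector, pruned by linear algebra, with the join dominated by a rectangular-matrix basis extraction costing $(\allCost+1)^{\tw(\omega+1)}$), but your route to the size bound is more self-contained than the paper's. The paper phrases representativeness via the abstract ``compatibility graph for forbidden/positive sets'' $\CC(F_1,\dots,F_k;P_1,\dots,P_\ell)$ and \emph{cites} MarxSS21b (Lemmas 3.3 \& 3.6) for both the $(t+1)^k$ size bound and the $O(|\CS|(t+1)^{k(\omega-1)}k)$ pruning time; it also splits cofinite positions (handled by the cited lemma) from finite positions (handled by an explicit partition into $(t+1)^\ell$ equivalence classes, Lemma 5.4 in the paper). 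You instead build the linear-algebra machinery from scratch: the cofinite constraint $|N(v)\cap S|\in\tau$ becomes $P_\tau(\cdot)\neq 0$ for $P_\tau(x)=\prod_{m\notin\tau}(x-m)$ of degree $\cost(\tau)$, each bag vertex carries the coefficient vector of $P_\tau(d_v+Y_v)$ (or a unit vector $e_{d_v}$ on the finite side), the state is the tensor product, and ``representative'' literally means ``spanning the same subspace,'' so a basis suffices. This unifies finite and cofinite positions in one encoding and makes explicit why all DP transitions preserve bases (tensoring with a fixed vector at introduce, a fixed shift at edge-introduce, contraction against a fixed covector at forget, a fixed bilinear map at join via the moment-vector change of basis $(1,a,\dots,a^{\cost(\tau)})\leftrightarrow$ coefficients of $P_\tau(a+Y)$). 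The two proofs thus rest on the same mathematics; yours exposes the mechanism that the paper leaves encapsulated in the cited lemmas, at the cost of having to verify invertibility of the moment-to-coefficient change of basis (fine for a prime $p>\max\{\allCost,n\}$) and the bilinearity of the degree-addition at joins. One small bookkeeping remark: you implicitly use a ``very nice'' decomposition with separate edge-introduce nodes, whereas the paper's introduce node processes a vertex together with all its bag-edges; this changes nothing up to constants.
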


While the algorithm of \cref{thm:dp-representative-set-intro} can be significantly more efficient than
the algorithm of \cref{thm:alg-main-intro}, it is unlikely to be tight in general, and it remains highly unclear what the best possible running time should be. For a tight result, one would need to overcome at least two major challenges: proving tight upper bounds on the size of representative sets, and understanding whether they can be handled without using matrix-multiplication based methods.

\subsection{Organization of this Article}

We give a general overview of ideas and proof techniques in \cref{sec:technical-overview}.
Then, after the introduction of some necessary preliminary definitions in \cref{sec:prelims}, we prove our algorithmic results.
In \cref{sec:alg-structured}, we show \cref{thm:alg-main-intro}, that is, how to count $(\sigma, \rho)$-sets in time $(c_{\sigma,\rho})^\tw\cdot n^{\Oh(1)}$. In \cref{sec:representative-set}, we give our representative set based algorithm for deciding whether a $(\sigma, \rho)$-set exists in order to prove \cref{thm:dp-representative-set-intro}.

\section{Technical Overview}
\label{sec:technical-overview}

We give an overview of the techniques used to prove our algorithmic results.

\subsection{Structured Sets}\label{sec:intro_structuredsets}

First, let us turn to \cref{thm:alg-main-intro}.
With \cref{thm:vanrooij-intro} in mind, it suffices to consider the case where $(\sigma,\rho)$ is $\mname$-structured for some $\mname \geq 2$.

As pointed out earlier, our algorithms are based on the ``dynamic programming on tree
decompositions'' paradigm. Hence, let us first briefly recall the definition of tree
decompositions (see \cref{sec:prelims} for more details).
A \emph{tree decomposition} of a graph $G$ consists of a rooted tree $T$ and a bag $X_t$ for
every node $t$ of $T$ with the following properties:
\begin{enumerate}
    \item every vertex $v$ of $G$ appears in at least one bag,
    \item for every vertex $v$ of $G$, the bags containing $v$ correspond to a connected
        subtree of $T$, and
    \item  if two vertices of $G$ are adjacent, then there is at least one bag containing both of them.
\end{enumerate}
The width of a tree decomposition is the size of the largest bag minus one, and the
treewidth of a graph \(G\) is the smallest possible width of a tree decomposition of \(G\).
For a node $t$ of $T$, we write $V_t$ for the union of all bags $X_{t'}$ where $t'$ is a descendant of $t$ (including $t$ itself).

Let us also recall the most common structure of (dynamic programming) algorithms on tree
decompositions (of width \(\tw\)). Typically, we define suitable subproblems for each node
$t$ of the decomposition, and then solve them in a bottom-up way.
In particular, we construct partial solutions that we aim to extend into full solutions
while moving up the tree decomposition.
In order to quickly identify which partial solutions can be extended to full solutions, we
classify them into a (limited) number of types: if two partial solutions have the same type and one has an
extension into a full solution, then the same extension would work for the other solution
as well.
Now, the subproblems at node $t$ correspond to computing which types of partial solutions are possible.
Finally, we need to argue that if we have solved every subproblem for every child $t'$ of
$t$, then the subproblems at $t$ can be solved efficiently as well.

For a more detailed description of how we implement said general approach, first suppose
for simplicity that both $\sigma$ and $\rho$ are finite (in fact, this is always the case
when $(\sigma,\rho)$ is $\mname$-structured for some $\mname \geq 2$; we are assuming throughout $\sigma,\rho$ to be finite or cofinite sets).
Now, in our case, a partial solution at a node $t$ is a set $S \subseteq V_t$ such that
$|N(u)\cap S|\in \sigma$ for every $u \in S \setminus X_t$, and $|N(v)\cap S|\in \rho$ for
every $v \in V_t \setminus (S \cup X_t)$, that is, all vertices of $V_t$ outside of $X_t$ satisfy the
$\sigma$-constraints and $\rho$-constraints, but the vertices in \(X_t\) may not.
In particular, it may happen that a vertex \(v \in X_t \setminus S\) does not yet have
a correct number of selected neighbors, that is, $|N(v) \cap S| \notin \rho$, since said
vertex may receive additional selected neighbors that lie outside of $V_t$.

Now,  two partial solutions $S_1, S_2 \subseteq V_t$ have the same type if
\begin{enumerate}
    \item $X_t \cap S_1 = X_t \cap S_2$, and
    \item $|N(v) \cap S_1| = |N(v) \cap S_2|$ for all $v \in X_t$.
\end{enumerate}
Indeed, in this situation, it is easy to verify that, for any $S' \subseteq V(G) \setminus
V_t$, we have that $S' \cup S_1$ is a $(\sigma,\rho)$-set if and only if $S' \cup S_2$ is a $(\sigma,\rho)$-set.
In other words, the type of a partial solution $S$ is determined by specifying, for each $v \in X_t$,
whether $v \in S$ and how many selected neighbors $v$ has.
Hence, we can describe such a type by a string $y \in \allStates^{X_t}$, where $\allStates =
\{\sigma_0,\dots,\sigma_{\sigMax},\rho_0,\dots,\rho_{\rhoMax}\}$, that associates with
every $v \in X_t$ a \emph{state} $y\position{v} \in \allStates$,
\begin{itemize}
    \item where state $\sigma_i$ means that $v \in S$ and \(v\) has $i$ selected neighbors, and
    \item where state $\rho_j$ means that $v \notin S$ and \(v\) has $j$ selected neighbors.
\end{itemize}
Observe that we can immediately dismiss partial solutions that assign too many selected
neighbors to a vertex (that is, for instance, a state \(\sigma_i\) for \(i > \sigMax\)),
since such partial solutions can never be extended to a full solution (assuming $\sigma$
is finite; for cofinite $\sigma$ all states $\sigma_i$, for $i \geq \sigMax$, are equivalent).
This gives a trivial upper bound of $(\sigMax + \rhoMax + 2)^{\tw+1}$ for the number of
types; and yields essentially the known algorithms.

However, it is highly unclear if said trivial upper bound is tight: is it really possible that for
every string $y \in \allStates^{X_t}$ there is some partial solution $S_y$ that
corresponds to $y$?
In general, this turns out to be indeed the case.
Consider the classical \textsc{Dominating Set} problem (that is, \(\sigma = \{0, 1, 2, \dots\}, \rho = \{1, 2, \dots\}\)) and the following example.
Suppose that for some bag \(X_t\), each of its vertices \(v_i \in X_t\) has a single neighbor \(v_i' \in V_t \setminus X_t\),
and suppose that all vertices \(v_{\star}'\) share a common neighbor \(w \in V_t\setminus X_t\).
Consult \cref{fig:207-intro} for a visualization of this example.

Now, for any string \(y \in \allStates^{X_t} = \{\sigma_0,\rho_0,\rho_1\}^{X_t}\), there is
indeed a partial solution (select \(w\), now any selection of \(v_{\star}'\) or
\(v_{\star}\) is a valid partial solution), that is, each string \(y \in \allStates^{X_t}\)
is indeed \emph{compatible} with \(t\). In particular, there are \((\sigMax + \rhoMax + 2)^{|X_t|}
= (0 + 1 + 2)^{|X_t|} = 3^{|X_t|}\) strings compatible with \(t\); for \(|X_t| = \tw + 1\)
the trivial upper bound on the number of types is thus indeed tight.

\begin{figure}[t]
    \centering
    \includegraphics[scale=1.35]{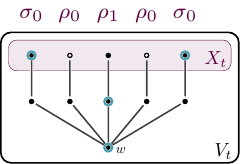}\qquad
    \includegraphics[scale=1.35]{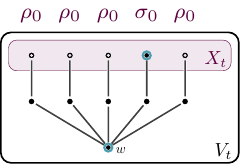}
    \caption{For \textsc{Dominating Set}
        (where \(\sigma = \{0, 1, 2, \dots\}, \rho = \{1, 2, \dots\}\)),
        it is easy to construct an example where any string
         \(y \in \allStates^{X_t} = \{\sigma_0,\rho_0,\rho_1\}^{X_t}\) is compatible
         with \(t\): we depict selected vertices as encircled blue and unselected vertices
         without a selected neighbor as a hollow black circle. Observe that after
         selecting \(w\), any selection of the remaining vertices constitutes a valid
         partial solution.
         Hence, there are \(3^{|X_t|} = (\sigMax + \rhoMax + 2)^{|X_t|}\) compatible
         strings in this case.
    }\label{fig:207-intro}
\end{figure}

In stark contrast to this rather unsatisfactory situation,
we show that for \(\mname\)-structured \((\sigma, \rho)\) where \(\mname \ge 2\), not all \((\sigMax + \rhoMax + 2)^{\tw+1}\) different strings (or types)
can be compatible with \(t\) --- we can then exploit this to obtain \cref{thm:alg-main-intro}.

To upper-bound the number of compatible strings in said case, let us first decompose strings $y \in
\allStates^{X_t}$ into a \emph{$\sigma$-vector} $\sigvec{y} \in \{0,1\}^{X_t}$, defined
via $\sigvec{y}\position{v} = 1$ if $y\position{v} = \sigma_c$ for some $c$ and $\sigvec{y}\position{v} = 0$ otherwise, and a
\emph{weight vector} $\degvec{y} \in \{0,\dots,\max\{\sigMax,\rhoMax\})\}^{X_t}$, defined via
$\degvec{y}\position{v} = c$ if $y\position{v} \in \{\sigma_c,\rho_c\}$.
Now, our main structural insight reads as follows.

\begin{lemma}
    \label{la:hamming-distance-parity-mod-p-intro}
    Suppose $(\sigma,\rho)$ is $\mname$-structured (for \(\mname \ge 2\)).
    Let $y, z \in \allStates^{X_t}$ denote strings that are compatible with $t$ with witnesses
    $S_y, S_z \subseteq V_t$ such that \(|S_{y} \setminus X_t| \equiv_{\mname} |S_{z}
    \setminus X_t|\).
    Then, \(\sigvec{y}\cdot\degvec{z} \equiv_{\mname} \sigvec{z}\cdot\degvec{y}\).
\end{lemma}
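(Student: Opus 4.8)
The plan is to directly compute the number of selected neighbors each bag-vertex has, under both witnesses, and compare the two expressions modulo $\mname$ using the structural constraints. Let $(\alpha,\beta)$ be the pair witnessing that $(\sigma,\rho)$ is $\mname$-structured, so every element of $\sigma$ is $\equiv_\mname \alpha$ and every element of $\rho$ is $\equiv_\mname \beta$. Fix $y,z \in \allStates^{X_t}$ compatible with $t$ via witnesses $S_y, S_z \subseteq V_t$. For a vertex $v \in X_t$, write $N_t(v) \coloneqq N(v) \cap V_t$, and split this neighborhood into the part inside the bag, $N(v) \cap X_t$, and the part outside, $N(v) \cap (V_t \setminus X_t)$. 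The key point is that for $v \in X_t$, the quantity $\degvec{y}\position{v} = |N(v) \cap S_y|$ counts exactly the selected neighbors of $v$ in $V_t$ (since $S_y \subseteq V_t$), and this count splits as a contribution from $X_t$ plus a contribution from $V_t \setminus X_t$.

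**Double counting modulo $\mname$.** First I would handle the contribution from outside the bag. Consider the bipartite adjacency between $X_t$ and $S_y \setminus X_t$; summing $|N(v) \cap (S_y\setminus X_t)|$ over all $v \in X_t$ counts edges, but it is cleaner to weight by $\sigvec{y}\position{v}$ versus $1-\sigvec{y}\position{v}$ and invoke the $(\sigma,\rho)$-constraints on the vertices of $S_y \setminus X_t$. The crucial observation is that every vertex $w \in S_y \setminus X_t$ satisfies the $\sigma$-constraint, so $|N(w) \cap S_y| \equiv_\mname \alpha$; splitting $N(w) \cap S_y$ into its part in $X_t$ and its part outside, and summing appropriately over $w \in S_y \setminus X_t$, lets me express $\sum_{v \in X_t} \sigvec{y}\position{v}\cdot|N(v)\cap(S_y\setminus X_t)|$ in terms of $\alpha$, $|S_y \setminus X_t|$, and an edge-count purely internal to $S_y \setminus X_t$ (which, being a sum over a set, is a symmetric even-ish quantity — actually it is twice the number of edges inside $S_y\setminus X_t$, but we need the mod-$\mname$ bookkeeping to be done carefully since $\mname$ need not be $2$). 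Similarly the $\rho$-constraint handles $v \notin S_y$ but in the form we need we actually want $\degvec{z}\position{v}$ paired with $\sigvec{y}\position{v}$, so the cross term is what matters. The identity I am aiming to extract is something of the shape
\[
  \sigvec{y}\cdot\degvec{z} \;\equiv_\mname\; (\text{symmetric term in }S_y,S_z\text{ restricted to }X_t) \;+\; \alpha\cdot|S_z \setminus X_t| \;+\; (\text{term depending only on }S_y,S_z\text{ inside }V_t\setminus X_t),
\]
and then observe that swapping the roles of $y$ and $z$ replaces $\alpha\cdot|S_z\setminus X_t|$ by $\alpha\cdot|S_y\setminus X_t|$ while leaving the genuinely symmetric terms unchanged; the hypothesis $|S_y\setminus X_t| \equiv_\mname |S_z\setminus X_t|$ then kills the difference.

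**Organizing the terms.** Concretely, I expect the clean route is: write $A \coloneqq S_y \cap X_t$, $B \coloneqq S_z \cap X_t$ (these are exactly the supports of $\sigvec{y},\sigvec{z}$), and $P \coloneqq S_y \setminus X_t$, $Q \coloneqq S_z\setminus X_t$. For $v\in A$, $\degvec{z}\position{v} = |N(v)\cap B| + |N(v)\cap Q|$, so $\sigvec{y}\cdot\degvec{z} = \sum_{v\in A}|N(v)\cap B| + \sum_{v\in A}|N(v)\cap Q|$; the first sum equals $e(A,B)$ (edges with one end in $A$, one in $B$), which is manifestly symmetric in $y \leftrightarrow z$ and drops out of the difference. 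For the second sum, $\sum_{v\in A}|N(v)\cap Q| = \sum_{w\in Q}|N(w)\cap A|$. Now each $w \in Q = S_z\setminus X_t$ satisfies the $\sigma$-constraint with respect to $S_z$: $|N(w)\cap S_z| \equiv_\mname \alpha$, and $|N(w)\cap S_z| = |N(w)\cap B| + |N(w)\cap Q|$. Summing over $w\in Q$: $\sum_{w\in Q}|N(w)\cap B| + \sum_{w\in Q}|N(w)\cap Q| \equiv_\mname \alpha|Q|$. Here $\sum_{w\in Q}|N(w)\cap Q| = 2\,e(Q)$ is even but I must keep it as a symmetric quantity; the trouble is it depends on $Q$ alone, not symmetrically on $(y,z)$. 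So this naive split is not yet symmetric — the $\sum_{w\in Q}|N(w)\cap A|$ term I want is not directly the one the $\sigma$-constraint gives me. I therefore expect to need to instead sum $|N(w)\cap S_z|$ over $w \in Q$ but organize the right-hand side using that $B, Q$ partition (the bag part and the non-bag part of) $S_z$, and then the asymmetry between the $A$-part and the internal-$Q$-part is precisely absorbed into the term $\alpha|Q|$ versus the symmetric edge counts; swapping $y,z$ turns $\alpha|Q|$ into $\alpha|P|$ and turns $2e(Q)$-type terms into $2e(P)$-type terms, and $\sigvec{z}\cdot\degvec{y}$ picks up exactly the complementary asymmetric pieces, so subtracting the two and using $|P|\equiv_\mname|Q|$ should give $0$.

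**Main obstacle.** The delicate part is the bookkeeping of which terms are genuinely symmetric under $y\leftrightarrow z$ and which carry the $\alpha\cdot|S\setminus X_t|$ weighting, especially keeping track of the edges internal to $S_y\setminus X_t$ and $S_z\setminus X_t$ and edges between them: I need to be sure that $e(P,Q)$ (edges between the two non-bag selected sets) cancels, that $e(A,B)$ cancels, and that the only residue is $\alpha(|P|-|Q|)$, which vanishes mod $\mname$ by hypothesis. The $\rho$-constraint with parameter $\beta$ does not actually enter the cross term $\sigvec{y}\cdot\degvec{z}$ — only the $\sigma$-constraint on $S_z\setminus X_t$ does — which is a mild surprise worth double-checking against small cases (e.g.\ \textsc{Perfect Code}, $\mname = 3$) before committing to the write-up. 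Once the algebra is arranged so that $\sigvec{y}\cdot\degvec{z} - e(A,B) \equiv_\mname \alpha|Q| - 2e(Q) - (\text{cross term with }P)$ and the mirror identity holds, the conclusion $\sigvec{y}\cdot\degvec{z}\equiv_\mname\sigvec{z}\cdot\degvec{y}$ follows by subtraction.
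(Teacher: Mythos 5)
Your overall strategy—reduce $\sigvec{y}\cdot\degvec{z} - \sigvec{z}\cdot\degvec{y}$ to $e(A,Q) - e(B,P)$ after the symmetric term $e(A,B)$ cancels, then exploit the $\mname$-structured constraints on the non-bag vertices—is the right skeleton and matches the paper's double-counting of $|\vec{E}(S_y,S_z)| = |\vec{E}(S_z,S_y)|$. But there is a genuine misstep in how you deploy the constraints, and your stated conclusion about which constraint enters is wrong.

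To estimate $e(A,Q) = \sum_{w \in Q}|N(w)\cap A|$, you invoke the $\sigma$-constraint of $S_z$ on $w \in Q$, which controls $|N(w)\cap S_z|$. That is the wrong witness: $A = S_y \cap X_t$ is part of $S_y$, so the quantity you need to control is $|N(w)\cap S_y|$. The correct move is to write $|N(w)\cap A| = |N(w)\cap S_y| - |N(w)\cap P|$ and use that $w \notin X_t$ is subject to the $(\sigma,\rho)$-constraints \emph{with respect to $S_y$}: if $w \in S_y$ then $|N(w)\cap S_y| \equiv_\mname \alpha$, and if $w \notin S_y$ then $|N(w)\cap S_y| \equiv_\mname \beta$. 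Summing over $w \in Q$ gives
\[
e(A,Q) + e(P,Q) \equiv_\mname \alpha\,|Q\cap S_y| + \beta\bigl(|Q| - |Q\cap S_y|\bigr),
\]
and symmetrically $e(B,P) + e(Q,P) \equiv_\mname \alpha\,|P\cap S_z| + \beta\bigl(|P| - |P\cap S_z|\bigr)$. Since $Q \cap S_y = (S_y\cap S_z)\setminus X_t = P\cap S_z$, subtracting kills the $\alpha$-terms and the $\beta\,|Q\cap S_y|$-terms, and $e(P,Q)$ cancels against $e(Q,P)$, leaving exactly $\beta(|Q|-|P|) \equiv_\mname 0$. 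So contrary to your parenthetical remark, the $\rho$-constraint \emph{is} essential—it is precisely the term that the hypothesis $|P|\equiv_\mname|Q|$ is there to kill—and the residue carries $\beta$, not $\alpha$. Had you carried out the ``double-check against small cases'' you flagged (e.g., $\sigma=\{0\},\rho=\{1\}$, where $\alpha=0$ and $\beta=1$), the mismatch would have surfaced immediately. The paper's proof applies the same cross-witness idea via an ordered partition of $X_t$ and a direct count of $|\vec{E}(S_y,S_z)|$, bucketing $S_y$ into $(S_y\setminus U)\setminus S_z$, $(S_y\cap S_z)\setminus U$, and $S_y\cap U$; that bucketing is where both $\alpha$ and $\beta$ enter, matching the computation above.
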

\begin{figure}[t]
    \centering
    \begin{subfigure}[b]{\textwidth}
    \centering
    \includegraphics[scale=1.35]{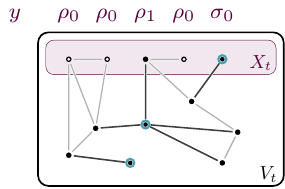}\qquad
    \includegraphics[scale=1.35]{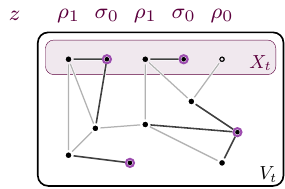}
        \caption{A bag \(X_t\) and the union \(V_t\) of the bags that are not above
            \(X_t\).
            For \(\sigma = \{0\}, \rho = \{ 1 \}\) (which are \(2\)-structured) the strings
        \(y = \rho_0\rho_0\rho_1\rho_0\sigma_0\) and
        \(z = \rho_1\sigma_0\rho_1\sigma_0\rho_0\) are compatible with \(t\); the vertices
        of the corresponding partial solutions \(S_y\) and \(S_z\) are encircled in blue
        and purple (respectively);
        we depict unselected vertices without selected neighbors as empty circles.
        We have \(|S_y\setminus X_t| = |S_z\setminus
        X_t| = 2\) and \(
            \sigvec{y}\cdot\degvec{z} = (0,0,0,0,1) \cdot (1,0,1,0,0) = 0
            \equiv_2 0 = (0,1,0,1,0) \cdot (0,0,1,0,0) = \sigvec{z}\cdot\degvec{y}.
            \)
    }\label{fig:hamming-distance-parity-mod-p-intro}
    \end{subfigure}
    \begin{subfigure}[b]{\textwidth}
        \centering
        \includegraphics[scale=1.35]{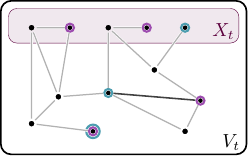}\qquad\qquad\qquad
        \includegraphics{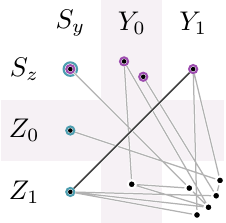}
        \caption{The partial solutions \(S_y\) and \(S_z\) depicted together.
            Observe that edges between a vertex \(v_y \in S_y\) and \(v_z \in S_z\) are
            possible only if \(v_y\) and \(v_z\) have exactly one selected neighbor in the
            corresponding other partial solution.
        }\label{fig:hamming-distance-parity-mod-p-intro2}
    \end{subfigure}
    \caption{An example for partial solutions and edges between them for \textsc{Perfect Code}.}
\end{figure}

For an intuition for this lemma, let us move to the \textsc{Exact Independent
Dominating Set} (or \textsc{Perfect Code}) problem as a specific example which corresponds to $\sigma = \{0\}$ and $\rho = \{1\}$.
Observe that $(\{0\},\{1\})$ is indeed $2$-structured.
Consider two partial solutions $S_y,S_z \subseteq V_t$.
(Also consult \cref{fig:hamming-distance-parity-mod-p-intro} for a visualization of an
example.)
Note that both $S_y$ and $S_z$ are independent sets in $G$ since $\sigma = \{0\}$.
Now, let us count the edges between $S_y$ and $S_z$.
To that end, let us define $Y_0$ as the set of vertices from $V_t \setminus S_y$ that have
no selected neighbor (i.e., no neighbor in $S_y$), and $Y_1$ as the set of vertices from $V_t \setminus S_y$ that have one selected neighbor (observe that $Y_0 \subseteq X_t$ since $\rho=\{1\}$).
Observe that $(S_y,Y_0,Y_1)$ forms a partition of $V_t$.
We define $Z_0$ and $Z_1$ analogously for the partial solution $S_z$.

Now, every vertex in $S_z \cap Y_0$ has no neighbor in $S_y$, while every vertex in $S_z
\cap Y_1$ has exactly one neighbor in $S_y$.
Recalling that vertices from $S_z \cap S_y$ have no neighbor in $S_y$ (since $S_y$ is an
independent set), the number of edges from $S_z$ to $S_y$ equals $|S_z \cap Y_1|$.
Repeating the same argument with reversed roles, we get that the number of edges from
$S_y$ to $S_z$ equals $|S_y \cap Z_1|$.
So $|S_z \cap Y_1| = |S_y \cap Z_1|$.
Assuming $X_t = V_t$, this condition is equivalent to \(\sigvec{y}\cdot\degvec{z} = \sigvec{z}\cdot\degvec{y}\).
To obtain the conclusion of the lemma also for $X_t \neq V_t$, we additionally use the assumption that \(|S_{y}
\setminus X_t| \equiv_{2} |S_{z} \setminus X_t|\) and $Y_0,Z_0 \subseteq X_t$.
Consult \cref{fig:hamming-distance-parity-mod-p-intro2} for a visualization of said proof
sketch for the example from \cref{fig:hamming-distance-parity-mod-p-intro}.

Now, how can we use \cref{la:hamming-distance-parity-mod-p-intro} to derive bounds on the number of compatible strings $y \in \allStates^{X_t}$?
First, we partition the compatible strings into sets $L_i$, for $i \in \fragment{0}{\mname-1}$, where $L_i$ contains all compatible strings $y$ for which there is a partial solution $S_y$ that satisfies $|S_{y} \setminus X_t| \equiv_\mname i$.
Hence, \cref{la:hamming-distance-parity-mod-p-intro} yields that \(\sigvec{y}\cdot\degvec{z} \equiv_{\mname} \sigvec{z}\cdot\degvec{y}\) for all $y,z \in L_i$.
We then show that $|L_i| \leq c_{\sigma,\rho}^{|X_t|}$ for all $i \in \fragment{0}{\mname-1}$ by using arguments that have a linear-algebra flavor.
For example, for the case $\sigma = \{0\}$ and $\rho = \{1\}$, we obtain that $|L_i| \leq 2^{|X_t|}$.
On a high level, our intuition here is that the condition \(\sigvec{y}\cdot\degvec{z}
\equiv_{\mname} \sigvec{z}\cdot\degvec{y}\) says that the set of $\sigma$-vectors is in
some sense ``orthogonal'' to the set of weight-vectors.
More formally, let us say that a set $A \subseteq X_t$ is \emph{$\sigma$-defining} for $L_i$ if $A$ is an inclusion-minimal set of positions that determines $\sigma$-vectors in $L_i$, i.e., fixing all positions $v \in A$ of a $\sigma$-vector $\vec{s}$ of some string in $L_i$ completely determines $\vec{s}$.
Then $B \coloneqq X_t \setminus A$ determines the weight-vectors in $L_i$ modulo $\mname$ in the following sense.

\begin{lemma}
 Suppose $A \subseteq X_t$ is a \emph{$\sigma$-defining} set for $L_i$ and set $B \coloneqq X_t \setminus A$.
 Then, for any two strings $y,z \in L_i$ with \(\sigvec{y} = \sigvec{z}\) and $\degvec{y}\vposition{v} \equiv_{\mname} \degvec{z}\vposition{v}$ for all $v \in B$, it holds that
 \[\degvec{y}\vposition{v} \equiv_{\mname} \degvec{z}\vposition{v}\]
 for all $v \in X_t$.
\end{lemma}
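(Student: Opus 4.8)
The plan is to translate the statement into a short linear-algebra argument over $\ZZ/\mname\ZZ$ for the integer difference vector $d \coloneqq \degvec{y} - \degvec{z}$, so that the congruence $\degvec{y}\vposition{v}\equiv_{\mname}\degvec{z}\vposition{v}$ is exactly $d\vposition{v}\equiv_{\mname}0$. The hypothesis gives $d\vposition{v}\equiv_{\mname}0$ for all $v\in B$, and the goal is to propagate this to all $v\in A$; once that is done, $X_t = A\cup B$ finishes the proof.

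First I would extract a family of orthogonality constraints on $d$ from \cref{la:hamming-distance-parity-mod-p-intro}. For any $w\in L_i$, all three of $w,y,z$ admit witnesses whose part outside $X_t$ has size $\equiv_{\mname}i$, so the lemma applies to the pairs $(w,y)$ and $(w,z)$ and gives $\sigvec{w}\cdot\degvec{y}\equiv_{\mname}\sigvec{y}\cdot\degvec{w}$ and $\sigvec{w}\cdot\degvec{z}\equiv_{\mname}\sigvec{z}\cdot\degvec{w}$. Since $\sigvec{y}=\sigvec{z}$ by hypothesis, the two right-hand sides agree, so subtracting yields $\sigvec{w}\cdot d\equiv_{\mname}0$ for every $w\in L_i$. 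Splitting this dot product over $A$ and $B$ and discarding the $B$-part (which vanishes mod $\mname$ since $d\vposition{b}\equiv_{\mname}0$ there), I obtain
\[\sum_{a\in A}\sigvec{w}\vposition{a}\,d\vposition{a}\;\equiv_{\mname}\;0\qquad\text{for every }w\in L_i.\]

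Second I would use inclusion-minimality of the $\sigma$-defining set $A$ to conclude. Assume for contradiction that $d\vposition{a_0}\not\equiv_{\mname}0$ for some $a_0\in A$. Since $A\setminus\{a_0\}$ is not $\sigma$-defining for $L_i$, there exist $w,w'\in L_i$ whose $\sigma$-vectors agree on $A\setminus\{a_0\}$ yet differ as vectors on $X_t$; because $A$ \emph{is} $\sigma$-defining, the difference must occur inside $A$, hence precisely at $a_0$, so $\sigvec{w}\vposition{a_0}-\sigvec{w'}\vposition{a_0}=\pm1$. Discarding again the $B$-part of the dot product and using that $\sigvec{w}-\sigvec{w'}$ vanishes on $A\setminus\{a_0\}$, I get $(\sigvec{w}-\sigvec{w'})\cdot d\equiv_{\mname}\pm\,d\vposition{a_0}$; but by the first step $\sigvec{w}\cdot d\equiv_{\mname}0\equiv_{\mname}\sigvec{w'}\cdot d$, so $(\sigvec{w}-\sigvec{w'})\cdot d\equiv_{\mname}0$, forcing $d\vposition{a_0}\equiv_{\mname}0$ --- a contradiction. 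Hence $d\vposition{a}\equiv_{\mname}0$ for all $a\in A$, and with the hypothesis on $B$ this gives $\degvec{y}\vposition{v}\equiv_{\mname}\degvec{z}\vposition{v}$ for all $v\in X_t$.

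The only genuinely delicate point is to unpack the precise meaning of ``$\sigma$-defining'' and of its inclusion-minimality so as to produce the pair $w,w'\in L_i$ differing in exactly the single coordinate $a_0$; everything after that is a routine computation over $\ZZ/\mname\ZZ$. A one-line remark covers the degenerate cases $A=\emptyset$ (claim vacuous) and $|L_i|\le 1$ (claim immediate).
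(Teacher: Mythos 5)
Your proof is correct and takes essentially the same route as the paper's proof of \cref{la:partition-positions}: you re-derive the inner-product identity of \cref{la:inner-product-zero} inline, use the inclusion-minimality of the $\sigma$-defining set to produce a pair of witness $\sigma$-vectors differing at a single coordinate of $A$, and zero out the $B$-part of the dot product via the hypothesis. The only difference is presentational --- you frame the final step as a contradiction, while the paper argues directly coordinate-by-coordinate with its fixed witness vectors $w_{1,i},w_{0,i}$ --- but the underlying computation is identical.
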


For example, for $\sigma = \{0\}$ and $\rho = \{1\}$, this implies that $|L_i| \leq 2^{|A|} \cdot 2^{|B|} = 2^{|X_t|}$.
Indeed, there are $2^{|A|}$ many potential $\sigma$-vectors $\vec{s}$, and, for every fixed $\vec{s}$, we have $2^{|B|}$ options for choosing the weight vector modulo $2$.
Since $\max\{\sigMax,\rhoMax\} \leq 1$,
determining the weight vector modulo $2$
actually completely determines the weight vector,
and so the upper bound follows.

For $\mname \geq 3$, the largest number of types can generally be achieved when $A = \emptyset$, since, intuitively speaking, in comparison to the trivial upper bound, we roughly save a factor of $\mname^{|A|} \cdot 2^{|B|}$.
In this case, it is easy to see that $|L_i| \leq (\max\{\sigMax,\rhoMax\} + 1)^{|X_t|}$
since we have that many choices for weight vectors.
It may be tempting to believe that the same holds for $\mname = 2$ (since here, it does not seem to matter how we choose $A$).
However, there is one notable exception when $\sigMax = \rhoMax$ is even.
In this case, consider the language
\[L^{*} \coloneqq\{y \in \allStates^{X_t} \mid \degvec{y}\position{v} \equiv_{\mname} 0 \text{ for all } v \in X_t\}\]
that arises for the choice $A = X_t$ and $B = \emptyset$.
It has size $|L^{*}|=(\floor{\rhoMax/\mname}+\floor{\sigMax/\mname}+2)^{|X_t|} = (\max\{\sigMax,\rhoMax\}+2)^{|X_t|}$ which explains why this case stands out in Definition \ref{def:intro:baseOfRunningTime}.

Overall, we obtain that the number of compatible strings is bounded by $\Oh(c_{\sigma,\rho}^{|X_t|})$.
With our improved bound at hand, we can now obtain improved dynamic programming algorithms.

At this point, we face a second challenge.
When performing dynamic programming along a tree decomposition, the most expensive step is to perform a join operation.
In this situation, the current node $t$ has exactly two children $t_1$ and $t_2$ and $X_t = X_{t_1} = X_{t_2}$.
In \cite{Rooij20}, van Rooij provides various convolution-based methods to efficiently compute the set of compatible strings for $t$ given the sets of compatible strings for $t_1$ and $t_2$.
We wish to apply the same methods, but unfortunately this is not directly possible since the convolution-based methods are not designed to handle restrictions of the input space.
To circumvent this issue, our solution is to design a specialized compression method that
again exploits the structure described above.
With the partition $(A,B)$ of $X_t$ at hand, this seems rather straightforward by simply omitting the redundant information.
However, the crux is that we need to design the compression in such a way that it
agrees with the join operation: two compatible strings $y_1$ and $y_2$ for $t_1$ and $t_2$ can be joined into a compatible string $y$ for $t$ if and only if the compressed strings $y_1'$ and $y_2'$ can be joined into the compressed string $y'$.
This condition makes the compression surprisingly tricky, and here we need to add several ``checksums'' to the compressed strings to ensure the required equivalence.

Overall, this allows us to prove \cref{thm:alg-main-intro}.
We complete this overview by stating the following variant of \cref{thm:alg-main-intro} which provides a linear-time algorithm for the decision version (i.e., it improves the dependence on the number of vertices in the running time from polynomial to linear for the decision version of the problem).

\begin{theorem}\label{thm:alg-main-decision-overview}
 Let $\sigma$ and $\rho$ denote two finite or cofinite sets.
 Given a graph $G$ with a tree decomposition of width $\tw$, we can decide whether $G$ has a $(\sigma,\rho)$-set in time $(c_{\sigma,\rho})^\tw \cdot (c_{\sigma,\rho} + \tw)^{\Oh(1)}\cdot |V(G)|$.
\end{theorem}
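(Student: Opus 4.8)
The plan is to re-run the dynamic program that proves \cref{thm:alg-main-intro} (given in \cref{sec:alg-structured}) with two modifications tailored to the decision version. First, since we only need to know whether \emph{some} $(\sigma,\rho)$-set exists, we drop the coordinate that tracks the solution size $k$; this removes the only source of a polynomial-in-$n$ blow-up in the size of the dynamic-programming tables. Second, we run the whole computation over a ring in which every number stays bounded by a fixed polynomial in $c_{\sigma,\rho}^{\,\tw+1}$, so that each arithmetic operation costs $(c_{\sigma,\rho}+\tw)^{\Oh(1)}$ rather than $\mathrm{poly}(n)$. Before starting, we turn the given width-$\tw$ tree decomposition into a \emph{nice} one (leaf, introduce-vertex, forget-vertex, and join nodes) of the same width with $\Oh(\tw\cdot n)$ nodes, of which $\Oh(n)$ are join nodes, in time $\tw^{\Oh(1)}\cdot n$ by the standard construction. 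At every node $t$ we store, for each type $y$ compatible with $t$, a single bit $b_t(y)$ recording whether a partial solution of type $y$ exists, using the same compressed representation as in the proof of \cref{thm:alg-main-intro} (a $\sigma$-defining partition $(A,B)$ of $X_t$ together with the required checksums); the compressed table thus has only $\Oh(c_{\sigma,\rho}^{|X_t|})$ entries.

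Leaf, introduce-vertex, and forget-vertex nodes are processed exactly as in \cref{sec:alg-structured}, but now all operations act on bits and take time $c_{\sigma,\rho}^{\tw}\cdot(c_{\sigma,\rho}+\tw)^{\Oh(1)}$ --- there is no longer any $\mathrm{poly}(n)$ factor because the table is not indexed by $k$. A join node $t$ with children $t_1,t_2$ obeys the Boolean recurrence
\[
 b_t(y)\;=\;\bigvee_{y_1\oplus y_2 = y} b_{t_1}(y_1)\wedge b_{t_2}(y_2),
\]
where $\oplus$ is the (compressed form of the) combination of types used at join nodes in \cref{sec:alg-structured}. We evaluate it by lifting the bits to integers, computing $c_t(y)=\sum_{y_1\oplus y_2=y} b_{t_1}(y_1)\cdot b_{t_2}(y_2)$ over $\ZZ$ with the fast generalized-convolution machinery of \citet{Rooij20} adapted to the compressed state space (as already carried out in the proof of \cref{thm:alg-main-intro}), and then setting $b_t(y)=[c_t(y)>0]$; this costs $c_{\sigma,\rho}^{\tw}\cdot(c_{\sigma,\rho}+\tw)^{\Oh(1)}$ arithmetic operations. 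Since $\oplus$ acts coordinatewise on a per-vertex state space of size $\Oh(c_{\sigma,\rho})$, every $c_t(y)$ is a sum of at most $c_{\sigma,\rho}^{\,\tw+1}$ terms each equal to $0$ or $1$, and all intermediate values arising in the zeta and M\"obius transforms inside the convolution are bounded by a fixed polynomial in $c_{\sigma,\rho}^{\,\tw+1}$; hence every number has $\Oh(\tw\log c_{\sigma,\rho})$ bits and each operation indeed costs $(c_{\sigma,\rho}+\tw)^{\Oh(1)}$. Summing over the $\Oh(\tw\cdot n)$ nodes (the $\Oh(n)$ join nodes dominate) and absorbing the extra factor $c_{\sigma,\rho}$ from $c_{\sigma,\rho}^{\tw+1}$ into the polynomial term yields total time $(c_{\sigma,\rho})^{\tw}\cdot(c_{\sigma,\rho}+\tw)^{\Oh(1)}\cdot n$. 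Correctness is inherited from the proof of \cref{thm:alg-main-intro}: we merely replace ``number of partial solutions of type $y$ of a prescribed size'' by ``existence of a partial solution of type $y$'', and $G$ has a $(\sigma,\rho)$-set iff some type is achievable at the root.

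The only step that requires genuine verification rather than a quote is the magnitude control at join nodes: one has to confirm that running \citet{Rooij20}'s convolution together with the compression and checksums of \cref{sec:alg-structured} over $\ZZ$ keeps every intermediate value polynomially bounded in $c_{\sigma,\rho}^{\,\tw+1}$, so that the threshold back to $\{0,1\}$ is exact and the per-operation cost is $(c_{\sigma,\rho}+\tw)^{\Oh(1)}$. Equivalently, one may run the convolution modulo a prime $p$ with $c_{\sigma,\rho}^{\,\tw+1}<p\le 2\,c_{\sigma,\rho}^{\,\tw+1}$. Either way, the key point is that no quantity in the decision algorithm ever needs more than $\Oh(\tw\log c_{\sigma,\rho})$ bits, which is precisely what converts the polynomial dependence on $n$ in \cref{thm:alg-main-intro} into the linear dependence claimed here; everything else is the verbatim dynamic program of \cref{sec:alg-structured}.
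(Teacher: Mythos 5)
Your proposal is correct and recovers essentially the same algorithm as the paper, though the narrative runs in the opposite direction from the paper's exposition. The paper proves the decision version \emph{first} (Theorem~\ref{thm:main-alg-m-str}), storing at each node $t$ the sets $L_{t,i}$ of compatible strings (indexed by the residue class $i$ of $|S_x\setminus X_t|$ modulo $\mname$, not by exact size), and computing joins via Theorem~\ref{thm:join-algorithm}, whose proof already lifts indicator functions to integers, convolves over $\FF_p$, and thresholds back to $\{0,1\}$. Only afterward does the paper extend to the counting and optimization versions (Theorems~\ref{thm:main-alg-m-str-opt}, \ref{thm:main-alg-m-str-count}), at which point exact size tracking and CRT machinery (Theorem~\ref{thm:alg-convolution-extended}) are introduced to cope with counts up to $2^n$. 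You instead imagine starting from the counting algorithm and stripping these additions off, which is a perfectly valid reconstruction since the underlying DP is identical; just be aware that even in the decision version one must still index tables by the residue class $i\in\fragmentco{0}{\mname}$ (as Lemma~\ref{la:hamming-distance-parity-mod-p-general} applies only to witnesses congruent mod $\mname$), and that the prime $p$ used in the convolution (Remark~\ref{rem:find-prime-alg}) must satisfy $p\equiv 1\pmod{D'}$ so that the required roots of unity exist, so one cannot pick an arbitrary prime in $\fragmentoc{c_{\sigma,\rho}^{\,\tw+1}}{2c_{\sigma,\rho}^{\,\tw+1}}$ — but the resulting $p$ is still bounded by $c_{\sigma,\rho}^{\,\tw+1}\cdot(\tw+\allMax)^{\Oh(1)}$, so your magnitude conclusion is unaffected. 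Finally, you should note (as the paper implicitly does) that when $(\sigma,\rho)$ is not $\mname$-structured for any $\mname\ge 2$, one instead invokes a linear-time implementation of van Rooij's algorithm (Theorem~\ref{thm:vanrooij-intro} without the size coordinate); Section~\ref{sec:alg-structured} covers only the $\mname$-structured case.
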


\subsection{Representative Sets}
The algorithm described in \cref{sec:intro_structuredsets} determined, for every node $t$ and string $y \in \allStates^{X_t}$, whether there is a partial solution $S \subseteq V_t$ that has type $y$. Our lower bounds show that this strategy is essentially optimal when we want to count the number of solutions. But, for the decision version, the idea of representative sets can give significant improvements in some cases.

As an illustrative example, let us consider the problem with $\sigma=\{0\}$, $\rho=\ZZ_{\ge0}\setminus \{10\}$, and suppose that $X_t=\{v\}$. Then, the partial solutions $S\subseteq V_t$ have $\rhoMax+\sigMax+2=13$ different types: either $v\in S$ has 0 neighbors in $S$, or $v\not\in S$ and $v$ has $0,1,\dots,10$, or $\ge 11$ neighbors in $S$. However, we do not need to know exactly which of these types correspond to partial solutions. A smaller amount of information is already sufficient to decide if there is a partial solution that is compatible with some extension $S'\subseteq V(G)\setminus V_t$. For example, if we have partial solutions for, say, the types $\sigma_0$, $\rho_7$, and $\rho_8$, then {\em every} extension $S'$ that extends \emph{some} partial solution $S\subseteq V_t$ extends one of these three partial solutions. Indeed, suppose that $v\not\in S$ and extension $S'$ gives $i$ further neighbors to $v$, then $S$ can be replaced by a partial solution corresponding to the $\rho_7$ state unless $i=3$, in which case $S$ can be replaced by the solution corresponding to $\rho_8$.

In general, we want to compute a \emph{representative set}
of all the partial solutions of $V_t$ such that
if there is one partial solution that is extended by some set $S' \subseteq V(G) \setminus V_t$,
then there is at least one partial solution in the representative set that is extendable by $S'$.
When $|X_t|>1$, then it is far from trivial to obtain upper bounds
on the size of the required representative sets and to compute them efficiently.
Earlier work \cite{MarxSS25} showed a connection between
these type of representative sets and representative sets in linear matroids,
and hence, known algebraic techniques can be used
\cite{FominLPS17,FominLPS16,DBLP:journals/jacm/KratschW20}.
The upper bounds depend on the number of missing integers from the cofinite sets,
but \emph{do not} depend on the largest missing element.
Thus, this technique is particularly efficient
when a few large integers are missing from $\rho$ or $\sigma$.
However, the price we have to pay is that
the algebraic techniques require matrix operations
and the matrix multiplication exponent $\omega$ appears in the exponent of the running time.
This makes it unlikely to obtain matching lower bounds
similar to \cref{thm:lower-main-intro}.

The technical details, including the proof of \cref{thm:dp-representative-set-intro}, are given in \cref{sec:representative-set}.

\section{Preliminaries}
\label{sec:prelims}
\subsection{Basics}

\subsubsection*{Numbers, Sets, Strings, and Vectors}
We use $a \equiv_{\mname} b$ as shorthand for $a \equiv b \pmod \mname$.

We write $\NN = \{0,1,2,3,\dots\}$ to denote the set of non-negative integers and $\ZZ_{>0} = \{1,2,3,\dots\}$ to denote the positive integers.
For integers \(i, j\), we write \(\fragment{i}{j}\) for the set \(\{i,\dots,j\}\), and \(\fragmentco{i}{j}\) for the set \(\{i,\dots,j-1\}\).
The sets \(\fragmentoc{i}{j}\) and \(\fragmentoo{i}{j}\) are defined similarly.
A set $\tau \subseteq \NN$ is \emph{cofinite} if $\NN \setminus \tau$ is finite.
Also, we say that $\tau$ is \emph{simple cofinite} if $\tau = \{n,n+1,n+2,\dots\}$ for some $n \in \NN$.

We write \(s =  s\position{1} s\position{2}\cdots  s\position{n}\) for a \emph{string} of length \(| s| = n\) over an alphabet \(\Sigma\).
We write \(\Sigma^{n}\) for the set (or \emph{language}) of all strings of length \(n\), and $\Sigma^* \coloneqq \bigcup_{n \geq 0} \Sigma^n$ for the set of all strings over $\Sigma$.
We use $\varepsilon$ to denote the empty string.
For a string \( s \in \Sigma^{n}\) and positions \(i \le j \in \fragment{1}{n}\),
we write \( s\fragment{i}{j} \coloneqq  s\position{i}\cdots  s\position{j}\);
accordingly, we define \( s\fragmentoc{i}{j}\), \( s\fragmentco{i}{j}\), and \( s\fragmentoo{i}{j}\).
Finally, for two strings \(s,t\), we write \(st\) for their concatenation --- in particular, we have \(s = s\fragment{1}{i}  s \fragmentoc{i}{n}\).
Sometimes we are interested in the number of occurrences of an element $a \in \Sigma$ in a string $s$.
To this end, we use the notation $\occ{s}{a}$ for the number of occurrences of $a$ in $s$,
that is, $\occ{s}{a} \coloneqq \abs{\{i \in \fragment{1}{\abs{s}} \mid s\position{i}=a\}}$.
Also, for $A \subseteq\Sigma$, $\occ{s}{A}$ denotes the number of occurrences of elements from $A$ in $s$,
that is, $\occ{s}{A} \coloneqq \sum_{a\in A} \occ{s}{a}$.

For a finite set $X$ (for instance, a set of vertices of a graph), we write \(\Sigma^X \coloneqq \Sigma^{|X|}\) to emphasize that we index the strings in (subsets of) \(\Sigma^X\) with elements from \(X\):
for an \(x \in X\), we write \(s\position{x}\) for the element at position $x$.

Similarly, for an \(n\)-dimensional vector space \(\mathbb{V}\), we view its elements as
strings of length \(n\) and correspondingly write
\(v = v\position{1} v\position{2} \dots v\position{n} \in \mathbb{V}\).
In addition to the notions defined for strings, for a set of positions \(P \subseteq \fragment{1}{n}\),
we write \(v\position{P} \coloneqq \bigcirc_{a \in P}\, x\position{a} \) for the \(|P|\)-dimensional vector that contains only the components of \(v\) whose indices are in \(P\).

To improve readability, we sometimes use a ``ranging star'' \(\star\) to range over unnamed objects.
For example, if we wish to define a function $f\colon \NN \times \NN \rightarrow \NN$, then we write $f(\star,4) = 5$ to specify that $f(i,4) = 5$ for all $i \in \NN$.

\subsubsection*{Graphs}

We use standard notation for graphs.
A \emph{graph} is a pair $G = (V(G),E(G))$ with finite vertex set $V(G)$ and edge set $E(G) \subseteq \binom{V(G)}{2}$.
Unless stated otherwise, all graphs considered in this paper are simple (that is, there
are no loops or multiple edges) and undirected.
We use $uv$ as a shorthand for edges $\{u,v\} \in E(G)$.
We write \(N_G(v)\) for the \emph{(open) neighborhood} of a vertex $v \in V(G)$,
that is, $N_G(v) \coloneqq \{w \in V(G) \mid vw \in E(G)\}$.
The \emph{degree} of $v$ is the size of its (open) neighborhood, that is,
$\deg_G(v) \coloneqq |N_G(v)|$.
The \emph{closed neighborhood} is $N_G\position{v} \coloneqq N_G(v) \cup \{v\}$.
We usually omit the index $G$ if it is clear from the context.
For $X \subseteq V(G)$, we write $G\position{X}$ to denote the \emph{induced subgraph} on the vertex set $X$, and $G - X \coloneqq G\position{V(G) \setminus X}$ denotes the induced subgraph on the complement of $X$.

\subsubsection*{Treewidth}

Next, we define tree decompositions and recall some of their basic properties.
For a more thorough introduction to tree decompositions and their many applications, we refer the reader to~\cite[Chapter 7]{CyganFKLMPPS15}.

Fix a graph $G$.
A \emph{tree decomposition} of $G$ is a pair $(T,\beta)$ that consists
of a rooted tree $T$ and a mapping $\beta\colon V(T) \to 2^{V(G)}$ such that
\begin{enumerate}[label = (T.\arabic*)]
 \item $\bigcup_{t \in V(T)} \beta(t) = V(G)$,
 \item for every edge $vw \in E(G)$, there is some node $t \in V(T)$ such that $\{u,v\} \subseteq \beta(t)$, and
 \item for every $v \in V(G)$, the set $\{t \in V(T) \mid v \in \beta(t)\}$ induces a connected subtree of $T$.
\end{enumerate}
The \emph{width} of a tree decomposition $(T,\beta)$ is defined as $\max_{t \in V(T)}|\beta(t)|-1$.
The \emph{treewidth} of a graph $G$, denoted by $\tw(G)$, is the minimum width of a tree decomposition of $G$.

When designing algorithms on graphs of bounded treewidth, it is instructive to work with
nice tree decompositions.
Let $(T,\beta)$ denote a tree decomposition and write $X_t \coloneqq \beta(t)$ for $t \in V(T)$.
We say \((T,\beta)\) is \emph{nice} if $X_r = \emptyset$ where $r$ denotes the root of $T$, $X_\ell = \emptyset$ for all leaves $\ell \in V(T)$, and every internal node $t \in V(T)$ has one of the following types:
\begin{description}
 \item[Introduce:] $t$ has exactly one child $t'$ and $X_t = X_{t'} \cup \{v\}$ for some $v \notin X_{t'}$; the vertex $v$ \emph{is introduced at $t$},
 \item[Forget:] $t$ has exactly one child $t'$ and $X_t = X_{t'} \setminus \{v\}$ for some $v \in X_{t'}$; the vertex $v$ \emph{is forgotten at $t$}, or
 \item[Join:] $t$ has exactly two children $t_1,t_2$ and $X_t = X_{t_1} = X_{t_2}$.
\end{description}

It is well-known that every tree decomposition $(T,\beta)$ of $G$ of width $\tw$ can be
turned into a nice tree decomposition of the same width $\tw$ of size $\Oh(\tw \cdot V(T))$
in time $\Oh(\tw^{2} \cdot \max\{|V(G),V(T)|\})$ (see, for instance, \cite[Lemma 7.4]{CyganFKLMPPS15}).

\subsection{Generalized Dominating Sets}

In the following, let $\sigma,\rho \subseteq \NN$ denote two sets that are finite or cofinite.

\subsubsection*{Basics}
Fix a graph $G$.
A set of vertices $S \subseteq V(G)$ is a \emph{$(\sigma,\rho)$-set} if $|N(u) \cap S| \in \sigma$ for every $u \in S$, and $|N(v) \cap S| \in \rho$ for every $v \in V(G) \setminus S$. We also refer to these two requirements as the $\sigma$-constraint and the $\rho$-constraint, respectively.
The (decision version of the) \srDomSet problem takes as input a graph $G$, and asks whether $G$ has a $(\sigma,\rho)$-set $S \subseteq V(G)$.
We use \srCountDomSet to refer to the counting version, that is, the input to the problem is a graph $G$, and the task is to determine the number of $(\sigma,\rho)$-sets $S \subseteq V(G)$.

We say $(\sigma,\rho)$ is \emph{trivial} if $\rho = \{0\}$ or $(\sigma,\rho) = (\NN,\NN)$.

\begin{fact}
 \label{fact:trivial-pairs-counting}
 Suppose $(\sigma,\rho)$ is trivial.
 Then, \srCountDomSet can be solved in polynomial time.
\end{fact}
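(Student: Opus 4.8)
The plan is to treat the two kinds of trivial pairs separately; in each case the $(\sigma,\rho)$-sets have a transparent combinatorial description that can be counted by a closed formula.

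\emph{Case $(\sigma,\rho) = (\NN,\NN)$.} Here neither constraint can ever fail: for any $S \subseteq V(G)$ and any vertex, the number of its selected neighbors lies in $\NN$ automatically. Hence every subset of $V(G)$ is a $(\sigma,\rho)$-set, and the number of $(\sigma,\rho)$-sets is exactly $2^{|V(G)|}$.

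\emph{Case $\rho = \{0\}$.} First I would observe that the $\rho$-constraint forces $S$ to be a union of connected components of $G$: if $v \notin S$ then $|N(v)\cap S| = 0$, so no edge joins $S$ to $V(G)\setminus S$. For such an $S$, every $u \in S$ has all of its neighbors inside $S$, so the $\sigma$-constraint at $u$ reduces to $\deg_G(u) \in \sigma$. Conversely, any union of connected components in each of which every vertex has degree belonging to $\sigma$ is a valid $(\sigma,\{0\})$-set. Calling such a component \emph{good}, the $(\sigma,\{0\})$-sets are exactly the unions of good components, so their number is $2^{g}$, where $g$ is the number of good components. The good components are computable in polynomial time: compute the connected components, and for each vertex test whether $\deg_G(v) \in \sigma$, which is possible in polynomial time since $\sigma$ is finite or cofinite (hence has a finite description) and $\deg_G(v) \le n-1$.

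In both cases the algorithm outputs a power of two with at most $n+1$ bits, so the running time — including writing down the answer — is polynomial in the input size. There is no real obstacle here; the only points to be slightly careful about are that the output is an exponentially large integer (but with a polynomial-size binary encoding) and the handling of degenerate components such as isolated vertices, both of which are already subsumed by the degree test above.
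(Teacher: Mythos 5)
Your proof is correct and matches the paper's argument: both handle $(\NN,\NN)$ by observing every subset works, and $\rho=\{0\}$ by noting that solutions are exactly unions of connected components in which every vertex degree lies in $\sigma$, giving $2^g$ for $g$ such components. You simply spell out the reasoning (why $S$ must be a union of components, how to test them) that the paper states more tersely.
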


\begin{proof}
 For $(\sigma,\rho) = (\NN,\NN)$, the number of $(\sigma,\rho)$-sets is $2^{|V(G)|}$.
 For $\rho = \{0\}$, the number of $(\sigma,\rho)$-sets is $2^{c}$, where $c$ denotes the number of those connected components of $G$ where every vertex degree is contained in $\sigma$.
\end{proof}

In order to analyze the complexity of \srDomSet (and \srCountDomSet) for non-trivial pairs $(\sigma,\rho)$, we associate the following parameters with $(\sigma,\rho)$.
We define
\begin{equation}\label{eq:rho-sig-max}
\sigMax \coloneqq
\begin{cases}
	\max(\sigma) & \text{if $\sigma$ is finite,} \\
	\max(\ZZ \setminus \sigma) + 1 & \text{if $\sigma$ is cofinite,}
\end{cases}
\text{and }
\rhoMax \coloneqq
\begin{cases}
	\max(\rho) & \text{if $\rho$ is finite,} \\
	\max(\ZZ \setminus \rho) + 1 & \text{if $\rho$ is cofinite.}
\end{cases}
\end{equation}
Observe that, if $\sigma = \NN$, then $\sigMax = 0$ (and similarly for $\rho$).
Moreover, we set $\allMax \coloneqq \max\{\sigMax,\rhoMax\}$.

Our improved algorithms heavily exploit certain structures of a pair $(\sigma,\rho)$;
formally we are interested in whether a pair is what we call ``\(\mname\)-structured''.

\begin{definition}[$\mname$-structured sets]
    Fix an integer $\mname \geq 1$.
    A set $\tau \subseteq \ZZ_{\ge 0}$ is \emph{$\mname$-structured} if there is some integer $c^* \in \ZZ_{\ge 0}$ such that
    \[c \equiv_{\mname} c^* \]
    for all $c \in \tau$.
\end{definition}
We say that $(\sigma,\rho)$ is \emph{$\mname$-structured} if both $\sigma$ and $\rho$ are $\mname$-structured.
Observe that $(\sigma,\rho)$ is always $1$-structured.

\subsubsection*{Partial Solutions and States}

For our algorithmic results, a key ingredient is the description of \emph{partial solutions}.

A \emph{graph with portals} is a pair $(G,U)$, where $G$ is a graph and $U \subseteq V(G)$.
If $U = \{u_1,\dots,u_k\}$, then we also write $(G,u_1,\dots,u_k)$ instead of $(G,U)$.

Intuitively speaking, the idea of this notion is that $G$ may be part of some larger graph that interacts with $G$ only via vertices from $U$.
In particular, in the context of the \srDomSet problem, vertices in $U$ do not necessarily need to satisfy the definition of a $(\sigma,\rho)$-set since they may receive further selected neighbors from outside of $G$.

\begin{definition}[partial solution]\label{def-partial-sol}\label{def:partialsol}
    Fix a graph with portals $(G,U)$.
    A set \(S \subseteq V(G)\) is a \emph{partial solution} (with respect to \(U\)) if
    \begin{enumerate}[label = (PS\arabic*)]
        \item \label{item:realizable-1s}\label{item:realizable-1}
            for each $v \in S \setminus U$, we have $|N(v) \cap S| \in \sigma$, and
        \item \label{item:realizable-3s}\label{item:realizable-3}
            for each $v \in V(G) \setminus (S\cup U)$, we have $|N(v) \cap S| \in \rho$.
    \end{enumerate}
\end{definition}

To describe whether vertices from $U$ are selected into partial solutions and how many selected neighbors they already have inside $G$, we associate a state with every vertex from $U$.

Formally, we write $\sigStatesExt \coloneqq \{\sigma_i \mid i \in \NN\}$ for the set of
potential $\sigma$-states, and we write $\rhoStatesExt \coloneqq \{\rho_i \mid i \in \NN\}$
for the set of potential $\rho$-states.
We also write $\allStatesExt \coloneqq \sigStatesExt \cup \rhoStatesExt$ for the set of \emph{all} potential states.

\begin{figure}[t]
    \centering
    \includegraphics[scale=1.3]{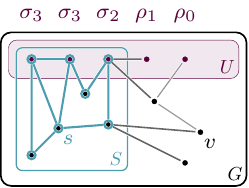}
    \caption{A graph \(G\) and subsets of vertices \(U\) and \(S\).
        For \(\sigma = \{2, 4\}, \rho = \{ 1 \}\), the set \(S\) is a partial solution
        (with respect to \(U\)), as every blue vertex \(s \in S \setminus U\) satisfies
        \(|N(s) \cap S| \in \{2,4\} = \sigma\) and every black vertex \(v \in V(G)
        \setminus (S \cup U)\) satisfies \(|N(v) \cap S| \in \{1\} = \rho\).
        The depicted set \(S\) corresponds to the compatible string
        \(\sigma_3\sigma_3\sigma_2\rho_1\rho_0\) (written above \(G\)).
        Note that \(S\) would not be a partial solution for \(\sigma = \{4\}\), as every
        blue vertex but one has only \(2\) neighbors in~\(S\).
    }\label{fig:partial}
\end{figure}

\begin{definition}[compatible strings]\label{def-compatible-general}\label{def:compatibleString}
    Fix a graph with portals $(G,U)$.
    A string \( x \in \allStatesExt^{U}\) is \emph{compatible with $(G,U)$}
    if there is a partial solution $S_{x} \subseteq V(G)$ such that
    \begin{enumerate}[label = (X\arabic*)]
        \item\label{item:realizable-2s}
            for each $v \in U \cap S_{x}$, we have
            \( x\position{v} = \sigma_{s}\), where $s = |N(v) \cap S_{x}|$, and
        \item\label{item:realizable-4s}
            for each $v \in U \setminus S_{x}$, we have
            \( x\position{v} = \rho_{r}\), where $r = |N(v) \cap S_{x}|$.
    \end{enumerate}
    We also refer to the vertices in $S_{x}$ as being \emph{selected} and say that $S_{x}$ is a \emph{(partial) solution}, \emph{selection}, or \emph{witness} that \emph{witnesses} $x$.
\end{definition}

Consult \cref{fig:partial} for a visualization of an example of a partial solution and its corresponding compatible string.

Observe that, despite $\allStatesExt$ being an infinite alphabet, for every graph with portals $(G,U)$, only finitely many strings $x$ can be realized.
Indeed, if $|V(G)| = n$, then every compatible string can have only characters from $\allStates_n = \sigStates_n \cup \rhoStates_n$, where $\sigStates_n \coloneqq \{\sigma_i \mid i \in \fragment{0}{n}\}$ and $\rhoStates_n \coloneqq \{\rho_i \mid i \in \fragment{0}{n}\}$.

\begin{definition}[realized language]\label{def:provider}
    For a graph with portals \((G, U)\), we define its \emph{realized language} as
    \[L(G,U) \coloneqq \{x \in \allStatesExt^U \mid x \text{ is compatible with } (G,U)\}.\]
\end{definition}

Again, observe that $L(G,U) \subseteq \allStates_n^U$, where $n = |V(G)|$.

In fact, for most of our applications, it makes sense to restrict the alphabet even further.
Recall the definition of $\sigMax$ and $\rhoMax$ from Equation~\eqref{eq:rho-sig-max}.
Suppose that $\sigma$ is finite.
Then, we are usually not interested in partial solutions $S$ where some vertex from $U$ is selected and already has more than $\sigMax$ selected neighbors (as it is impossible to extend this partial solution into a full solution).
Also, if $\sigma$ is infinite, it is usually irrelevant to us whether a selected vertex has exactly $\sigMax$ selected neighbors, or more than $\sigMax$ selected neighbors, since both options lead to the same outcome for all possible extensions of a partial solution.
For this reason, we typically%
\footnote{Sometimes, it turns out to be more convenient to work with the more general
variants; we clearly mark said (rare) occurrences of \(\allStatesExt\) and \(\allStates_n\).}
 restrict ourselves to the alphabets
\[\sigStates \coloneqq \{\sigma_0,\dots,\sigma_{\sigMax}\} \quad\text{and}\quad \rhoStates \coloneqq \{\rho_0,\dots,\rho_{\rhoMax}\}.\]
As before, we define $\allStates \coloneqq \sigStates \cup \rhoStates$.

\section{Faster Algorithms for Structured Pairs}
\label{sec:alg-structured}

The goal of this section is to prove \cref{thm:alg-main-intro}.
With \cref{thm:vanrooij-intro} in mind, we can restrict ourselves to the case where $(\sigma,\rho)$ is $\mname$-structured for some $\mname \geq 2$.
In particular, both $\sigma$ and $\rho$ are finite in this case.

So, for the remainder of this section, suppose that
$\sigma, \rho \subseteq \ZZ_{\ge 0}$
are finite non-empty sets such that $\rho \neq \{0\}$.
Recall that $\sigMax \coloneqq \max\sigma$, $\rhoMax \coloneqq \max \rho$,
and $\allMax \coloneqq \max\{\sigMax,\rhoMax\}$.
Also suppose that $(\sigma,\rho)$ is $\mname$-structured for some $\mname \geq 2$, that is, there are integers $B, B' \in
\fragmentco{0}{\mname}$ such that
\(s \equiv_{\mname} B\) for every \(s \in \sigma\) and
\(r \equiv_{\mname} B'\) for every \(r \in \rho\).
Without loss of generality, we may assume that $\allMax + 1 \geq \mname$
(if $\mname > \allMax + 1$, then $|\rho| = |\sigma| = 1$,
which implies that $(\sigma, \rho)$ is $\mname$-structured
for every $\mname \geq 2$).

For this case, we present faster dynamic-programming on tree-decomposition-based
algorithms for \srDomSet. In particular, we prove the following result.

\begin{restatable}{theorem}{mainalgmstr}\label{thm:main-alg-m-str}
    Let $(\sigma,\rho)$ denote finite $\mname$-structured sets for some $\mname \geq 2$.
    Then, there is an algorithm $\CA$ that, given a graph $G$ and a nice tree decomposition
    of $G$ of width $\tw$, decides whether $G$ has a $(\sigma,\rho)$-set.

    If $\mname \geq 3$ or $\allMax$ is odd or $\min\{\sigMax,\rhoMax\} < \allMax$, then
    algorithm $\CA$ runs in time
    \[
      (\allMax + 1)^\tw \cdot 2^{(\allMax)^{\Oh(1)}} \cdot \tw^{\Oh(1)} \cdot |V(G)|.
    \]

    If $\mname = 2$, $\allMax$ is even, and $\sigMax = \rhoMax = \allMax$,
    then algorithm $\CA$ runs in
    time
    \[
      (\allMax + 2)^\tw \cdot 2^{(\allMax)^{\Oh(1)}} \cdot \tw^{\Oh(1)} \cdot |V(G)|.
    \]
\end{restatable}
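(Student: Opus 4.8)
The plan is to run the standard ``dynamic programming on a nice tree decomposition'' for \srDomSet, but with two new ingredients: a structural bound that keeps each DP table of size $c_{\sigma,\rho}^{\tw+1}\cdot(\allMax+\tw)^{\Oh(1)}$, and a compressed encoding of the tables that lets van~Rooij's fast join computation (\cref{thm:vanrooij-intro}) run within that budget. First I would fix the subproblems as in \cref{sec:intro_structuredsets}: for a node $t$, the \emph{type} of a partial solution $S\subseteq V_t$ is the pair $(S\cap X_t,\ (|N(v)\cap S|)_{v\in X_t})$ with each coordinate truncated at $\allMax$, encoded as a string $y\in\allStates^{X_t}$. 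Since we must exploit \cref{la:hamming-distance-parity-mod-p-intro}, alongside each compatible $y$ I would additionally record the set of residues $i\in\fragmentco{0}{\mname}$ for which some witness $S_y$ satisfies $|S_y\setminus X_t|\equiv_{\mname} i$; write $L_i^{(t)}$ for the set of compatible strings attaining residue $i$. As $V_{t_1}\cap V_{t_2}=X_t$ at a join node, the quantity $|S\setminus X_t|$ is additive under gluing, so this extra bookkeeping propagates cleanly. At the root $X_r=\varnothing$, so $G$ has a $(\sigma,\rho)$-set iff the empty string is compatible with $r$.

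Next I would establish the table-size bound $|L_i^{(t)}|\le c_{\sigma,\rho}^{|X_t|}$. Here \cref{la:hamming-distance-parity-mod-p-intro} and the subsequent $\sigma$-defining-set lemma are used: \cref{la:hamming-distance-parity-mod-p-intro} says that within one $L_i^{(t)}$ the $\sigma$-vectors and weight vectors are ``orthogonal'' modulo $\mname$, and then picking an inclusion-minimal $\sigma$-defining set $A\subseteq X_t$ for $L_i^{(t)}$ and $B\coloneqq X_t\setminus A$, the $\sigma$-vector of any string in $L_i^{(t)}$ is determined by its restriction to $A$ (at most $2^{|A|}$ options), and once the $\sigma$-vector is fixed, $\degvec{y}$ modulo $\mname$ is determined by its restriction to $B$. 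A short count over the admissible weight values (in $\fragment{0}{\sigMax}$ at selected vertices, $\fragment{0}{\rhoMax}$ at unselected ones) within each residue class then yields $|L_i^{(t)}|\le(\allMax+1)^{|X_t|}$ whenever $\mname\ge 3$, or $\mname=2$ but not ($\sigMax=\rhoMax$ even); and for $\mname=2$ with $\sigMax=\rhoMax$ even, the extremal configuration is the language $L^{*}=\{y\mid\degvec{y}\position{v}\equiv_{\mname}0\text{ for all }v\}$ of size $(\allMax+2)^{|X_t|}$. In all cases this matches $c_{\sigma,\rho}^{|X_t|}$ from \cref{def:intro:baseOfRunningTime}.

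The heart of the argument — and the step I expect to be the main obstacle — is processing the nodes within this time budget. Leaf, introduce, and forget nodes are routine: each only extends or projects out one coordinate of every stored string (and at a forget node discards strings whose forgotten coordinate is not in $\sigma$ resp.\ $\rho$, and updates the recorded residues by one when that vertex was selected), costing time linear in the already-bounded table size times $(\allMax+\tw)^{\Oh(1)}$. The delicate case is a join node $t$ with children $t_1,t_2$ and $X_t=X_{t_1}=X_{t_2}$: here $L_i^{(t)}=\bigcup_{j+j'\equiv_{\mname}i}\mathrm{join}\bigl(L_j^{(t_1)},L_{j'}^{(t_2)}\bigr)$, and applying the generalized-convolution machinery of \cref{thm:vanrooij-intro} directly would cost $(\sigMax+\rhoMax+2)^{\tw}\cdot(\allMax+\tw)^{\Oh(1)}$, which is too slow. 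Instead I would represent each table in a \emph{compressed} form derived from the partition $(A,B)$ — on positions of $A$ retain essentially only the selection bit, on positions of $B$ retain the full state — so that compressed tables have only $c_{\sigma,\rho}^{\tw+1}\cdot(\allMax+\tw)^{\Oh(1)}$ entries, and then run an adaptation of van~Rooij's convolution on the compressed tables. The crux, as flagged in \cref{sec:intro_structuredsets}, is that the compression must \emph{commute} with the join: two children strings glue into a compatible string for $t$ if and only if their compressions glue into the compression of that string. Guaranteeing this forces us to (i) choose the partition $(A,B)$ consistently across $t_1,t_2,t$ and across all residue pairs $(j,j')$, and (ii) append a constant number of ``checksum'' coordinates to the compressed strings recording the aggregate information over $A$ (such as partial sums of the weights) that the compression discards but that is needed to reject an invalid gluing; verifying this equivalence and that the convolution still runs in the compressed time bound is the technically heaviest part.

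Finally I would assemble the running time: a nice tree decomposition has $\Oh(\tw\cdot n)$ nodes, each processed in $c_{\sigma,\rho}^{\tw}\cdot(\allMax+\tw)^{\Oh(1)}$ time (join nodes via the compressed convolution, all other nodes trivially), and reading off the answer at the root is immediate; this gives total time $(\allMax+1)^{\tw}\cdot(\allMax+\tw)^{\Oh(1)}\cdot|V(G)|$ when $\mname\ge 3$ or $\allMax$ is odd or $\min(\rhoMax,\sigMax)<\allMax$, and $(\allMax+2)^{\tw}\cdot(\allMax+\tw)^{\Oh(1)}\cdot|V(G)|$ in the remaining case $\mname=2$, $\rhoMax=\sigMax=\allMax$ even, which is exactly what \cref{thm:main-alg-m-str} claims. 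Since $(\sigma,\rho)$ being $\mname$-structured for some $\mname\ge2$ forces $\sigma,\rho$ finite, combining this with \cref{thm:vanrooij-intro} for the non-structured case yields \cref{thm:alg-main-intro}.
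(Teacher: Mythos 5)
Your proposal follows essentially the same route as the paper: residue-indexed DP tables $L_{t,i}$ whose sizes are bounded via \cref{la:hamming-distance-parity-mod-p-general} and the $\sigma$-defining-set argument (\cref{la:partition-positions}, \cref{claim:upper-bound}, \cref{lem:upper-bound-nicified}), together with a compressed convolution (\cref{def:comp}, \cref{lem:comp-add}, \cref{thm:join-algorithm}) that uses checksum coordinates so the compression commutes with joining. One small imprecision worth flagging: on positions of the $\sigma$-defining set $A$, the paper's compression retains not merely the selection bit but the \emph{quotient} of the (origin-shifted) weight after division by $\mname$ --- roughly $\lceil(\allMax+1)/\mname\rceil$ values per position --- and this residual quotient is exactly what is needed, together with the two checksums, to reconstruct the weight vector and detect invalid joins; keeping only a selection bit on $A$ would lose too much information to recover the original weight.
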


Observe that \cref{thm:main-alg-m-str} is concerned with the decision version of the problem, and not the counting version.
The reason is that we find it more convenient to first explain the algorithm for the
decision version, and explain afterward how to modify the algorithm for the counting version.
Also observe that, for the decision version, we obtain a linear bound on the running time in terms of the number of vertices, whereas for the counting version, we only have a polynomial bound.
Finally, note that $\allMax$ only depends on $(\sigma,\rho)$,
and thus, the term $2^{(\allMax)^{\Oh(1)}}$ is a fixed constant
for any specific \srDomSet problem.

We prove \cref{thm:main-alg-m-str} in two steps. First, we obtain structural insights and
an upper bound on the number of states that need to be maintained during the run of the dynamic programming algorithm.
Second, we then show how to efficiently merge such states using a fast convolution-based
algorithm.

\subsection{Structural Insights into the \(\mname\)-Structured Case}

In this section, we work with the alphabet \(\allStates \coloneqq \{ \sigma_0, \dots,\sigma_{\sigMax}, \rho_0, \dots, \rho_{\rhoMax} \}\).
Also, recall that \(\sigStates \coloneqq \{ \sigma_0, \dots,\sigma_{\sigMax}\}\) and  \(\rhoStates \coloneqq \{\rho_0, \dots, \rho_{\rhoMax} \}\).
For a graph with portals $(G,U)$, we aim to obtain a (tight) bound on the size of the realized language $L(G,U)$ in terms of \(\sigMax\) and \(\rhoMax\).
Thereby, we also bound the number of states that are required in our
dynamic-programming-based approach for computing a solution for a given graph \(G\).
To that end, we first define certain vectors associated with a string
\(x \in \allStates^n\), essentially decomposing a string into its \(\sigma/\rho\)
component and its ``weight''-component.

To be able to reuse the definition in later sections, we state it for the full alphabet $\allStatesExt$.

\begin{definition}\label{def:vectors}
    For a string \(x \in \allStatesExt^n\), we define
    \begin{itemize}
        \item the \emph{\(\sigma\)-vector of \(x\)} as \(\sigvec{x} \in \{0, 1\}^n\)
            with\[
                \sigvec{x}\vposition{i} \coloneqq \begin{cases}
                                                   1 &\text{if } x\position{i} \in \sigStates,\\
                                                   0 &\text{if } x\position{i} \in \rhoStates.
                                                  \end{cases}\]
        \item the \emph{weight-vector of \(x\)} as \(\degvec{x} \in
                \ZZ_{\geq 0}^n\)
            with\[
                \degvec{x}\vposition{i} \coloneqq c, \quad\text{where } x\position{i} \in \{\sigma_c,
            \rho_c \}.\]
        \item the \emph{\(\mname\)-weight-vector of \(x\)} as \(\mdegvec{x} \in \ZZ_{\mname}^n\)
            with\[
                \mdegvec{x}\vposition{i} \coloneqq \degvec{x}\vposition{i} \bmod{\mname}.\]
    \end{itemize}
    For a language \(L \subseteq \allStatesExt^n\),
    we write \(\sigvec{L} \coloneqq \{ \sigvec{x} \mid x \in L\}\)
    for the set of all \(\sigma\)-vectors of~\(L\),
    we write \(\degvec{L} \coloneqq \{ \degvec{x} \mid x \in L\}\)
    for the set of all weight-vectors of \(L\), and
    we write \(\mdegvec{L} \coloneqq \{ \mdegvec{x} \mid x \in L\}\)
    for the set of all \(\mname\)-weight-vectors of \(L\).

    Finally, for a vector \(\vec{s} \in \{0, 1\}^n\), we define the \emph{capacity of
    \(\vec{s}\)} as \(\capvec{\vec{s}} \in \{0,\dots,\allMax\}^n\) with
    \[\capvec{\vec{s}}\vposition{i} \coloneqq \begin{cases}
                                         \sigMax &\text{if } \vec{s}\vposition{i} = 1,\\
                                         \rhoMax &\text{if } \vec{s}\vposition{i} = 0.
                                        \end{cases}\]
\end{definition}

To bound the size of realized languages, we proceed as follows.
First, we compare two different
partial solutions with respect to a fixed set \(U\) to obtain certain frequency properties of
the characters
of the corresponding string in \(\allStates^{U}\)
(expressed as \(\sigma\)-vectors and \(\mname\)-weight vectors). In a second step, we then
show that there is only a moderate number of strings with said structure.

Fix a graph $G$, a subset of its vertices $U \subseteq V(G)$, and the realized language $L \coloneqq L(G,U) \subseteq \allStates^{U}$.
For a string~$x \in L$ and an integer \(\mname \ge 2\), we define the ordered partition
\[P_{\mname}( x) \coloneqq (X_{\sigma,0},\dots, X_{\sigma,\mname-1},
X_{\rho, 0},\dots,X_{\rho, \mname-1}) \text{ of $U$}\] with\footnote{We chose this slightly
    convoluted-looking definition to simplify our exposition in
    \cref{la:hamming-distance-parity-mod-p-general}.}
\begin{align*}
    X_{\sigma, 0} &\coloneqq \{v \in U \mid \sigvec{x}\vposition{v} = 1 \text{ and }
    \mdegvec{x}\vposition{v} = 0\},\\
        &\dots\\
    X_{\sigma, \mname -1} &\coloneqq \{v \in U \mid \sigvec{x}\vposition{v} = 1 \text{ and }
    \mdegvec{x}\vposition{v} = \mname - 1\},\\
    X_{\rho, 0} &\coloneqq \{v \in U \mid \sigvec{x}\vposition{v} = 0 \text{ and }
    \mdegvec{x}\vposition{v} = 0\},\\
        &\dots\\
    X_{\rho, \mname -1} &\coloneqq \{v \in U \mid \sigvec{x}\vposition{v} = 0 \text{ and }
    \mdegvec{x}\vposition{v} = \mname - 1\}.
\end{align*}

\begin{figure}[t]
    \centering
    \includegraphics[scale=1.3]{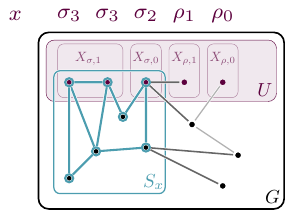}\qquad
    \includegraphics[scale=1.3]{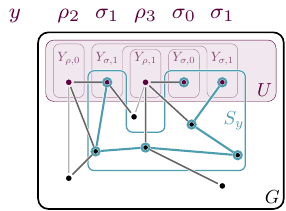}
    \caption{A graph \(G\) with portals \(U\). For \(\rho = \{ 1, 3 \}, \sigma = \{2,
        4\}\) (which are \(2\)-structured) the strings
        \(x = \sigma_3\sigma_3\sigma_2\rho_1\rho_0\) and
        \(y = \rho_2\sigma_1\rho_3\sigma_0\sigma_1\) are compatible with \((G,U)\); the
        corresponding partial solutions \(S_x\) and \(S_y\), as well as the partitions of
        \(U\) are depicted above. We have \(|S_x\setminus U| = |S_y\setminus
        U| = 4\) and \(
            \sigvec{x}\cdot\mdegvec{y} = (1,1,1,0,0) \cdot (0,1,1,0,1) = 2
            \equiv_2 2 = (0,1,0,1,1) \cdot (1,1,0,1,0) = \sigvec{y}\cdot\mdegvec{x}.
        \)
    }\label{fig:lem3-3}
\end{figure}

\begin{lemma}
    \label{la:hamming-distance-parity-mod-p-general}
    Let $(G,U)$ be a graph with portals and let \(L \coloneqq L(G,U) \subseteq \allStates^U\) denote its realized language.
    Also let $x,y \in L$ denote strings with witnesses
    $S_x, S_y \subseteq V(G)$ such that \(|S_{ x} \setminus U| \equiv_{\mname} |S_{ y}
    \setminus U|\).
    Then, \(\sigvec{x}\cdot\mdegvec{y} \equiv_{\mname} \sigvec{y}\cdot\mdegvec{x}\).
\end{lemma}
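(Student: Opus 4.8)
The plan is to count, in two different ways and modulo $\mname$, the quantity
$\mu \coloneqq \sum_{v \in S_y} |N(v) \cap S_x|$. I first note that $\mu = \sum_{u \in S_x} |N(u) \cap S_y|$ as well, since both expressions count the ordered pairs $(p,q)$ with $p \in S_x$, $q \in S_y$, and $pq \in E(G)$.

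First I would evaluate $\mu$ modulo $\mname$ using that $S_x$ is a partial solution with respect to $U$. Split $S_y$ into the three parts $S_y \cap U$, $(S_y \setminus U) \cap S_x$, and $(S_y \setminus U) \setminus S_x$. For $v \in (S_y \setminus U) \cap S_x$ we have $v \in S_x \setminus U$, so \ref{item:realizable-1} gives $|N(v)\cap S_x| \in \sigma$ and hence $|N(v)\cap S_x| \equiv_{\mname} B'$; for $v \in (S_y\setminus U)\setminus S_x$ we have $v \in V(G)\setminus(S_x\cup U)$, so \ref{item:realizable-3} gives $|N(v)\cap S_x|\in\rho$ and hence $|N(v)\cap S_x| \equiv_{\mname} B$; and for $v \in S_y \cap U$ (so $v\in U$), compatibility of $x$ with witness $S_x$ (i.e.\ \ref{item:realizable-2s} and \ref{item:realizable-4s}) forces $|N(v)\cap S_x| = \degvec{x}\vposition{v}$. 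Writing $a \coloneqq |(S_x\cap S_y)\setminus U|$ — a quantity that is manifestly symmetric in $x$ and $y$ — and using $\degvec{x}\vposition{v}\equiv_{\mname}\mdegvec{x}\vposition{v}$ together with $\sum_{v\in S_y\cap U}\degvec{x}\vposition{v} = \sum_{v\in U}\sigvec{y}\vposition{v}\,\degvec{x}\vposition{v} = \sigvec{y}\cdot\degvec{x}$, this yields
\[ \mu \;\equiv_{\mname}\; \sigvec{y}\cdot\mdegvec{x} \;+\; a\,B' \;+\; \bigl(|S_y\setminus U| - a\bigr)\,B \pmod{\mname}. \]
Running the symmetric computation — evaluating $\mu = \sum_{u\in S_x}|N(u)\cap S_y|$ via the partial solution $S_y$ — I obtain, with the \emph{same} $a$,
\[ \mu \;\equiv_{\mname}\; \sigvec{x}\cdot\mdegvec{y} \;+\; a\,B' \;+\; \bigl(|S_x\setminus U| - a\bigr)\,B \pmod{\mname}. \]
Equating the two congruences, the terms $a\,B'$ and $-a\,B$ cancel, and the hypothesis $|S_x\setminus U|\equiv_{\mname}|S_y\setminus U|$ cancels the remaining $B$-terms, leaving $\sigvec{y}\cdot\mdegvec{x}\equiv_{\mname}\sigvec{x}\cdot\mdegvec{y}$, which is exactly the claim.

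This is fundamentally a double-counting argument, so there is no deep obstacle. The points that need care are: (i) sorting the vertices of $S_y$ correctly according to whether they lie in $U$, in $S_x$, or in neither, and invoking the correct one of \ref{item:realizable-1} / \ref{item:realizable-3} for each, where the $\mname$-structuredness of $(\sigma,\rho)$ is what turns membership in $\sigma$ (resp.\ $\rho$) into the fixed residue $B'$ (resp.\ $B$); and (ii) observing that the ``mixed'' count $a = |(S_x\cap S_y)\setminus U|$ enters both evaluations identically and therefore drops out — this cancellation is precisely why the residues $B$ and $B'$ disappear and the final statement is residue-free. The hypothesis $|S_x\setminus U|\equiv_{\mname}|S_y\setminus U|$ is used exactly once, to eliminate the last $B$-term; when $U = V(G)$ it is vacuous, which matches the intuition sketched for \textsc{Perfect Code} in the overview.
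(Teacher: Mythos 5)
Your proof is correct and follows essentially the same approach as the paper: double-count edges between $S_x$ and $S_y$ (equivalently $\mu$), use the partial-solution properties of each witness to pin the $\sigma$- and $\rho$-contributions to the residues $B'$ and $B$, identify the $U$-contribution as $\sigvec{y}\cdot\mdegvec{x}$ (resp.\ $\sigvec{x}\cdot\mdegvec{y}$), and cancel the remaining $B$-term via the hypothesis $|S_x\setminus U|\equiv_{\mname}|S_y\setminus U|$. Your bookkeeping is a little slicker — you decompose $S_y$ directly into three parts and read off $\sum_{v\in S_y\cap U}\degvec{x}\vposition{v}=\sigvec{y}\cdot\degvec{x}$ at once, whereas the paper works through the ordered partitions $P_{\mname}(x),P_{\mname}(y)$ and a double sum over their blocks before arriving at the same inner products — but the argument is the same.
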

\begin{proof}
    Consult \cref{fig:lem3-3} for a visualization of an example.

    In a first step, we count the number of edges between \(S_{ x}\) and \(S_{
    y}\) in two different ways.
    The corresponding ordered partitions are \[P_{\mname}( x)=(X_{\sigma,0},\dots,X_{\sigma,\mname-1}, X_{\rho, 0},\dots,X_{\rho,
        \mname-1}) \;\text{and}\;
    P_{\mname}( y)=(Y_{\sigma,0},\dots,Y_{\sigma,\mname-1},Y_{\rho, 0},\dots,Y_{\rho, \mname-1}).\]
    Recall the integers $B, B' \in
    \fragmentco{0}{\mname}$ with
    \(s \equiv_{\mname} B\) for every \(s \in \sigma\) and
    \(r \equiv_{\mname} B'\) for every \(r \in \rho\).
    Observe that by \ref{item:realizable-3s} from \cref{def:partialsol},
    every vertex in \(V(G) \setminus (U \cup S_{ y})\) has \((B' + \mname\, \ell)\) neighbors in
    \(S_{ y}\) (for some non-negative integer \(\ell\)). In particular, this holds for all
    vertices in \(S_{ x} \setminus (U \cup S_{ y})\).
    Similarly, observe that by \ref{item:realizable-1s},
    every vertex in \((S_{x} \cap S_{y}) \setminus U\) has \((B + \mname\, \ell')\) neighbors in
    \(S_{ y}\) (for some non-negative integer \(\ell'\)).
    Finally, a vertex \(v\) in \(S_{ x} \cap U = \bigcup_{j \in \fragmentco{0}{\mname}}
    X_{\sigma,j}\) has exactly \(i + \mname\, \ell''\) neighbors in  \(S_{ y}\) (for some
    non-negative integer \(\ell''\)) if and only if \(v\) is in one of \(Y_{\rho, i}\) or \(Y_{\sigma, i}\).

    Writing $\vec{E}(X,Y) \coloneqq \{(v,w) \in E(G) \mid v \in X,
    w \in Y\}$, we obtain
    \begin{align*}
        |\vec{E}(S_{x},S_{y})| &\equiv_{\mname} B' \cdot |S_{x} \setminus (U \cup S_{y})|\\
                               &\quad+ B \cdot |(S_{x} \cap S_{y}) \setminus U| \\
                               &\quad+\sum_{i \in \fragmentco{1}{\mname}} \sum_{j \in \fragmentco{0}{\mname}} i\cdot
        \big(|X_{\sigma,j} \cap Y_{\rho, i}| + |X_{\sigma,j} \cap Y_{\sigma, i}|\big)\\
                                   &\equiv_{\mname} B' \cdot (|S_{x} \setminus U| - |S_{x} \cap S_{y}| + |S_{x} \cap S_{y} \cap U|)\\
                                   &\quad+ B \cdot |(S_{x} \cap S_{y}) \setminus U| \\
                                   &\quad+\sum_{i \in \fragmentco{1}{\mname}} \sum_{j \in \fragmentco{0}{\mname}} i\cdot
        \big(|X_{\sigma,j} \cap Y_{\rho, i}| + |X_{\sigma,j} \cap Y_{\sigma, i}|\big).
    \intertext{In a symmetric fashion, we count the edges from \(S_{y}\) to \(S_{x}\):}
        |\vec{E}(S_{ y},S_{ x})| &\equiv_{\mname} B' \cdot (|S_{y} \setminus U| - |S_{y} \cap S_{x}| + |S_{y} \cap S_{x} \cap U|)\\
                                   &\quad+ B \cdot |(S_{y} \cap S_{x}) \setminus U| \\
                                   &\quad+\sum_{i \in \fragmentco{1}{\mname}} \sum_{j \in \fragmentco{0}{\mname}} i\cdot
        \big(|Y_{\sigma,j} \cap X_{\rho,i}| + |Y_{\sigma,j} \cap X_{\sigma,i}|\big).
    \end{align*}
    Now, we combine the previous equations and use the assumption that
    \(|S_{x} \setminus U| \equiv_{\mname} |S_{y} \setminus U|\) to
    obtain
    \[
                                   \sum_{i \in \fragmentco{1}{\mname}} \sum_{j \in
                                   \fragmentco{0}{\mname}} i\cdot
        \big(|X_{\sigma,j} \cap Y_{\rho,i}| + |X_{\sigma,j} \cap Y_{\sigma,i}|\big)
        \equiv_{\mname}  \sum_{i \in \fragmentco{1}{\mname}} \sum_{j \in
                                   \fragmentco{0}{\mname}} i\cdot
        \big(|Y_{\sigma,j} \cap X_{\rho,i}| + |Y_{\sigma,j} \cap X_{\sigma,i}|\big).
    \]
    Next, we unfold the definitions for \(X_{\star,\star}, Y_{\star,\star}\)
    and observe that, for every $i \in \fragmentco{1}{\mname}$, it holds that
    \begin{align*}
        &
        \sum_{j \in \fragmentco{0}{\mname}} i\cdot
        \big(|X_{\sigma,j} \cap Y_{\rho, i}| + |X_{\sigma,j} \cap Y_{\sigma, i}|\big)\\
        &\quad\equiv_{\mname}
         \sum_{j \in \fragmentco{0}{\mname}} i\cdot
        |\{ k \in \fragment{1}{n} \mid  \sigvec{x}\vposition{k} = 1,~
        \mdegvec{x}\vposition{k} = j, \text{ and } \mdegvec{y}\vposition{k}
            = i \}|\\
        &\quad\equiv_{\mname}
          i\cdot
         |\{ k \in \fragment{1}{n} \mid  \sigvec{x}\vposition{k} = 1
            \text{ and } \mdegvec{y}\vposition{k} = i \}|.
    \end{align*}
    It follows that
    \begin{align*}
        &
        \sum_{i \in \fragmentco{1}{\mname}} \sum_{j \in \fragmentco{0}{\mname}} i\cdot
        \big(|X_{\sigma,j} \cap Y_{\rho, i}| + |X_{\sigma,j} \cap Y_{\sigma, i}|\big)\\
        &\quad\equiv_{\mname}
        \sum_{i \in \fragmentco{1}{\mname}}  i\cdot
         |\{ k \in \fragment{1}{n} \mid  \sigvec{x}\vposition{k} = 1
            \text{ and } \mdegvec{y}\vposition{k} = i \}|\\
        &\quad\equiv_{\mname} \sigvec{x} \cdot \mdegvec{y}.
    \end{align*}
    Similarly, we obtain that
    \[\sum_{i \in \fragmentco{1}{\mname}} \sum_{j \in \fragmentco{0}{\mname}} i\cdot
        \big(|Y_{\sigma,j} \cap X_{\rho,i}| + |Y_{\sigma,j} \cap X_{\sigma,i}|\big)
      \equiv_{\mname} \sigvec{y} \cdot \mdegvec{x}.\]
    All together, the claimed
    \(\sigvec{x} \cdot \mdegvec{y} \equiv_{\mname} \sigvec{y} \cdot \mdegvec{x}\) follows, completing the proof.
\end{proof}

With \cref{la:hamming-distance-parity-mod-p-general} in mind, in order to bound the size of a realized language, it suffices (up to a factor of $m$) to bound the size of a language $L \subseteq \allStates^n$ such that $L \times L \subseteq \CR^n$, where
\begin{align}
    \CR^n \coloneqq \{ ( x,  y) \in \allStates^n \times \allStates^n
    \mid \sigvec{x} \cdot \mdegvec{y} \equiv_{\mname} \sigvec{y} \cdot \mdegvec{x}\}.\label{eq:r-def}
\end{align}
Observe that the relation $\CR^n$ is reflexive and symmetric.

We proceed to exploit the relation $\CR^n$ to obtain size bounds for realized languages
(in \cref{sec:singlecompatible}).
Afterward, in \cref{sec:combinationcompatible}, we investigate how said size bounds behave
when combining realized languages.

\subsubsection{Bounding the Size of a Single Realized Language}\label{sec:singlecompatible}

The goal of this section is to show the following result.
\begin{restatable}{theorem}{upperboundmain}
    \label{thm:upper-bound}
    Let $L \subseteq \allStates^n$ denote a language with $L \times L \subseteq \CR^n$.

	If $\mname \geq 3$ or $\allMax$ is odd or $\min\{\sigMax,\rhoMax\} < \allMax$,
	then $|L| \leq (\allMax + 1)^n$.

	If $\mname = 2$, $\allMax$ is even, and $\sigMax = \rhoMax = \allMax$,
	then $|L| \leq (\allMax + 2)^n$.
\end{restatable}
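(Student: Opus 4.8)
The plan is to reduce the language bound to a statement about "$\sigma$-defining" sets and then a counting argument on weight vectors. First I would fix a language $L \subseteq \allStates^n$ with $L \times L \subseteq \CR^n$. The starting observation is that the relation condition $\sigvec{x}\cdot\mdegvec{y} \equiv_{\mname} \sigvec{y}\cdot\mdegvec{x}$ only constrains the \emph{pair} $(\sigvec{x},\mdegvec{x})$. So I would first handle the $\sigma$-vectors: let $A \subseteq \fragment{1}{n}$ be an inclusion-minimal set of positions such that fixing $x\position{A}$'s $\sigma$-vector entries $\sigvec{x}\vposition{A}$ determines $\sigvec{x}$ completely over all $x \in L$; call $A$ a \emph{$\sigma$-defining set} for $L$ and set $B \coloneqq \fragment{1}{n} \setminus A$. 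By minimality, the number of distinct $\sigma$-vectors appearing in $L$ is at most $2^{|A|}$. The key lemma (the second displayed lemma in the overview) says: for $x,y \in L$ with $\sigvec{x} = \sigvec{y}$ and $\mdegvec{x}\vposition{v} = \mdegvec{y}\vposition{v}$ for all $v \in B$, we get $\mdegvec{x}\vposition{v} = \mdegvec{y}\vposition{v}$ for \emph{all} $v \in \fragment{1}{n}$. I would prove this by taking a position $a \in A$, using minimality of $A$ to find a "test" string whose $\sigma$-vector differs from others only at $a$ (more precisely, a vector in $\sigvec{L}$ not determined when $a$ is dropped from $A$), and plugging $x,y$, and this test string into the $\CR^n$ condition to isolate $\mdegvec{x}\vposition{a} - \mdegvec{y}\vposition{a} \pmod \mname$.

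With that lemma in hand, I would bound $|L|$ as follows. Group the strings of $L$ by their $\sigma$-vector; there are at most $2^{|A|}$ groups. Within a group (fixed $\sigvec{x} = \vec{s}$), a string is determined by its weight vector $\degvec{x}$, which lies in $\prod_{i} \{0,\dots,\capvec{\vec{s}}\vposition{i}\}$; and by the lemma, $\degvec{x}$ is determined modulo $\mname$ once $\mdegvec{x}\vposition{B}$ is fixed, hence each $\mname$-residue pattern on $B$ extends to at most $\prod_{i \in \fragment{1}{n}} \lceil (\capvec{\vec{s}}\vposition{i}+1)/\mname \rceil$ choices overall. So the count within one group is at most $\prod_{i \in B} \mname \cdot \prod_{i \in \fragment{1}{n}} \lceil (\capvec{\vec{s}}\vposition{i}+1)/\mname\rceil$, and over all groups and over the position-by-position split into $A$ and $B$ we get a bound of the shape $\prod_{i \in A}\big(2 \cdot \lceil(\capvec{\vec s}\vposition i + 1)/\mname\rceil\big) \cdot \prod_{i \in B}\big(\mname \cdot \lceil(\capvec{\vec s}\vposition i + 1)/\mname\rceil\big)$, taking the worst case of $\capvec{\vec s}\vposition{i} \in \{\sigMax,\rhoMax\}$ at each coordinate. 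The final step is the elementary arithmetic: verify that for every $N \in \{0,\dots,\allMax\}$ and every $\mname \ge 2$, both $2\lceil(N+1)/\mname\rceil \le \allMax + 1$ and $\mname\lceil(N+1)/\mname\rceil \le \allMax + 1$ hold, \emph{except} precisely in the excluded case $\mname = 2$, $\allMax$ even, $N = \allMax$, where $2 \cdot \lceil(\allMax+1)/2\rceil = \allMax+2$. This is where the dichotomy in the statement comes from: when $\mname = 2$ and $\allMax = \sigMax = \rhoMax$ is even, taking $A = \fragment{1}{n}$ forces the $(\allMax+2)^n$ bound and it cannot be beaten (the language $L^*$ from the overview witnesses tightness), whereas for $\mname \ge 3$ one checks $\mname\lceil(N+1)/\mname\rceil \le \mname\lceil\mname/\mname\rceil = \mname \le \allMax+1$ when $N \le \mname - 1$ and more generally the ceiling never pushes past $\allMax+1$; and for $\mname = 2$ with $\allMax$ odd or with $\min(\sigMax,\rhoMax) < \allMax$, a short case analysis (split on whether a coordinate's cap is $\sigMax$ or $\rhoMax$, use that at least one of them is $< \allMax$ or that $\allMax$ odd makes $\lceil(\allMax+1)/2\rceil = (\allMax+1)/2$) gives $\le (\allMax+1)^n$.

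The main obstacle I anticipate is the $\sigma$-defining-set lemma: making the notion of "inclusion-minimal determining set" precise and correctly extracting, for a chosen $a \in A$, a string in $L$ whose $\sigma$-vector behaves like an indicator of $a$ against the others — this is really a linear-algebra-over-$\FF_2$ argument (the $\sigma$-vectors of $L$ span a space, $A$ indexes a "basis of coordinates", and one needs the dual-basis-style vector). One has to be careful that we only have the $\sigma$-\emph{vectors} of strings in $L$, not arbitrary $\FF_2$-combinations of them, so the test vector must itself appear as $\sigvec{x}$ for some genuine $x \in L$; handling this may require first passing to a sublanguage or arguing that the set of $\sigma$-vectors is closed under the operations needed (or, more robustly, re-deriving the needed cancellation directly from two strings in $L$ together with the $\CR^n$ relation, using that $\CR^n$ is a "bilinear-mod-$\mname$" condition). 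The rest — the grouping, the per-coordinate product bound, and the final ceiling inequalities — should be routine once the lemma is nailed down.
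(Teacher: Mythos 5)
Your overall route is the paper's: pick a $\sigma$-defining set $A$ (the paper's $S$), prove that within a fixed $\sigma$-vector the $\mname$-weight vector on $B\coloneqq \fragment1n\setminus A$ determines it everywhere (\cref{la:partition-positions}), then group strings by $\sigma$-vector and count. Your sketch of the key lemma (test vectors from minimality, the bilinear-mod-$\mname$ structure of $\CR^n$) also matches the paper's witness-vector argument.

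The final counting step, however, has a concrete gap. You bound the per-group count by (number of $\mname$-residue patterns on $B$) times (lifts over all of $\fragment1n$), i.e.\ $\mname^{|B|}\cdot\prod_{i}\lceil(c_i+1)/\mname\rceil$ where $c_i$ is the cap at coordinate $i$, and then distribute this as a per-$B$-coordinate factor of $\mname\lceil(c_i+1)/\mname\rceil$. You then assert $\mname\lceil(N+1)/\mname\rceil\le\allMax+1$ for all $N\le\allMax$. That inequality is false in general: take $\mname=3$, $\sigma=\{3\}$, $\rho=\{0,3\}$, so $\sigMax=\rhoMax=\allMax=3$; for $N=3$ you get $3\lceil 4/3\rceil=6>4=\allMax+1$. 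Concretely, with $n=1$ and $A=\emptyset$ your method yields only $|L|\le 6$, which does not prove the theorem's bound $(\allMax+1)^n=4$. The root of the problem is that on $B$-positions you first reduce the weight to its residue and then lift it back, paying $\mname\lceil(c_i+1)/\mname\rceil\ge c_i+1$ per coordinate; but on $B$-positions no modular reduction is needed at all — the weight value there should be counted directly, contributing $c_i+1\le\allMax+1$. This is what the paper does in \cref{claim:upper-bound}: the complement positions contribute $(\allMax+1)^{n-|S|}$, and the ceiling factors $\lceil(\cdot+1)/\mname\rceil$ appear only on the $\sigma$-defining positions $S$. Once you make that fix, the per-group bound becomes $\prod_{i\in B}(c_i+1)\cdot\prod_{i\in A}\lceil(c_i+1)/\mname\rceil$; summing over $\sigma$-vectors (at most $\binom{|A|}{k}$ with $k$ ones on $A$) collapses to $(\allMax+1)^{n-|A|}\bigl(\lceil(\sigMax+1)/\mname\rceil+\lceil(\rhoMax+1)/\mname\rceil\bigr)^{|A|}$, and your ceiling case analysis (the paper's \cref{lem:upper-bound-nicified}) then closes the argument.
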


Note that the bounds of \cref{thm:upper-bound} are essentially optimal; consider the
following example.

\begin{example}\label{exa:large-structured-languages}
    Consider the languages
    \begin{itemize}
        \item $L_1 \coloneqq \rhoStates^n = \{v \in \allStates^n \mid \sigvec{v} = 0\}$,
        \item $L_2 \coloneqq\{v \in \sigStates^n \mid \sum_{\ell \in \nset{n}} \degvec{v}\position{\ell} \equiv_{\mname} 0\}$, and
		\item $L_3 \coloneqq\{v \in \allStates^n \mid \degvec{v}\position{\ell} \equiv_{\mname} 0 \text{ for all } \ell \in \fragment1n\}$.
	\end{itemize}
	It is straightforward to see that $L_i \times L_i \subseteq \CR^n$ for all $i \in \{1,2,3\}$.
	We have that $|L_1| = (\rhoMax + 1)^n$, $|L_2| \geq (\sigMax + 1)^{n-1}$ and
	\[|L_3| = \left(\left\lceil\frac{\sigMax + 1}{\mname}\right\rceil
                + \left\lceil\frac{\rhoMax + 1}{\mname}\right\rceil\right)^n.\]
	Observe that $|L_3| = (\allMax + 2)^n$ if $\mname = 2$, $\allMax$ is even, and $\rhoMax = \sigMax = \allMax$.
	In all other cases, $|L_3| \leq (\allMax + 1)^n$.
	In particular, the language $L_3$ indicates why the case $\mname = 2$
  with even $\sigMax = \rhoMax = \allMax$ stands out.
\end{example}

Toward proving \cref{thm:upper-bound}, we start by showing that,
for strings with the same \(\sigma\)-vector, the difference of their \(\mname\)-weight-vectors
is ``orthogonal'' to the \(\sigma\)-vector of any other string.

\begin{lemma}
    \label{la:inner-product-zero}
    Let $L \subseteq \allStates^n$ denote a language with $L \times L \subseteq \CR^n$.
    For any three strings $v,w,z \in L$ with $\sigvec{v} = \sigvec{w}$, we have
    \[\big(\mdegvec{v} - \mdegvec{w}\big) \cdot \sigvec{z} \equiv_m 0.\]
\end{lemma}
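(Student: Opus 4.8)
The plan is to use the definition of the relation $\CR^n$ from \eqref{eq:r-def} directly, together with the bilinearity of the dot product over $\ZZ_\mname$. The key observation is that the hypothesis $\sigvec{v} = \sigvec{w}$ means the two strings $v,w$ only differ in their weight-vectors, so pairing each of them against a fixed third string $z$ and subtracting will make the $\sigma$-vector contribution cancel.

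Concretely, first I would invoke $L \times L \subseteq \CR^n$ for the pairs $(v,z)$ and $(w,z)$. By the definition of $\CR^n$ this gives
\[
\sigvec{v}\cdot\mdegvec{z} \equiv_\mname \sigvec{z}\cdot\mdegvec{v}
\qquad\text{and}\qquad
\sigvec{w}\cdot\mdegvec{z} \equiv_\mname \sigvec{z}\cdot\mdegvec{w}.
\]
Subtracting these two congruences and using bilinearity of the dot product modulo $\mname$ yields
\[
\big(\sigvec{v}-\sigvec{w}\big)\cdot\mdegvec{z} \equiv_\mname \sigvec{z}\cdot\big(\mdegvec{v}-\mdegvec{w}\big).
\]
Finally, since $\sigvec{v}=\sigvec{w}$, the left-hand side is $0$, so $\sigvec{z}\cdot(\mdegvec{v}-\mdegvec{w})\equiv_\mname 0$, which is exactly the claim (up to the harmless reordering of the factors in the dot product).

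There is no real obstacle here: the statement is a one-line algebraic consequence of the definitions, and the only thing to be slightly careful about is that all arithmetic is carried out modulo $\mname$ (so that the $\mname$-weight-vectors, which live in $\ZZ_\mname^n$, can be manipulated freely), and that the dot product is symmetric, so $\sigvec{z}\cdot(\mdegvec{v}-\mdegvec{w})$ and $(\mdegvec{v}-\mdegvec{w})\cdot\sigvec{z}$ agree. I would write the proof in three short sentences.
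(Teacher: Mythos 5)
Your proposal is correct and follows exactly the same approach as the paper's proof: apply the definition of $\CR^n$ to the two pairs $(v,z)$ and $(w,z)$, subtract, and use $\sigvec{v} = \sigvec{w}$ to cancel. The only cosmetic difference is that the paper first concludes $\sigvec{z}\cdot\mdegvec{v} \equiv_\mname \sigvec{z}\cdot\mdegvec{w}$ and then rearranges, whereas you subtract first; these are the same computation.
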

\begin{proof}
    Fix strings \(v, w, z \in L\).
    By the definition of \(\CR^n\), we have
    \begin{equation*}
        \sigvec{v} \cdot \mdegvec{z} \equiv_{\mname} \sigvec{z} \cdot \mdegvec{v}
        \quad\text{and}\quad
        \sigvec{w} \cdot \mdegvec{z} \equiv_{\mname} \sigvec{z} \cdot \mdegvec{w}.
    \end{equation*}
    Using the assumption that $\sigvec{v} = \sigvec{w}$, we conclude that
    \[ \sigvec{z} \cdot \mdegvec{v} \equiv_{\mname} \sigvec{z} \cdot \mdegvec{w},\]
    which yields the claim after rearranging.
\end{proof}

Next, we explore the implications of \cref{la:inner-product-zero}.
Intuitively, we show that, for a language \(L\) of strings of length $n$,
each of the $n$ positions contributes either to vectors from \(\sigvec{L}\) or to vectors from
\(\mdegvec{L}\).
Formally, let us start with the notion of a \(\sigma\)-defining set.

\begin{definition}[$\sigma$-defining set]\label{def:sig-set}
    \label{rem:partition-depends-on-sigma-vectors}
    Let $L \subseteq \allStates^n$.
    A set \(S \subseteq \fragment1n\) is \emph{\(\sigma\)-defining for \(\sigvec{L}\)}
    if \(S\) is an inclusion-minimal set of positions that
    uniquely characterize the \(\sigma\)-vectors of the strings in \(L\), that is,
    for all \(u, v \in L\),
    we have
    \begin{equation}
        \label{item:partition-positions-sig}
        \sigvec{u}\vposition{S} = \sigvec{v}\vposition{S}
        \quad\Longrightarrow\quad
        \sigvec{u} = \sigvec{v}.
    \end{equation}
\end{definition}
\begin{remark}\label{rem:witness}
    As a \(\sigma\)-defining \(S\) is (inclusion-)minimal,
    observe that, for each
    position \(i \in S\), there are pairs of witness vectors \(w_{1,i}, w_{0,i} \in
    \sigvec{L}\) that  differ (on \(S\))  only at position \(i\),
    with \(w_{1,i}\vposition{i} = 1\), that is,
    \begin{itemize}
        \item $w_{1,i}\vposition{S \setminus i}
            = w_{0,i}\vposition{S \setminus i}$,
        \item $w_{1,i}\vposition{i} = 1$, and
        \item $w_{0,i}\vposition{i} = 0$.
    \end{itemize}
    We write \(\CW_{S} \coloneqq \{ w_{1,i}, w_{0,i} \mid i \in S \}\) for a set of witness
    vectors for \(\sigvec{L}\). Note that, as \(S\) itself, the witness
    vectors \(\CW_{S}\) do not directly depend on strings in \(L\), but only on
    the \(\sigma\)-vectors of \(L\).
\end{remark}

\begin{lemma}
    \label{la:partition-positions}
    Let $L \subseteq \allStates^n$ denote a language with $L \times L \subseteq \CR^n$
    and let \(S\) denote a \(\sigma\)-defining set for \(L\).

    Then, for any two strings \(u, v\in L\) with \(\sigvec{u} = \sigvec{v}\), the
    remaining positions \(\bar{S} \coloneqq \fragment1n \setminus S\) uniquely characterize the
    \(\mname\)-weight vectors of \(u\) and \(v\), that is, we have
    \begin{equation}
        \label{item:partition-positions-ind}
        \mdegvec{u}\vposition{\bar{S}} = \mdegvec{v}\vposition{\bar{S}}
        \quad\Longrightarrow\quad
        \mdegvec{u} = \mdegvec{v}.
    \end{equation}
\end{lemma}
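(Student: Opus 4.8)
The plan is to reduce everything to \cref{la:inner-product-zero} and the witness vectors from \cref{rem:witness}. Fix two strings $u,v\in L$ with $\sigvec{u}=\sigvec{v}$ and $\mdegvec{u}\vposition{\bar{S}}=\mdegvec{v}\vposition{\bar{S}}$, and set $\delta\coloneqq\mdegvec{u}-\mdegvec{v}\in\ZZ_{\mname}^n$. By the hypothesis on $\bar{S}$, the vector $\delta$ vanishes outside $S$, i.e.\ $\delta\vposition{j}\equiv_{\mname}0$ for every $j\in\bar{S}$. The goal is to show $\delta\equiv_{\mname}0$, which is exactly $\mdegvec{u}=\mdegvec{v}$.

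First I would record the orthogonality consequence of \cref{la:inner-product-zero}: applying it with the two strings $u,v$ (which have equal $\sigma$-vector) and any third string $z\in L$ in the role of $z$, we get $\delta\cdot\sigvec{z}\equiv_{\mname}0$. Hence $\delta$ is orthogonal modulo $\mname$ to every $\sigma$-vector in $\sigvec{L}$.

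Next I would invoke the witness vectors $\CW_{S}=\{w_{1,i},w_{0,i}\mid i\in S\}\subseteq\sigvec{L}$ provided by \cref{rem:witness}. For a fixed $i\in S$, both $w_{1,i}$ and $w_{0,i}$ are $\sigma$-vectors of strings in $L$, so the previous step gives $\delta\cdot w_{1,i}\equiv_{\mname}0$ and $\delta\cdot w_{0,i}\equiv_{\mname}0$, hence $\delta\cdot(w_{1,i}-w_{0,i})\equiv_{\mname}0$. Now, $w_{1,i}$ and $w_{0,i}$ agree on $S\setminus\{i\}$ and satisfy $w_{1,i}\vposition{i}=1$, $w_{0,i}\vposition{i}=0$, so $(w_{1,i}-w_{0,i})\vposition{S}$ is the $i$-th standard basis vector; since $\delta$ vanishes on $\bar{S}$, only the coordinates in $S$ contribute to the dot product, and we conclude $\delta\vposition{i}\equiv_{\mname}0$. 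Ranging over all $i\in S$ and combining with $\delta\vposition{\bar{S}}\equiv_{\mname}0$ yields $\delta\equiv_{\mname}0$, as required.

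The only mildly delicate point is the bookkeeping in the last step: $w_{1,i}$ and $w_{0,i}$ may differ arbitrarily on $\bar{S}$, but this is harmless precisely because $\delta$ is zero there, so any contribution of $\bar{S}$ to $\delta\cdot(w_{1,i}-w_{0,i})$ cancels. Everything else is a direct application of the two already-established lemmas, so I do not anticipate a genuine obstacle — the content of the statement is essentially the observation that ``$\delta$ supported on $S$ and orthogonal to all $\sigma$-vectors'' forces $\delta$ to vanish, which the witness vectors make precise.
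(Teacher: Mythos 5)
Your proof is correct and follows essentially the same route as the paper's: you apply \cref{la:inner-product-zero} to show that $\delta = \mdegvec{u} - \mdegvec{v}$ is orthogonal mod $\mname$ to $w_{1,i}$ and $w_{0,i}$ (this is the paper's Claim, labeled there as \cref{cl:partition-1}), and then observe that the dot product $\delta \cdot (w_{1,i} - w_{0,i})$ collapses to the single coordinate $i$ because $\delta$ vanishes on $\bar{S}$ while $w_{1,i}-w_{0,i}$ vanishes on $S\setminus\{i\}$ (this is the paper's \cref{cl:partition-2}). The only difference is cosmetic — you introduce $\delta$ up front and phrase the reduction as ``orthogonal to all $\sigma$-vectors, supported on $S$, hence zero'' — but the underlying computation is the same.
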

\begin{proof}
    Let $S \subseteq \fragment1n$ denote a \(\sigma\)-defining set for \(\sigvec{L}\)
    with witness vectors \(\CW_{S}\) (see \cref{rem:witness}),
    and consider the set \(\bar{S} \coloneqq \fragment1n \setminus S\).
    We proceed to show that \eqref{item:partition-positions-ind} is satisfied.
    To that end, let $u, v \in L$ denote strings with $\sigvec{u} = \sigvec{v}$ and
    \(\mdegvec{u}\vposition{\bar{S}} = \mdegvec{v}\vposition{\bar{S}}\).
    We need to argue that $\mdegvec{u} = \mdegvec{v}$, and, in particular, that
    $\mdegvec{u}\vposition{S} = \mdegvec{v}\vposition{S}$.
    Hence, we proceed to show that, for every \(i \in S\), we have
    \(\mdegvec{u}\vposition{i} = \mdegvec{v}\vposition{i}\).

    Now, fix a position \(i \in S\) and corresponding witness vectors \(w_{1,i}, w_{0,i}
    \in \CW_{S}\).
    We proceed by showing two immediate equalities.
    \begin{claim}\label{cl:partition-1}
        The strings \(u, v, w_{1,i}\), and \(w_{0,i}\) satisfy
        \[\big(\mdegvec{u} - \mdegvec{v}\big) \cdot \big(w_{1,i} -
            {w_{0,i}}\big)
        \equiv_{\mname} 0.\]
    \end{claim}
    \begin{claimproof}
        By \cref{la:inner-product-zero}, we have that
        \begin{align*}
            \big(\mdegvec{u} - \mdegvec{v}\big) \cdot \big({w_{1,i}} -
            {w_{0,i}}\big)
            &\equiv_{\mname}
            \big(\mdegvec{u} - \mdegvec{v}\big) \cdot {w_{1,i}}
            - \big(\mdegvec{u} - \mdegvec{v}\big) \cdot {w_{0,i}} \\
            &\equiv_{\mname} 0 - 0 \\
            &\equiv_{\mname} 0,
        \end{align*}
        which completes the proof.
    \end{claimproof}

    \begin{claim}\label{cl:partition-2}
        The strings \(u, v, w_{1,i}\), and \(w_{0,i}\) satisfy
        \[\big(\mdegvec{u} - \mdegvec{v}\big) \cdot \big({w_{1,i}} -
            {w_{0,i}}\big)
        \equiv_{\mname}
        \big(\mdegvec{u}\vposition{i} - \mdegvec{v}\vposition{i}\big) \cdot
        \big({w_{1,i}}\vposition{i} - {w_{0,i}}\vposition{i}\big).
    \]
    \end{claim}
    \begin{claimproof}
        Observe that, by assumption, for every component \(j \in \bar{S}\), we have
        \(\mdegvec{u}\vposition{j} = \mdegvec{v}\vposition{j}\).
        Observe further that, by the definitions of \(i\) and \(S\), for every component
        \(j \in S\setminus i\), we have
        \({w_{1,i}}\vposition{j} = {w_{0,i}}\vposition{j}\), which
        yields the claim.
    \end{claimproof}

    Now, combining \cref{cl:partition-1,cl:partition-2} yields
    \begin{align*}
        0
        &\equiv_{\mname}
        \big(\mdegvec{u} - \mdegvec{v}\big) \cdot \big({w_{1,i}} - {w_{0,i}}\big)\\
        &\equiv_{\mname}
        \big(\mdegvec{u}\vposition{i} - \mdegvec{v}\vposition{i}\big)
        \cdot \big({w_{1,i}}\vposition{i} - {w_{0,i}}\vposition{i}\big)\\
        &\equiv_{\mname}
        \big(\mdegvec{u}\vposition{i} - \mdegvec{v}\vposition{i}\big)
        \cdot \big(1 - 0\big)\\
        &\equiv_{\mname}
        \mdegvec{u}\vposition{i} - \mdegvec{v}\vposition{i}.
    \end{align*}

    In other words, $\mdegvec{u}\vposition{i} = \mdegvec{v}\vposition{i}$
    for all $i \in S$, which yields \eqref{item:partition-positions-ind}, and hence, the claim.
\end{proof}

As a direct consequence of \cref{la:partition-positions}, we obtain a first upper bound on
the size of languages \(L \subseteq \allStates^n\) with $L \times L \subseteq \CR^n$.

\begin{corollary}
    \label{claim:upper-bound}
    Let $L \subseteq \allStates^n$ denote a language with $L \times L \subseteq \CR^n$,
    and let \(S\) denote a \(\sigma\)-defining set for \(\sigvec{L}\).
    Then, we have
    \[|L| \leq
        \smash{(\allMax + 1)^{n - |S|} \cdot \sum_{k = 0}^{|S|} \binom{|S|}{k}
    \left\lceil\frac{\sigMax + 1}{\mname}\right\rceil^{k} \left\lceil\frac{\rhoMax +
    1}{\mname}\right\rceil^{|S|-k}}. \raisebox{-2ex}{\null} \]
\end{corollary}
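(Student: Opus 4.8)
The plan is to reconstruct every string $x \in L$ from a small amount of data and to count the number of possibilities for each piece. Recall that a string $x \in \allStates^n$ is completely determined by the pair $(\sigvec{x}, \degvec{x})$: position $i$ carries a $\sigma$-state iff $\sigvec{x}\vposition{i} = 1$, and in either case its index equals $\degvec{x}\vposition{i}$. So it suffices to bound, first the number of $\sigma$-vectors that occur in $L$ (split conveniently by weight on $S$), and then, for each such $\sigvec{x}$, the number of compatible weight-vectors $\degvec{x}$; here is where \cref{la:partition-positions} enters.

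Concretely, I would proceed in three steps. \emph{(Step 1, the $\sigma$-vector.)} Since $S$ is $\sigma$-defining for $\sigvec{L}$, fixing $\sigvec{x}\vposition{S}$ determines $\sigvec{x}$ entirely. Grouping the choices of $\sigvec{x}\vposition{S} \in \{0,1\}^S$ by the number $k$ of positions $i \in S$ with $\sigvec{x}\vposition{i} = 1$, there are $\binom{|S|}{k}$ such restrictions for each $k \in \fragment{0}{|S|}$. \emph{(Step 2, the weight-vector off $S$.)} With $\sigvec{x}$ fixed, for each $i \in \bar S \coloneqq \fragment1n \setminus S$ the value $\degvec{x}\vposition{i}$ lies in $\{0,\dots,\capvec{\sigvec{x}}\vposition{i}\} \subseteq \{0,\dots,\allMax\}$, giving at most $(\allMax+1)^{n - |S|}$ choices for $\degvec{x}\vposition{\bar S}$; this also fixes $\mdegvec{x}\vposition{\bar S}$. \emph{(Step 3, the weight-vector on $S$.)} Having fixed $\sigvec{x}$ and $\mdegvec{x}\vposition{\bar S}$, \cref{la:partition-positions} forces $\mdegvec{x}$ to be uniquely determined among the strings of $L$ with this $\sigma$-vector; in particular $\mdegvec{x}\vposition{S}$ is fixed. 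It remains to lift each residue $\mdegvec{x}\vposition{i}$ to an actual index $\degvec{x}\vposition{i} \in \{0,\dots,\capvec{\sigvec{x}}\vposition{i}\}$, and the number of integers in $\{0,\dots,M\}$ congruent to a fixed residue mod $\mname$ is at most $\lceil (M+1)/\mname\rceil$. Hence each of the $k$ positions $i \in S$ with $\sigvec{x}\vposition{i} = 1$ contributes at most $\lceil (\sigMax + 1)/\mname\rceil$ choices and each of the remaining $|S| - k$ positions at most $\lceil (\rhoMax+1)/\mname\rceil$, for a total of at most $\lceil (\sigMax+1)/\mname\rceil^{k}\,\lceil (\rhoMax+1)/\mname\rceil^{|S|-k}$.

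Multiplying the three counts and summing over $k$ yields exactly the claimed bound. The only point that needs care — and the main (if mild) obstacle — is the order of conditioning: \cref{la:partition-positions} pins down $\mdegvec{x}\vposition{S}$ only once $\sigvec{x}$ is fixed, so one must first enumerate $\sigvec{x}$, then $\mdegvec{x}\vposition{\bar S}$, then read off $\mdegvec{x}\vposition{S}$, and only then enumerate the actual indices on $S$; the residue-counting estimate $\lceil (M+1)/\mname\rceil$ is elementary. The statement is vacuous if $L = \emptyset$, and when $\sigvec{L}$ is a singleton we have $S = \emptyset$ and the sum collapses to $(\allMax+1)^n$, consistent with the formula.
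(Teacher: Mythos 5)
Your proof is correct and follows essentially the same route as the paper's: split $L$ according to the Hamming weight $k$ of $\sigvec{x}\vposition{S}$ (contributing $\binom{|S|}{k}$ for the $\sigma$-vector), enumerate the weight vector on $\bar S$ (at most $(\allMax+1)^{n-|S|}$ choices, which also fixes $\mdegvec{x}\vposition{\bar S}$), invoke \cref{la:partition-positions} to pin down $\mdegvec{x}\vposition{S}$, and finally lift each residue on $S$ to an actual index, costing $\lceil(\sigMax+1)/\mname\rceil$ or $\lceil(\rhoMax+1)/\mname\rceil$ per position. Your note on the order of conditioning --- that $\sigvec{x}$ must be fixed before the lemma applies --- is exactly the point the paper's proof also organizes around.
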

\begin{proof}
    For a $k \in \{0,\dots,|S|\}$, write $L_k$ to denote the set of
    all strings $x \in L$ such that $\sigvec{x}\vposition{S}$ has a Hamming-weight of
    exactly \(k\), that is, $\sigvec{x}\vposition{S}$ contains exactly $k$
    entries equal to $1$.

    As \(L\) decomposes into the different sets \(L_k\), we obtain $|L| = \sum_{k = 0}^{|S|} |L_k|$.
    Hence, it suffices to show that, for each \(k \in \{0,\dots,|S|\}\), we have
    \begin{equation}\label{eq:claim:upper-bound-1}
    |L_k| \leq
    (\allMax + 1)^{n - |S|} \cdot \binom{|S|}{k} \cdot
    \left\lceil\frac{\sigMax + 1}{\mname}\right\rceil^{k} \cdot \left\lceil
        \frac{\rhoMax +  1}{\mname}\right\rceil^{|S|-k}.
    \end{equation}

    We proceed to argue that Inequality~(\ref{eq:claim:upper-bound-1}) does indeed hold.
    To that end, fix a \(k \in \{0,\dots,|S|\}\), and
    observe that a string $x \in L$ (and hence, \(x \in L_k\))
    is uniquely determined by its $\sigma$-vector
    $\sigvec{x}$ and its weight vector $\degvec{x}$ (where elements are not taken modulo
    $\mname$).
    Note that as \(L \times L \subseteq \CR^n\),
    not all pairs of \(\sigma\)-vectors and weight-vectors correspond to a string in \(L_k\).
    Hence, we write
    \[
        |L_k| \le
        \sum_{\vec{s} \in \sigvec{L_k}}
        |\{ \degvec{x} \mid x \in L_k  \text{ and } \sigvec{x} = \vec{s}  \}|.
    \]

    Now, as \(S\) is \(\sigma\)-defining for \(\sigvec{L}\) (and hence, for \(\sigvec{L_k}
    \subseteq \sigvec{L}\)),
    for each \(\sigma\)-vector for \(L_{k}\)
    when restricted to the positions \(S\),
    there is exactly
    one \(\sigma\)-vector for \(L_{k}\) (on all positions).
    In particular,
    the number of different \(\sigma\)-vectors of \(L_k\) is equal to the number of
    different \(\sigma\)-vectors on the positions~\(S\). Further, by construction of
    \(L_k\), all \(\sigma\)-vectors on \(S\) have Hamming-weight exactly \(k\).
    Hence, we obtain\[
        |\sigvec{L_k}| =
        |\{ \sigvec{x} \mid x \in L_k \}| = |\{ \sigvec{x}\vposition{S} \mid x \in L_k \}|
        \le \binom{|S|}{k}.
    \]

    Now, fix a \(\sigma\)-vector \(\vec{s} \in \sigvec{L_k}\), and write
    \(L_{k,\vec{s}} \coloneqq \{x \in L_k \mid \sigvec{x} = \vec{s} \}\) for all strings in \(L_k\)
    with \(\sigma\)-vector \(\vec{s}\). By \cref{la:partition-positions},
    for each \(\mname\)-weight vector for \(L_{k,\vec{s}}\)
    when restricted to the positions \(\bar{S} \coloneqq \fragment{1}{n} \setminus S\),
    there is exactly
    one \(\mname\)-weight vector for \(L_{k,\vec{s}}\) (on all positions):
    \begin{equation}\label{eq:claim:upper-bound-2}
        |\{ \mdegvec{x} \mid x \in L_{k,\vec{s}} \text{ and }
        \mdegvec{x}\vposition{\bar{S}} = u \}| = 1.
    \end{equation}
    Hence, it remains to count weight-vectors instead of \(\mname\)-weight vectors.
    To that end, for each possible weight-vector on \(\bar{S}\), we count the possible
    extensions into a weight-vector on all positions.
    Writing \(u_{\mname}\) for the \(\mname\)-weight vector corresponding to a weight
    vector \(u\), we obtain
    \begin{align*}
        |\degvec{L_{k,\vec{s}}}| &
        \le
        \sum_{u \in \{ \degvec{x}\vposition{\bar{S}} \mid x \in L_{k,\vec{s}} \}}
        |\{\degvec{x}\vposition{S} \mid x \in L_{k,\vec{s}} \text{ and }
        \degvec{x}\vposition{\bar{S}} = u\}|\\
                                             &\le
                                             \sum_{u \in \{ \degvec{x}\vposition{\bar{S}} \mid x \in L_{k,\vec{s}} \}}
        |\{\degvec{x}\vposition{S} \mid x \in L_{k,\vec{s}} \text{ and }
        \mdegvec{x}\vposition{\bar{S}} = u_{\mname}\}|.
    \end{align*}

    Finally, we bound \(|\{\degvec{x}\vposition{S} \mid x \in L_{k,\vec{s}} \text{ and }
    \mdegvec{x}\vposition{\bar{S}} = u_{\mname}\}|\). To that end, observe that by
    \cref{eq:claim:upper-bound-2}, the corresponding \(\mname\)-weight vector is unique.
    Hence, we need to bound only the number of different weight vectors (on \(S\))
    that result in the same \(\mname\)-weight vector (on \(S\)). By construction, on the
    positions \(S\), the string \(x\) contains exactly \(k\) characters
    \(\sigma_{\star}\)---for each such position,
    there are at most \(\lceil (\sigMax + 1)
    / \mname \rceil\) different characters having the same \(\mname\)-weight vector;
    for each of the remaining \(|S| - k\) positions,
    there are at most \(\lceil (\rhoMax + 1)
    / \mname \rceil\) different characters having the same \(\mname\)-weight vector.
    This yields
    \begin{align*}
        &|\{\degvec{x}\vposition{S} \mid x \in L_{k,\vec{s}} \text{ and }
        \mdegvec{x}\vposition{\bar{S}} = u_{\mname}\}|
        \le
        \left\lceil\frac{\sigMax + 1}{\mname}\right\rceil^{k}
        \left\lceil\frac{\rhoMax + 1}{\mname}\right\rceil^{|S|-k}.
    \end{align*}

    Combining the previous steps with the final observation that
    \[|\{ \degvec{x}\vposition{\bar{S}} \mid x \in L_{k,\vec{s}} \}| \le (\allMax + 1)^{|\bar{S}|}
    = (\allMax + 1)^{n - |S|},\]
    we obtain the claimed Inequality~(\ref{eq:claim:upper-bound-1}):
    \begin{align*}
        |L_k| &\le
            \sum_{\vec{s} \in \sigvec{L_k}}
            |\degvec{L_{k,\vec{s}}}|\\
              &\le \binom{|S|}{k} \cdot
              \sum_{u \in \{ \degvec{x}\vposition{\bar{S}} \mid x \in L_{k,\vec{s}} \}}
        |\{\degvec{x}\vposition{S} \mid x \in L_{k,\vec{s}} \text{ and }
        \mdegvec{x}\vposition{\bar{S}} = u_{\mname}\}|\\
              &\le (\allMax + 1)^{n - |S|} \cdot \binom{|S|}{k} \cdot
    \left\lceil\frac{\sigMax + 1}{\mname}\right\rceil^{k} \cdot \left\lceil
        \frac{\rhoMax +  1}{\mname}\right\rceil^{|S|-k}.
    \end{align*}
    Overall, this yields the desired bound.
\end{proof}

In a final step before proving \cref{thm:upper-bound}, we tidy up the unwieldy upper bound
from \cref{claim:upper-bound}.

\begin{lemma}\label{lem:upper-bound-nicified}
    For any non-negative integers \(n\) and \(a \in \fragment0n\), we have
    \begin{align*}
        &{(\allMax + 1)^{n - a} \cdot \sum_{k = 0}^{a} \binom{a}{k}
    \left\lceil\frac{\sigMax + 1}{\mname}\right\rceil^{k} \left\lceil\frac{\rhoMax +
    1}{\mname}\right\rceil^{a-k}}\\
        &\qquad\le\begin{cases}
                   (\allMax + 2)^{n} &\text{if \(\mname = 2\) and \(\allMax = \sigMax = \rhoMax\) is even,}\\
                   (\allMax + 1)^{n} &\text{otherwise.}
                  \end{cases}
    \end{align*}
\end{lemma}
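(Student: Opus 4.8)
\emph{Proof plan.}
The plan is to first collapse the inner summation with the binomial theorem. Writing $D \coloneqq \lceil(\sigMax+1)/\mname\rceil + \lceil(\rhoMax+1)/\mname\rceil$, we have $\sum_{k=0}^{a}\binom{a}{k}\lceil(\sigMax+1)/\mname\rceil^{k}\lceil(\rhoMax+1)/\mname\rceil^{a-k} = D^{a}$, so the left-hand side of the lemma equals $(\allMax+1)^{n-a}\cdot D^{a}$. Since $0 \le a \le n$, it then suffices to show that $D \le \allMax+1$ in the generic case and $D \le \allMax+2$ in the exceptional case: indeed, then $(\allMax+1)^{n-a}D^{a}\le(\allMax+1)^{n-a}(\allMax+1)^{a}=(\allMax+1)^{n}$, and likewise $(\allMax+1)^{n-a}D^{a}\le(\allMax+2)^{n-a}(\allMax+2)^{a}=(\allMax+2)^{n}$ whenever $D\le\allMax+2$.

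Next I would rewrite $D$ in a more convenient form via the elementary identity $\lceil(x+1)/\mname\rceil=\lfloor x/\mname\rfloor+1$, valid for every non-negative integer $x$ (write $x=q\mname+r$ with $0\le r<\mname$). This gives $D=\lfloor\sigMax/\mname\rfloor+\lfloor\rhoMax/\mname\rfloor+2$. The whole statement is symmetric in $\sigMax$ and $\rhoMax$, so I may assume $\sigMax\ge\rhoMax$, whence $\allMax=\sigMax$; moreover $\rhoMax\ge1$ because $\rho$ is a non-empty finite set with $\rho\neq\{0\}$, so $\allMax\ge1$.

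The core of the proof is the elementary bound $\lfloor\sigMax/\mname\rfloor+\lfloor\rhoMax/\mname\rfloor\le\allMax$ together with a description of when equality holds. Since $\mname\ge2$ and $\sigMax,\rhoMax\le\allMax$, each floor term is at most $\allMax/2$, so their sum is at most $\allMax$ and hence $D\le\allMax+2$. For equality $\lfloor\sigMax/\mname\rfloor+\lfloor\rhoMax/\mname\rfloor=\allMax$, both (non-negative integer) summands must equal $\allMax/2$; in particular $\allMax$ is even, and then $\lfloor\sigMax/\mname\rfloor=\allMax/2$ with $\sigMax\le\allMax$ and $\mname\ge2$ forces $\mname=2$ and $\sigMax=\allMax$, and symmetrically $\rhoMax=\allMax$. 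Thus $D\le\allMax+1$ unless $\mname=2$, $\sigMax=\rhoMax=\allMax$, and $\allMax$ is even --- which is exactly the exceptional case in the statement --- and in that case $D=\allMax/2+\allMax/2+2=\allMax+2$. Plugging these bounds into the reduction of the first paragraph completes the argument.

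I do not expect any serious obstacle here; the only points that need a little care are verifying that the entire expression (not just the binomial sum) is symmetric in $\sigMax$ and $\rhoMax$ so that the reduction to $\sigMax\ge\rhoMax$ is legitimate, and handling small values of $\allMax$ in the equality analysis --- in particular, the potentially problematic case $\allMax=0$ is ruled out by $\rhoMax\ge1$, which is precisely what makes the characterization of the equality case go through.
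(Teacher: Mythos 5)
Your proposal is correct, and it takes a cleaner, more unified route than the paper's proof. Both arguments start by collapsing the binomial sum, but the paper then \emph{bounds} both ceiling factors by $\lceil(\allMax+1)/\mname\rceil$ and treats three cases separately (generic $\mname\ge3$ or $\mname=2$ with $\allMax$ odd; the exceptional case; and $\mname=2$, $\allMax$ even with $\min(\sigMax,\rhoMax)<\allMax$), even restarting with a different application of the Binomial Theorem for the third case. You instead keep the exact quantity $D=\lceil(\sigMax+1)/\mname\rceil+\lceil(\rhoMax+1)/\mname\rceil$, convert ceilings to floors via $\lceil(x+1)/\mname\rceil=\lfloor x/\mname\rfloor+1$, observe the uniform bound $\lfloor\sigMax/\mname\rfloor+\lfloor\rhoMax/\mname\rfloor\le\allMax$ from $\mname\ge2$, and then characterize the equality case — which turns out to be precisely $\mname=2$, $\allMax=\sigMax=\rhoMax$ even. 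This buys you a single argument in place of three cases, avoids the asymmetric "restart'' in the paper's Case 3, and as a small bonus does not even need the standing assumption $\allMax+1\ge\mname$ that the paper's proof implicitly relies on (e.g., in its Case 1). The places requiring care — the symmetry that licenses WLOG $\sigMax\ge\rhoMax$ and the use of $\allMax\ge1$ in the equality analysis — are both correctly flagged and justified in your proposal.
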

\begin{proof}
    In a first step, applying $\allMax = \max\{\sigMax,\rhoMax\}$
    and the Binomial Theorem yields
    \begin{align*}
        & (\allMax + 1)^{n - a} \cdot \sum_{k = 0}^{a} \binom{a}{k}
        \left\lceil\frac{\sigMax + 1}{\mname}\right\rceil^{k}
        \left\lceil\frac{\rhoMax + 1}{\mname}\right\rceil^{a-k}\\
            &\quad\leq (\allMax + 1)^{n - a} \cdot \sum_{k = 0}^{a} \binom{a}{k}
            \left\lceil\frac{\allMax + 1}{\mname}\right\rceil^{k}
            \left\lceil\frac{\allMax + 1}{\mname}\right\rceil^{a-k}\\
            &\quad=
            (\allMax + 1)^{n - a} \cdot \left(2 \cdot
            \left\lceil\frac{\allMax + 1}{\mname}\right\rceil\right)^{a}.
    \end{align*}

    In a next step, we investigate the term \(\lceil({\allMax + 1})/{\mname}\rceil\).

    First, if $\mname \geq 3$ or if \(\mname = 2\) and \(\allMax\) is odd, we have
    \[2 \cdot \left\lceil\frac{\allMax + 1}{\mname}\right\rceil \leq \allMax + 1.\]
    Hence, in these cases, we directly obtain
    \[(\allMax + 1)^{n-a} \cdot
    \left(2 \cdot \left\lceil\frac{\allMax + 1}{\mname}\right\rceil\right)^{a}
    \leq (\allMax + 1)^{n}.\]

    Next, if $\mname = 2$ and $\allMax = \sigMax = \rhoMax$ is even, we have
    \[2 \cdot \left\lceil\frac{\allMax + 1}{\mname}\right\rceil = \allMax + 2.\]
    Hence, in this case, we directly obtain
    \[(\allMax + 1)^{n - a} \cdot
    \left(2 \cdot \left\lceil\frac{\allMax + 1}{\mname}\right\rceil\right)^{a}
    \le (\allMax + 2)^{n}.\]

    Finally, if $\mname = 2$, $\allMax$ is even,
    and $\min\{\rhoMax,\sigMax\} < \allMax$, we
    need a more careful analysis. Restarting from the initial term, we apply the Binomial
    Theorem and obtain
    \begin{align*}
            &(\allMax + 1)^{n - a} \cdot \sum_{k = 0}^{a} \binom{a}{k}
            \left\lceil\frac{\sigMax + 1}{\mname}\right\rceil^{k}
            \left\lceil\frac{\rhoMax + 1}{\mname}\right\rceil^{a-k}\\
            &\quad=    (\allMax + 1)^{n - a} \cdot
            \left(\left\lceil\frac{\sigMax + 1}{\mname}\right\rceil
            + \left\lceil\frac{\rhoMax + 1}{\mname}\right\rceil\right)^{a}\\
            &\quad\leq (\allMax + 1)^{n-a}
            \cdot \left(\left\lceil\frac{\allMax}{\mname}\right\rceil
            + \left\lceil\frac{\allMax + 1}{\mname}\right\rceil\right)^{a}\\
            &\quad=    (\allMax + 1)^{n-a} \cdot \left(\frac{\allMax}{2} + \frac{\allMax +
            2}{2}\right)^{a}\\
            &\quad\leq (\allMax + 1)^{n}.
    \end{align*}

    This completes the proof.
\end{proof}

Finally, combining \cref{claim:upper-bound,lem:upper-bound-nicified} directly yields
\cref{thm:upper-bound}, which we restate here for convenience.

\upperboundmain*
\begin{proof}
    Let $L \subseteq \allStates^n$ denote a language with $L \times L \subseteq \CR^n$,
    and let \(S\) denote a \(\sigma\)-defining set for~\(L\).
    By \cref{claim:upper-bound}, we obtain
    \[|L| \leq
        \smash{(\allMax + 1)^{n - |S|} \cdot \sum_{k = 0}^{|S|} \binom{|S|}{k}
    \left\lceil\frac{\sigMax + 1}{\mname}\right\rceil^{k} \left\lceil\frac{\rhoMax +
    1}{\mname}\right\rceil^{|S|-k}}. \raisebox{-2ex}{\null} \]
    Applying \cref{lem:upper-bound-nicified} yields the claim.
\end{proof}

\subsubsection{Bounding the Size of Combinations of Realized Languages}\label{sec:combinationcompatible}

Having understood a single realized language, we turn to \emph{combinations} of
realized languages next.

\begin{definition}\label{def:comb}
    For two strings $x,y \in \allStates^n$, we define their \emph{combination} as the
    string $x \oplus y \in (\allStates \cup \{\perp\})^n$ obtained via
    \[(x \oplus y)\position{\ell} \coloneqq \begin{cases}
                                             \sigma_k &\text{if } x\position{\ell} = \sigma_i\text{ and } y\position{\ell} = \sigma_j \text{ and } i+j = k \leq \sigMax,\\
                                             \rho_k   &\text{if } x\position{\ell} = \rho_i\text{ and }y\position{\ell} = \rho_j\text{ and } i+j = k \leq \rhoMax,\\
                                             \bot     &\text{otherwise.}
                                            \end{cases}\]
    We say that $x$ and $y$ \emph{can be joined} if, for each position \(\ell \in
    \fragment1n\), we have $(x \oplus y)\position{\ell} \neq \bot$.

    For two languages $L_1, L_2 \subseteq \allStates^n$, we define their \emph{combination}
    as the set of all combinations of strings that can be joined:
    \[L_1 \oplus L_2 \coloneqq \{x \oplus y \mid x \in L_1 \text{ and } y \in L_2 \text{ such that } x,y \text{ can be joined}\}.\]
\end{definition}

Observe that, for strings \(x \in L_1\) and \(y \in L_2\), their combination
\(x \oplus y\) is in \(L_1 \oplus L_2\) if and only if \(x\) and \(y\)
share a common \(\sigma\)-vector and the sum of their
weight-vectors does not ``overflow'', that is, we have\footnote{For ease of notation, we use
``\(\le\)'' component-wise on vectors.}
\begin{align}\label{eq:oplus-decomp}
    L_1 \oplus L_2 =
    \{
        x \oplus y \mid x \in L_1,~y \in L_2,~\sigvec{x} = \sigvec{y}, \text{ and } \degvec{x} + \degvec{y} \le
        \capvec{\sigvec{x}}
    \}.
\end{align}
Finally, observe that, for strings \(x \in L_1\) and \(y \in L_2\) that can be joined,
we have
\begin{equation}
      \label{eq:oplus-join}
\degvec{x} + \degvec{y} = \degvec{x \oplus y}.
\end{equation}

We use the remainder of this section to show that
\cref{claim:upper-bound,thm:upper-bound} easily lift to the combinations of realized
languages. Thereby, we show that the number of partial solutions does not significantly
increase by combining realized languages.
We start with a small collection of useful properties of combinations of
realized languages. First, we discuss how \(\sigma\)-vectors behave under combinations of
languages.

\begin{lemma}\label{lem:comb-1}
    Let $L_1, L_2 \subseteq \allStates^n$ denote languages with $L_1 \times L_1 \subseteq \CR^n$
    and \(L_2 \times L_2 \subseteq \CR^n\).

    Then, \((L_1 \oplus L_2)\) has the \(\sigma\)-vectors that appear for both \(L_1\) and
    \(L_2\), that is,
    \[\sigvec{L_1 \oplus L_2} \subseteq \sigvec{L_1} \cap \sigvec{L_2}.\]
\end{lemma}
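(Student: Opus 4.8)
The plan is to establish the inclusion directly from \cref{def:comb}; in fact the hypotheses $L_1 \times L_1 \subseteq \CR^n$ and $L_2 \times L_2 \subseteq \CR^n$ are not needed here and are presumably stated only for uniformity with the companion lemmas of this subsection. First I would fix an arbitrary $\vec{s} \in \sigvec{L_1 \oplus L_2}$; by the definition of $\sigvec{\cdot}$ for languages there is a string $z \in L_1 \oplus L_2$ with $\sigvec{z} = \vec{s}$, and by \cref{def:comb} there are $x \in L_1$ and $y \in L_2$ that can be joined with $z = x \oplus y$.

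The key observation is that joining two strings cannot alter their common $\sigma$-vector. Fix a position $\ell \in \fragment{1}{n}$. Since $x$ and $y$ can be joined we have $(x \oplus y)\position{\ell} \neq \bot$, so by \cref{def:comb} exactly one of two cases occurs: either $x\position{\ell} = \sigma_i$ and $y\position{\ell} = \sigma_j$ with $i + j \le \sigMax$, and then $(x \oplus y)\position{\ell} = \sigma_{i+j} \in \sigStates$; or $x\position{\ell} = \rho_i$ and $y\position{\ell} = \rho_j$ with $i + j \le \rhoMax$, and then $(x \oplus y)\position{\ell} = \rho_{i+j} \in \rhoStates$. In the former case $\sigvec{z}\vposition{\ell} = 1 = \sigvec{x}\vposition{\ell} = \sigvec{y}\vposition{\ell}$, and in the latter $\sigvec{z}\vposition{\ell} = 0 = \sigvec{x}\vposition{\ell} = \sigvec{y}\vposition{\ell}$. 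Since $\ell$ was arbitrary, $\sigvec{z} = \sigvec{x} = \sigvec{y}$.

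Consequently $\vec{s} = \sigvec{z} = \sigvec{x} \in \sigvec{L_1}$ and likewise $\vec{s} = \sigvec{z} = \sigvec{y} \in \sigvec{L_2}$, so $\vec{s} \in \sigvec{L_1} \cap \sigvec{L_2}$, which is precisely the claimed inclusion. I do not expect any genuine obstacle here: the argument is essentially a one-line unfolding of definitions, and the only point requiring mild care is to actually invoke the ``can be joined'' property so that each position is genuinely of one of the two admissible forms (rather than $\bot$) --- this is exactly what forbids a position from being a $\sigma$-state in $x$ and a $\rho$-state in $y$, a mismatch that would otherwise break the equality of $\sigma$-vectors. If desired, one can phrase the whole proof in one sentence by citing the decomposition \eqref{eq:oplus-decomp}, which already records that every $x \oplus y \in L_1 \oplus L_2$ arises from strings with $\sigvec{x} = \sigvec{y}$ and that $\sigvec{x \oplus y} = \sigvec{x}$.
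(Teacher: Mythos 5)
Your proof is correct and takes essentially the same approach as the paper's: both arguments simply unfold \cref{def:comb} to observe that joinable strings share a $\sigma$-vector, which is then preserved by $\oplus$. Your remark that the hypotheses $L_1 \times L_1 \subseteq \CR^n$ and $L_2 \times L_2 \subseteq \CR^n$ are unused is also accurate.
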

\begin{proof}
    The proof follows immediately from \cref{def:comb}.
    Indeed, no string of the language \(L_1 \oplus L_2\) has a
    \(\bot\) character, as \((L_1 \oplus L_2)\) contains only combinations of strings with
    the same \(\sigma\)-vectors.
    Hence, the strings in \(L_1 \oplus L_2\) may differ from strings in
    \(L_1\) or \(L_2\) only in their weight vectors, and \(\sigvec{L_1 \oplus L_2}\) has only
    those \(\sigma\)-vectors that appear in both \(\sigvec{L_1}\) and \(\sigvec{L_2}\).
    Furthermore, note that due to ``overflows'' in the weight-vectors, a \(\sigma\)-vector in
    \(\sigvec{L_1} \cap \sigvec{L_2}\) might not be in \(\sigvec{L_1 \oplus L_2}\).
\end{proof}

Next, we show that having the relation \(\CR^n\) transfers as well.

\begin{lemma}\label{lem:comb-rel}
    Let $L_1, L_2 \subseteq \allStates^n$ denote languages with $L_1 \times L_1 \subseteq \CR^n$
    and \(L_2 \times L_2 \subseteq \CR^n\).

    Then, we have $(L_1 \oplus L_2) \times (L_1 \oplus L_2) \subseteq \CR^n$.
\end{lemma}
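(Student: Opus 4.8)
The plan is to prove \cref{lem:comb-rel} by a direct computation using the defining congruence of $\CR^n$ together with the elementary structure of the combination operation. Fix two strings in $L_1 \oplus L_2$; by \eqref{eq:oplus-decomp} they can be written as $x \oplus y$ and $x' \oplus y'$ with $x, x' \in L_1$, $y, y' \in L_2$, where $x,y$ can be joined and $x',y'$ can be joined. The three ingredients I would record first are: (i) joinability forces a common $\sigma$-vector, so $\sigvec{x} = \sigvec{y} = \sigvec{x \oplus y}$ and likewise $\sigvec{x'} = \sigvec{y'} = \sigvec{x' \oplus y'}$; (ii) by \eqref{eq:oplus-join} the weight-vectors add, hence modulo $\mname$ we have $\mdegvec{x \oplus y} \equiv_{\mname} \mdegvec{x} + \mdegvec{y}$ and $\mdegvec{x' \oplus y'} \equiv_{\mname} \mdegvec{x'} + \mdegvec{y'}$; (iii) $\CR^n$ is symmetric, and since $x,x' \in L_1$ we have $(x,x') \in \CR^n$, while $y, y' \in L_2$ gives $(y,y') \in \CR^n$.

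With these in hand, the main computation is a short expansion. Using (i) and (ii), and the bilinearity of the dot product over $\ZZ_{\mname}$,
\[
\sigvec{x \oplus y} \cdot \mdegvec{x' \oplus y'}
\equiv_{\mname}
\sigvec{x} \cdot \mdegvec{x'} + \sigvec{x} \cdot \mdegvec{y'}.
\]
For the first summand, $(x,x') \in \CR^n$ gives $\sigvec{x} \cdot \mdegvec{x'} \equiv_{\mname} \sigvec{x'} \cdot \mdegvec{x}$. For the cross summand, I would rewrite $\sigvec{x} = \sigvec{y}$ and $\sigvec{x'} = \sigvec{y'}$, apply $(y,y') \in \CR^n$ to get $\sigvec{y} \cdot \mdegvec{y'} \equiv_{\mname} \sigvec{y'} \cdot \mdegvec{y}$, and translate back to obtain $\sigvec{x} \cdot \mdegvec{y'} \equiv_{\mname} \sigvec{x'} \cdot \mdegvec{y}$. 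Recombining,
\[
\sigvec{x \oplus y} \cdot \mdegvec{x' \oplus y'}
\equiv_{\mname}
\sigvec{x'} \cdot \mdegvec{x} + \sigvec{x'} \cdot \mdegvec{y}
\equiv_{\mname}
\sigvec{x'} \cdot \big(\mdegvec{x} + \mdegvec{y}\big)
\equiv_{\mname}
\sigvec{x' \oplus y'} \cdot \mdegvec{x \oplus y},
\]
which is exactly the membership of $(x \oplus y,\, x' \oplus y')$ in $\CR^n$.

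I expect no real obstacle here: the only mildly delicate point is the ``cross term'' $\sigvec{x} \cdot \mdegvec{y'}$, where $x$ and $y'$ come from different languages so no $\CR^n$-relation between them is available directly; this is resolved entirely by routing through the $\sigma$-vector identifications coming from joinability. Everything else is bookkeeping, and the argument applies uniformly for every $\mname \ge 2$. (It is also worth noting that the same computation would go through if one took $x \oplus y$ and $x' \oplus y'$ to be the \emph{same} string, recovering reflexivity, consistent with the earlier observation that $\CR^n$ is reflexive and symmetric.)
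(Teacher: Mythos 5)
Your proof is correct and is essentially the same argument as the paper's: both decompose the two strings of $L_1 \oplus L_2$ into their $L_1$- and $L_2$-components, expand the dot product using the additivity of weight-vectors under $\oplus$, use the invariance of $\sigma$-vectors under joinability to line up the cross term, and then apply the two $\CR^n$-memberships componentwise before recombining. The only difference is notational (you name the components $x,y,x',y'$ where the paper uses $x_1,x_2,y_1,y_2$).
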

\begin{proof}
    Fix strings $x,y \in L_1 \oplus L_2$.
    By \cref{def:comb}, this means that there are strings $x_1,y_1 \in L_1$ and $x_2,y_2 \in L_2$ such that $x = x_1 \oplus x_2$ and $y = y_1 \oplus y_2$.
    Expanding the definition of \(\oplus\) yields
    \begin{align*}
        \sigvec{x} \cdot \mdegvec{y} &= \sigvec{x} \cdot (\mdegvec{y_1} + \mdegvec{y_2})
        = \sigvec{x} \cdot \mdegvec{y_1} + \sigvec{x} \cdot \mdegvec{y_2}.
        \intertext{As \(\oplus\) does not change \(\sigma\)-vectors for strings that can
        be joined, we obtain}
        \sigvec{x} \cdot \mdegvec{y}
                                     &= \sigvec{x_1} \cdot \mdegvec{y_1} + \sigvec{x_2} \cdot \mdegvec{y_2}
        \intertext{Next, we use \((x_1, y_1) \in \CR^n\) and \((x_2, y_2) \in \CR^n\) to obtain}
        \sigvec{x} \cdot \mdegvec{y}
                                     &\equiv_{\mname} \sigvec{y_1} \cdot \mdegvec{x_1} + \sigvec{y_2}
                                     \cdot \mdegvec{x_2}\\
        \intertext{Again, as \(\oplus\) does not change \(\sigma\)-vectors for strings that can
        be joined, we obtain}
        \sigvec{x} \cdot \mdegvec{y}
                                     &\equiv_{\mname} \sigvec{y} \cdot \mdegvec{x_1} + \sigvec{y}
                                     \cdot \mdegvec{x_2}\\
                                     &= \sigvec{y} \cdot (\mdegvec{x_1} + \mdegvec{x_2})\\
                                     &= \sigvec{y} \cdot \mdegvec{x},
    \end{align*} which completes the proof that $(x,y) \in \CR^n$.
\end{proof}

Now, we directly obtain that \cref{claim:upper-bound} lifts:

\begin{corollary}\label{lem:comb-upper}
    Let $L_1, L_2 \subseteq \allStates^n$ denote languages with $L_1 \times L_1 \subseteq \CR^n$
    and \(L_2 \times L_2 \subseteq \CR^n\).
    Further, let \(S\) denote a \(\sigma\)-defining set for \(\sigvec{L_1 \oplus L_2}\).
    Then, we have
    \[|L_1 \oplus L_2| \leq
        {(\allMax + 1)^{n - |S|} \cdot \sum_{k = 0}^{|S|} \binom{|S|}{k}
    \left\lceil\frac{\sigMax + 1}{\mname}\right\rceil^{k} \left\lceil\frac{\rhoMax +
    1}{\mname}\right\rceil^{|S|-k}}. \raisebox{-2ex}{\null} \]
\end{corollary}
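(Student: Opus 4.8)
The plan is to observe that \cref{lem:comb-upper} is an immediate consequence of the structural results already established, so essentially no new work is required. The crucial point is that the combination $L_1 \oplus L_2$ is \emph{itself} a language of the kind that \cref{claim:upper-bound} was designed to handle: once we know that $(L_1 \oplus L_2) \times (L_1 \oplus L_2) \subseteq \CR^n$, we may simply set $L \coloneqq L_1 \oplus L_2$ and invoke \cref{claim:upper-bound} verbatim, since the bound asserted in \cref{lem:comb-upper} is literally the bound of \cref{claim:upper-bound} with $n$ and $S$ unchanged.

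Concretely, I would proceed as follows. First, apply \cref{lem:comb-rel} to the hypotheses $L_1 \times L_1 \subseteq \CR^n$ and $L_2 \times L_2 \subseteq \CR^n$ to conclude that $(L_1 \oplus L_2) \times (L_1 \oplus L_2) \subseteq \CR^n$. Second, note that by assumption $S$ is a $\sigma$-defining set for $\sigvec{L_1 \oplus L_2}$, which is precisely a $\sigma$-defining set for $\sigvec{L}$ with $L \coloneqq L_1 \oplus L_2$. Third, apply \cref{claim:upper-bound} to the language $L$ together with this set $S$; this yields exactly
\[|L_1 \oplus L_2| = |L| \leq (\allMax + 1)^{n - |S|} \cdot \sum_{k = 0}^{|S|} \binom{|S|}{k} \left\lceil\tfrac{\sigMax + 1}{\mname}\right\rceil^{k} \left\lceil\tfrac{\rhoMax + 1}{\mname}\right\rceil^{|S|-k},\]
as claimed. (One could additionally remark, using \cref{lem:comb-1}, that $\sigvec{L_1 \oplus L_2} \subseteq \sigvec{L_1} \cap \sigvec{L_2}$, but this is not needed for the bound itself.)

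There is no real obstacle here: the genuine content was front-loaded into \cref{la:hamming-distance-parity-mod-p-general} (establishing that realized languages land inside $\CR^n$) and \cref{lem:comb-rel} (showing $\CR^n$-membership is preserved under $\oplus$). The only thing one must double-check is the bookkeeping that the ``ambient length'' $n$ and the size $|S|$ of the $\sigma$-defining set are the same objects in both statements, so that the numerical bound transfers with no loss; this is immediate from \cref{def:comb}, since $L_1 \oplus L_2 \subseteq (\allStates \cup \{\bot\})^n$ and in fact $L_1 \oplus L_2 \subseteq \allStates^n$ because combinations never introduce a $\bot$ character.
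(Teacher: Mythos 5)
Your proposal is correct and matches the paper's proof exactly: the paper likewise invokes \cref{lem:comb-rel} to establish $(L_1 \oplus L_2) \times (L_1 \oplus L_2) \subseteq \CR^n$ and then applies \cref{claim:upper-bound} to $L_1 \oplus L_2$. Your write-up merely spells out the bookkeeping (same $n$, same $S$, $L_1 \oplus L_2 \subseteq \allStates^n$) that the paper leaves implicit.
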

\begin{proof}
    By \cref{lem:comb-rel}, we can use \cref{claim:upper-bound} on the language \(L_1
    \oplus L_2\).
\end{proof}

\subsection{Exploiting Structure: Fast Join Operations}

Recall that the bound in \cref{thm:upper-bound} yields an upper bound on the number of
partial solutions for a graph \(G\) and a subset \(U\) of its vertices. Recall further
that, in the end, we intend to use an algorithm based on the dynamic programming on a tree
decomposition paradigm. Hence, we need to be able to efficiently compute possible partial
solutions for a graph given the already computed partial solutions for some of its subgraphs. We
tackle this task next. In particular, we show how to generalize known convolution
techniques to compute the combination of realized languages:

\begin{restatable}{theorem}{joinalgo}
    \label{thm:join-algorithm-proto}
    \label{thm:join-algorithm}
    Let $L_1, L_2 \subseteq \allStates^n$ denote languages with $L_1 \times L_1 \subseteq \CR^n$
    and \(L_2 \times L_2 \subseteq \CR^{n}\), and let \(S\) denote
    a \(\sigma\)-defining set for \(\sigvec{L_1} \cap \sigvec{L_2}\).
    Then, we can compute the language $L_1 \oplus L_2$ in time
    \begin{align*}
        &n^{\Oh(1)} \cdot 2^{(\allMax)^{\Oh(1)}} \cdot
        {(\allMax + 1)^{n - |S|} \cdot \sum_{k = 0}^{|S|} \binom{|S|}{k}
    \left\lceil\frac{\sigMax + 1}{\mname}\right\rceil^{k} \left\lceil\frac{\rhoMax +
1}{\mname}\right\rceil^{|S|-k}}\\
        &\qquad\le\begin{cases}
                   n^{\Oh(1)} \cdot 2^{(\allMax)^{\Oh(1)}} \cdot(\allMax + 2)^{n} & \text{if \(\mname = 2\) and \(\allMax = \sigMax = \rhoMax\) is even,}\\
                   n^{\Oh(1)} \cdot 2^{(\allMax)^{\Oh(1)}} \cdot(\allMax + 1)^{n} & \text{otherwise.}
                  \end{cases}
    \end{align*}
\end{restatable}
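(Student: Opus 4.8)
The plan is to reduce the computation of $L_1 \oplus L_2$ to a collection of generalized subset-convolution computations, one for each $\sigma$-vector shared by the two languages, and to exploit the $\sigma$-defining set $S$ to keep the total size of the convolution instances under control. Recall from \eqref{eq:oplus-decomp} that $L_1 \oplus L_2$ only combines strings $x \in L_1$ and $y \in L_2$ with $\sigvec{x} = \sigvec{y}$, and that for such strings $\degvec{x \oplus y} = \degvec{x} + \degvec{y}$ provided there is no overflow; overflow is detected a posteriori by checking $\degvec{x \oplus y}\vposition{\ell} \le \capvec{\sigvec{x}}\vposition{\ell}$ for every $\ell$. So the first step is to partition $L_1$ and $L_2$ according to their $\sigma$-vectors. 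For a fixed $\sigma$-vector $\vec{s} \in \sigvec{L_1} \cap \sigvec{L_2}$, write $L_1^{\vec{s}}$ and $L_2^{\vec{s}}$ for the corresponding sublanguages; then $L_1 \oplus L_2 = \bigcup_{\vec{s}} (L_1^{\vec{s}} \oplus L_2^{\vec{s}})$, a disjoint union. On each part, the operation is a pointwise addition of weight vectors over $\{0,\dots,\capvec{\vec{s}}\vposition{\ell}\}$ at position $\ell$, which is exactly a (mixed-radix) cyclic/capped convolution; the standard trick is to compute it via a truncated convolution over the integers, i.e. treat weights as exponents in a product of univariate polynomials in one formal variable per position, multiply out, and then discard the overflow monomials. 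This can be done in time polynomial in $n + \allMax$ times the number of monomials involved, and the crucial point is that we need only enumerate monomials that actually occur in $L_1^{\vec{s}}$ or $L_2^{\vec{s}}$.

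The second step is the size accounting, and this is where $S$ and the structural results do the work. By \cref{lem:comb-rel} we have $(L_1 \oplus L_2) \times (L_1 \oplus L_2) \subseteq \CR^n$, and by \cref{lem:comb-1} we have $\sigvec{L_1 \oplus L_2} \subseteq \sigvec{L_1} \cap \sigvec{L_2}$, so $S$ is also $\sigma$-defining for $\sigvec{L_1 \oplus L_2}$. \cref{lem:comb-upper} then gives the size bound on the output $L_1 \oplus L_2$ itself, and \cref{lem:upper-bound-nicified} turns this into the clean $(\allMax+1)^n$ (or $(\allMax+2)^n$) form. But for the running time we also need to bound the sizes of the intermediate objects, i.e.\ the inputs $L_1, L_2$ and the sublanguages $L_1^{\vec{s}}, L_2^{\vec{s}}$. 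Since $L_i \times L_i \subseteq \CR^n$, \cref{claim:upper-bound} (with the same $S$, valid because $\sigvec{L_i} \supseteq \sigvec{L_1}\cap\sigvec{L_2}$ refines appropriately — one takes a $\sigma$-defining set for each and observes the count only grows, or more carefully one reruns the argument of \cref{claim:upper-bound} splitting by Hamming weight on $S$) gives $|L_i|$ bounded by the same expression, hence also $(\allMax+1)^n$ up to the even-$\mname$ exception. Summing the per-$\vec{s}$ convolution costs: the number of distinct $\vec{s}$ on $S$ with Hamming weight $k$ is at most $\binom{|S|}{k}$, and for each, the number of weight-vector extensions on $\bar S = \fragment1n \setminus S$ is at most $(\allMax+1)^{n-|S|}$ while on $S$ (by \cref{la:partition-positions}, which forces the $\mname$-weight vector on $S$ once it is fixed on $\bar S$) the number of weight vectors with a common $\mname$-weight vector is at most $\lceil(\sigMax+1)/\mname\rceil^k \lceil(\rhoMax+1)/\mname\rceil^{|S|-k}$. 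This is precisely the summand in the claimed bound, so a convolution of total monomial-count at most this, times a $(n+\allMax)^{\Oh(1)}$ overhead, summed over $\vec{s}$, recovers the stated running time; then \cref{lem:upper-bound-nicified} collapses it to the two-case form.

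More concretely, for the per-$\vec{s}$ convolution I would use the following encoding: fix $\vec{s}$ with support pattern, and for a string $x \in L_i^{\vec{s}}$ encode its weight vector $\degvec{x}$ as the monomial $\prod_{\ell=1}^n t_\ell^{\degvec{x}\vposition{\ell}}$ with coefficient $1$ (for the decision version; for counting one carries actual counts). Multiplying the polynomial for $L_1^{\vec{s}}$ by that for $L_2^{\vec{s}}$ and keeping only monomials $\prod t_\ell^{e_\ell}$ with $0 \le e_\ell \le \capvec{\vec{s}}\vposition{\ell}$ yields exactly (the weight vectors of) $L_1^{\vec{s}} \oplus L_2^{\vec{s}}$ by \eqref{eq:oplus-decomp} and \eqref{eq:oplus-join}. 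The multiplication is done position-by-position in mixed-radix fashion — this is the multivariate analogue of the $\Oh(2^n \cdot n^2)$ subset-convolution / Möbius-transform machinery of van Rooij, and I would cite \cref{thm:vanrooij-intro}'s underlying convolution lemma (or re-derive the elementary fact that a product of polynomials in $n$ variables with degree $\le \allMax$ each and with $M_1, M_2$ monomials respectively can be computed in time $(n+\allMax)^{\Oh(1)}\cdot (M_1 + M_2 + M_{\mathrm{out}})$ via repeated univariate multiplication plus sparse bookkeeping). The main obstacle I anticipate is the bookkeeping to guarantee the running time is genuinely proportional to the \emph{structured} bound and not to the naive $(\allMax+1)^n$ per $\vec{s}$: one must never materialize a dense table of size $(\allMax+1)^n$, and instead operate on sparse representations indexed so that the per-$\vec{s}$ work is charged against $\binom{|S|}{k}\lceil(\sigMax+1)/\mname\rceil^k\lceil(\rhoMax+1)/\mname\rceil^{|S|-k}(\allMax+1)^{n-|S|}$. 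Verifying that the sparse convolution respects this — in particular that intermediate monomials during the position-by-position multiplication do not blow past the final monomial count by more than a polynomial factor — is the delicate part, but it follows because at every intermediate stage the partial product is still a sublanguage-combination satisfying $\CR$ on the processed coordinates, so \cref{claim:upper-bound} applies to each prefix. Finally, enumerating $\sigvec{L_1}\cap\sigvec{L_2}$ and the set $S$ can be done in time polynomial in $|L_1| + |L_2|$ by a straightforward scan, so it does not dominate.
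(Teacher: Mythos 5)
You have the right high-level decomposition — split by shared $\sigma$-vector $\vec{s}$, use \cref{lem:comb-1}, \cref{lem:comb-rel}, \cref{lem:comb-upper}, and \cref{lem:upper-bound-nicified} for the output bound, and invoke \cref{la:partition-positions} to explain why only a bounded number of weight vectors can appear per $\vec{s}$. But the algorithmic core has a genuine gap, and it is precisely the step you yourself flag as ``the delicate part.''

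Your plan relies on a sparse-polynomial-multiplication primitive: given polynomials with $M_1, M_2$ monomials, compute the product in time $(n+\allMax)^{\Oh(1)} \cdot (M_1 + M_2 + M_{\mathrm{out}})$. This is not an elementary fact — sparse polynomial multiplication in output-sensitive time is a hard and heavily-studied problem, and no such clean bound is available to cite or re-derive here. The fallback you offer — a position-by-position sweep with ``sparse bookkeeping,'' bounding intermediate blow-up by applying \cref{claim:upper-bound} to prefixes — does not work, because the relation $\CR^n$ is defined by an inner product over all $n$ coordinates simultaneously; restricting a language with $L \times L \subseteq \CR^n$ to a prefix of positions does not yield a language satisfying $\CR^{j}$ for $j < n$, nor is the partially-convolved object at an intermediate stage of the sweep even a language in $\allStates^{j}$ to which these lemmas could apply. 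So the claimed bound on intermediate state is unsupported.

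The idea you are missing is the paper's $\sigma$-compression (\cref{def:comp}). The paper does not use a sparse convolution at all: for each $\vec{s}$ it encodes weight vectors into a \emph{dense} but \emph{compressed} domain $\CZ_{S,\vec{s}}$, storing the raw coordinate on $\bar{S}$ but only the quotient by $\mname$ on positions of $S$ (shifted by an origin $o$ and a ``remainder'' term so that divisibility by $\mname$ is guaranteed via \cref{rem:rem}), plus two explicit checksum coordinates to detect wrap-around overflows after the modular convolution. The point is that $|\CZ_{S,\vec{s}}|$ is exactly the per-$\vec{s}$ summand in the running-time bound, so the standard dense FFT-based convolution (\cref{thm:alg-convolution}) over $\CZ_{S,\vec{s}}$ already runs in the desired time; no sparsity argument or intermediate-state bound is needed. \cref{lem:comp-add} then certifies that addition of compressed vectors is equivalent to joinability, and the checksums plus capacity checks (\cref{rem:decompress}) filter the false positives introduced by working modulo the compressed moduli. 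Your proposal gets the size accounting right but replaces this compression-plus-dense-convolution mechanism with a sparse mechanism whose correctness and running time you cannot establish.
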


Toward proving \cref{thm:join-algorithm}, first recall that strings \(x_1 \in L_1\) and
\(x_2 \in L_2\) decompose into a \(\sigma\)-vector and a weight-vector each. Further,
recall from \cref{eq:oplus-decomp}
that \(x_1 \oplus x_2\) is in \(L_1 \oplus L_2\) if and only if \(x_1\) and \(x_2\)
share a common \(\sigma\)-vector and the sum of their
weight-vectors does not ``overflow''.
This observation yields the following proof strategy. For each different \mbox{\(\sigma\)-vector} \(\vec{s} \in \sigvec{L_1}\cap \sigvec{L_2}\), we compute all possible sums of the weight-vectors for strings with
\mbox{\(\sigma\)-vector}~\(\vec{s}\). Afterward, we filter out resulting vectors where an overflow
occurred. To implement this strategy, we intend to make use of the tools developed by van Rooij~\cite{Rooij20}; in particular, the following result.

\begin{theorem}[{\cite[Lemma 3]{Rooij20}}]
    \label{thm:alg-convolution}
    For integers $d_1,\dots,d_n$ and
    $D \coloneqq \prod_{i=1}^nd_i$,
    let~$p$ denote a prime such that in the field \(\FF_p\),
    the $d_i$-th root of unity exists for each $i \in \fragment1n$.
    Further, for two functions
    $f,g\colon \ZZ_{d_1} \times \dots \times \ZZ_{d_n} \rightarrow \FF_p$,
    let $h\colon\ZZ_{d_1} \times \dots \times \ZZ_{d_n} \rightarrow \FF_p$ denote
    the convolution
    \[h(a) \coloneqq \sum_{a_1 + a_2 = a} f(a_1) \cdot g(a_2).\]
    Then, we can compute the function $h$ in $\Oh(D \log D)$ many arithmetic operations
    (assuming a $d_i$-th root of unity $\omega_i$ is given for all $i \in \nset{n}$).
\end{theorem}

Before we continue, let us briefly comment on how to find an appropriate prime $p$, as well as the roots of unity $\omega_i$.

\begin{remark}
    \label{rem:find-prime-alg}
    Suppose $M$ is a sufficiently large integer such that all images of the functions $f,g,h$ are in the range $\fragment0M$.
    In particular, suppose that $M \geq D$.
    Suppose $d_1',\dots,d_\ell'$ is the list of integers obtained from $d_1,\dots,d_n$ by
    removing duplicates (in all of our applications we ensure that $\ell \leq 4$).
    Let $D' \coloneqq \prod_{i=1}^\ell d_i'$.
    We consider candidate numbers $m_j \coloneqq 1 + D'j$ for all $j \geq 1$.
    By the Prime Number Theorem for Arithmetic Progressions \cite[Theorem 1.3]{BennettMOR18}, there is a prime $p$ such that
    \begin{enumerate}
        \item $p = m_j$ for some $j \geq 1$,
        \item $p > M$, and
        \item $p = \Oh\Big(\max\big\{\varphi(D')M \log M,\exp(D')\big\}\Big)$,
    \end{enumerate}
    where $\varphi$ denotes Euler's totient function.
    Such a number can be found in time
    \[\Oh\Big(p \big(\log p\big)^c\Big)\]
    for some absolute constant $c$ exploiting that prime testing can be done in polynomial time.

    Now, fix $i \in \nset{n}$ and fix $k_i \coloneqq {D'}j/{d_i}$.
    For every $x \in \FF_p$, we have that $x^{p-1} = 1$, and hence, $x^{k_i}$ is a $d_i$-th root of unity if and only if $(x^{k_i})^i \neq 1$ for all $i < d_i$.
    Hence, given an element $x \in \FF_p$, it can be checked in time
    \[\Oh\Big( d_i \cdot (\log p)^c\Big)\]
    whether $x^{k_i}$ is a $d_i$-th root of unity.
    Due to our choice of $p$, this test succeeds for at least one $x \in \FF_p$.
    Thus, a $d_i$-th root of unity $\omega_i$ for every $i \in \nset{n}$ can be found in time
    \[\Oh\Big(n \cdot p \cdot \max_{i \in \nset{n}}d_i \cdot (\log p)^c\Big).\]
\end{remark}

Now, let us return to the problem at hand.
The most naive approach, applying \cref{thm:alg-convolution} to the weight-vectors directly, is not fast enough for
our purposes: A single convolution already takes time  \(\widetilde{\Oh}((\allMax +
1)^n)\), and so, using such a convolution for each of the up to \(2^{|S|}\) different
\(\sigma\)-vectors is far too slow.
Instead of convolving weight-vectors directly, we hence turn to
\cref{la:partition-positions}: for a fixed \(\sigma\)-vector \(\vec{s}\), there are far less
(depending on the size of \(S\)) than \((\allMax + 1)^n\) different weight vectors. We can
exploit this by \emph{compressing} the weight-vectors to a smaller representation, and then
convolving the resulting compressed vectors.
Formally, we first make our intuition of ``exploiting'' \cref{la:partition-positions} more
formal by defining a useful auxiliary vector.

\begin{definition}
    Let $L \subseteq \allStates^n$ denote a (non-empty)
    language with $L \times L \subseteq \CR^n$,
    let \(S\) denote a \(\sigma\)-defining set for \(\sigvec{L}\),
    let \(\CW_{S}
    \subseteq \sigvec{L}\)  denote a corresponding set of witness vectors, and
    set \(\bar{S} \coloneqq \fragment1n \setminus S\).

    For two vectors \(u, o \in \fragment0{\allMax}^n\) and a position \(\ell\in S\),
    we define the \emph{remainder}
    \(\remvec{\CW_{S}}{u}{o}\) at \(\ell\) as
    \[
        \remvec{\CW_{S}}{u}{o}\vposition{\ell}
        \coloneqq \sum_{i \in \bar{S}} \big(u\vposition{i} -
        o\vposition{i}\big)\cdot\big(w_{1,\ell}\vposition{i} -
    w_{0,\ell}\vposition{i}\big).
    \]
\end{definition}

Slightly abusing notation, if $u$ is a longer $(n+d)$-dimensional vector for some $d \geq 1$, we also write $\remvec{\CW_{S}}{u}{o}$ to denote the vector $\remvec{\CW_{S}}{u\fragment{1}{n}}{o}$.

\begin{remark}\label{rem:rem}
    Observe that, if we restrict \(u\) and \(o\) to be the weight-vectors of strings in
    \(L\) with a common \(\sigma\)-vector \(\vec{s} \in \sigvec{L}\), that is,
    \(u, o \in \{ \degvec{x} \mid x \in L \text{ and } \sigvec{x} = \vec{s}\}\), then,
    for any \(\ell \in S\), \cref{la:inner-product-zero} yields
     \begin{align*}
        {u}\vposition{\ell} - o\vposition{\ell}
        + \remvec{\CW_S}{u}{o}\vposition{\ell}
        &={u}\vposition{\ell} - o\vposition{\ell}
         + \sum_{i \in\fragment1n\setminus S}
         \big({u}\vposition{i} - o\vposition{i}\big)
         \cdot \big({w_{1,\ell}}\vposition{i} - {w_{0,\ell}}\vposition{i}\big)\\
        &=
        \big({u} - o\big) \cdot \big({w_{1,\ell}} - {w_{0,\ell}}\big)
        =
        \big({u} - o\big) \cdot {w_{1,\ell}} -
        \big({u} - o\big) \cdot {w_{0,\ell}}\\
        &\equiv_{\mname} 0 - 0 \equiv_{\mname} 0.
    \end{align*}
\end{remark}
Note that \cref{rem:rem} closely mirrors \cref{cl:partition-1}.
Further, if we pick an arbitrary vector \(o \in \{ \degvec{x} \mid x \in L \text{ and
} \sigvec{x} = \vec{s}\}\) to act as an ``origin'', then we can shift all vectors
in \(\{ \degvec{x} \mid x \in L \text{ and } \sigvec{x} = \vec{s}\}\) so that their
coordinates on \(S\) become divisible by \(\mname\). We can then exploit this to
compress the coordinates on \(S\). In other words, the reduced flexibility due to
fixing the \(\sigma\)-vector \(\vec{s}\) translates to a reduced flexibility in the choice of
coordinates for the positions that define the possible weight-vectors.

Finally, as we intend to add (the components of) compressed vectors modulo some number
$d_i$ (see \cref{thm:alg-convolution}), we need
to add ``checksums'' to the compressed vectors to be able to detect ``overflows''.
The simplest way to implement such checksums would be to add a single coordinate to our vectors that contains the sum of all entries.
However, in the notation of \cref{thm:alg-convolution}, this would mean that $\max_{i} d_i = 2n(\allMax + 1)$ which is too expensive for the application of \cref{rem:find-prime-alg}.
Instead, we use a binary representation for the checksums and add numbers modulo $3$ to avoid overflows in the checksum coordinates.
Also, we use two ``checksums'', one for the coordinates contained in $S$ (where $S$ is a $\sigma$-defining set) and another for the coordinates in $\bar{S}$.

For a positive integer $n \in \NN$ and a position $i \geq 1$, we define $\bit_i(n) \in
\{0,1\}$ to denote the $i$-th bit in the binary representation of $n$, that is, $n = \sum_{i \geq 1}\bit_i(n) \cdot 2^{i-1}$.
Further, we define $d \coloneqq \lceil\log(2n(\allMax + 1))\rceil$ to be the number of bits required to represent numbers strictly less than $2n(\allMax + 1)$.

\begin{definition}\label{def:comp}
    Let $L \subseteq \allStates^n$ denote a (non-empty)
    language with $L \times L \subseteq \CR^n$,
    let \(S\) denote a \(\sigma\)-defining set for \(\sigvec{L}\),
    let \(\CW_{S}
    \subseteq \sigvec{L}\)  denote a corresponding set of witness vectors, and
    set \(\bar{S} \coloneqq \fragment1n \setminus S\).

    Further, fix a vector \(\vec{s} \in \sigvec{L}\) and an origin vector
    \(o \in \fragment0{\allMax}^n\) such that, for any
    \(u \in \{ \degvec{x} \mid x \in L \text{ and } \sigvec{x} = \vec{s}\}\), we have
     \begin{align*}
        {u}\vposition{\ell} - o\vposition{\ell}
        + \remvec{\CW_S}{u}{o}\vposition{\ell}
         \equiv_{\mname} 0
    \end{align*}
    for all \(\ell \in S\).

    For a (weight-)vector \(z \in \{ \degvec{x} \mid x \in L \text{ and } \sigvec{x} = \vec{s}\}\),
    we define the \emph{\(\sigma\)-compression with
    origin \(o\) and type \(\vec{s}\)} as the following \((n+2d)\)-dimensional vector \(\comp{o}{z}\):
    \begin{alignat*}{5}
        \comp{o}z\vposition{\ell}  &\coloneqq
        {z}\vposition{\ell} &&\mod \allMax + 1, &&\ell \in  \bar{S},\\
        \comp{o}z\vposition{\ell} &\coloneqq
        \frac{
        {z}\vposition{\ell} - o\vposition{\ell}
        + \remvec{\CW_S}{z}{o}\vposition{\ell}
        }{\mname} &&\mod \left\lceil\frac{\capvec{\vec{s}}\vposition{\ell} + 1}{\mname} \right\rceil,
                                                   &\quad&\ell \in S,\\
        \comp{o}z\vposition{n + \ell} &\coloneqq
        \bit_{\ell}\Big(\sum_{i \in \bar{S}} {z}\vposition{i}\Big) &&\mod 3, &\quad&\ell \in \fragment{1}{d},\\
        \comp{o}z\vposition{n + d + \ell } &\coloneqq
        \bit_{\ell}\Big(\sum_{i \in S}{z}\vposition{i}\Big) &&\mod 3, &\quad&\ell \in \fragment{1}{d}.
    \end{alignat*}
    Further, we write
    \[\CZ_{S,\vec{s}} \coloneqq \bigtimes_{\ell \in \bar{S}} \ZZ_{\allMax + 1} \quad\times\quad \bigtimes_{\ell \in S} \ZZ_{\left\lceil\frac{\capvec{\vec{s}}\vposition{\ell} + 1}{\mname} \right\rceil} \quad\times\quad \bigtimes_{\ell \in \fragment{1}{2d}} \ZZ_3\]
    for the \((n + 2d)\)-dimensional space of all possible \(\sigma\)-compressed vectors for \(S\) and \(\vec{s}\) (and potentially different \(o\)).
\end{definition}

We stress each entry of \(\comp{o}{z}\) is defined as an element from $\ZZ_{d_\ell}$ for the appropriate dimension $d_\ell$.

\begin{remark}\label{rem:comp-imp}
    With \cref{thm:alg-convolution} in mind and writing
    \(S_{\vec{s},c} \coloneqq \{ \ell \in S \mid \vec{s}\vposition{\ell} = c\}\), we observe that
    \begin{align*}
        |\CZ_{S,\vec{s}}| &=
        {\left\lceil\frac{\sigMax + 1}{\mname}\right\rceil}^{|S_{\vec{s},1}|} \cdot
        {\left\lceil\frac{\rhoMax + 1}{\mname}\right\rceil}^{|S_{\vec{s},0}|} \cdot
        (\allMax + 1)^{n - |S|}
        \cdot 3^{2d}\\
        &\leq
        {\left\lceil\frac{\sigMax + 1}{\mname}\right\rceil}^{|S_{\vec{s},1}|} \cdot
        {\left\lceil\frac{\rhoMax + 1}{\mname}\right\rceil}^{|S_{\vec{s},0}|} \cdot
        (\allMax + 1)^{n - |S|}
        \cdot 256 \cdot n^4(\allMax + 1)^4.
    \end{align*} In particular, using \cref{thm:alg-convolution} on the \(\sigma\)-compressed
    vectors yields a significant speed-up over the direct application to the weight
    vectors (whose domain has a size of \((\allMax + 1)^n\)).\qed
\end{remark}
\begin{remark}\label{rem:decompress}
    Observe that, for a fixed origin vector \(o\), the mapping \(\comp{o}{\star}\) is
    injective and we can easily recover the original weight-vector \(z\) from its
    \(\sigma\)-compression \(\comp{o}{z}\):\footnote{Observe that we exploit
        \(\remvec{\CW_S}{z}{o}\vposition{\ell} = \remvec{\CW_S}{\comp{o}{z}}{o}\vposition{\ell}\).}
    \begin{alignat*}{5}
        {z}\vposition{\ell}
        &\coloneqq
        \comp{o}z\vposition{\ell},
        &&\ell \in  \bar{S},\\
        {z}\vposition{\ell}
        &\coloneqq
        \big( \mname \cdot \comp{o}z\vposition{\ell}
        + o\vposition{\ell}
    - \remvec{\CW_S}{\comp{o}{z}}{o}\vposition{\ell} \big)
        \bmod{
\mname \left\lceil\frac{\capvec{\vec{s}}\vposition{\ell} + 1}{\mname} \right\rceil
        }\\
        &\;=
        \Big(\mname \cdot \comp{o}z\vposition{\ell}
        + o\vposition{\ell}\\
        &\qquad- \sum_{i \in\bar{S}}
        \big(\comp{o}{z}\vposition{i} - o\vposition{i}\big)
    \cdot \big({w_{1,\ell}}\vposition{i} - {w_{0,\ell}}\vposition{i}\big)\Big)
        \bmod{
\mname \left\lceil\frac{\capvec{\vec{s}}\vposition{\ell} + 1}{\mname} \right\rceil
        },
                                                   &\quad&\ell \in S.
    \end{alignat*}
    Further, for elements \(x \in \CZ_{S,\vec{s}}\) that cannot be obtained from a
    \(\sigma\)-compression, we have that
    \begin{itemize}
     \item $x\vposition{n + \ell} \notin \{0,1\}$ for some $\ell \in \fragment{1}{2d}$, or
     \item $\sum_{i \in \bar{S}} x\vposition{i} \ne \sum_{\ell \in \fragment1d}x\vposition{n + \ell}\cdot 2^{\ell-1}$, or
     \item $\sum_{i \in S} x\vposition{i} \ne \sum_{\ell \in \fragment1d}x\vposition{n + d + \ell}\cdot 2^{\ell-1}$, or
     \item $x\vposition{\ell} > \capvec{\vec{s}}\vposition{\ell}$ for some \(\ell\in\fragment1n\).
    \end{itemize}
    Hence, given a subset of \(\CZ_{S,\vec{s}}\), we can quickly identify which vectors are
    indeed \(\sigma\)-compressed weight vectors.\qed
\end{remark}

In a next step, we discuss how addition and \(\sigma\)-compression interact with each other.
Toward this end, we define an equivalence relation on the set $\CZ_{S,\vec{s}}$.
Intuitively speaking, two vectors from $\CZ_{S,\vec{s}}$ are equivalent if they are identical on the first $n$ coordinates, and both checksums are identical, but with possibly different representations.
For example, the last $d = 4$ coordinates may contain $(0,0,0,1)$ to present the checksum $8 = 0\cdot 2^0 + 0\cdot 2^1 + 0\cdot 2^2 + 1\cdot 2^3$.
However, since these coordinates may contain entries from $\{0,1,2\}$, an alternative representation is $(0,0,2,0)$ since $8 = 0\cdot 2^0 + 0\cdot 2^1 + 2\cdot 2^2 + 0\cdot 2^3$.

\begin{definition}
    For $x,y \in \CZ_{S,\vec{s}}$, we write $x \simeq y$ if
    \begin{itemize}
        \item $x\vposition{\ell} = y\vposition{\ell}$ for all \(\ell\in\fragment1n\),
        \item $\sum_{\ell \in \fragment1d}x\vposition{n + \ell}\cdot 2^{\ell-1} = \sum_{\ell \in \fragment1d}y\vposition{n + \ell}\cdot 2^{\ell-1}$, and
        \item $\sum_{\ell \in \fragment1d}x\vposition{n + d + \ell}\cdot 2^{\ell-1} = \sum_{\ell \in \fragment1d}y\vposition{n + d + \ell}\cdot 2^{\ell-1}$.
    \end{itemize}
\end{definition}

\begin{lemma}\label{lem:comp-add}
    Let $L_1, L_2 \subseteq \allStates^n$ denote (non-empty)
    languages with $L_1 \times L_1 \subseteq \CR^n$ and \(L_2 \times L_2 \subseteq \CR^n\),
    let \(S\) denote a \(\sigma\)-defining set for \(\sigvec{L_1} \cap \sigvec{L_2}\),
    let \(\CW_{S}
    \subseteq \sigvec{L_1} \cap \sigvec{L_2}\)  denote a corresponding set of witness vectors, and
    set \(\bar{S} \coloneqq \fragment1n \setminus S\).

    Further, fix a \(\sigma\)-vector \(\vec{s} \in \sigvec{L_1} \cap \sigvec{L_2}\), as well as
    weight-vectors \[
        o \in \{ \degvec{x} \mid x \in L_1 \text{ and } \sigvec{x} = \vec{s}\}
    \quad\text{and}\quad p \in \{ \degvec{x} \mid x \in L_2 \text{ and } \sigvec{x} = \vec{s}\}.\]

    For any two strings \(u \in L_1, v \in L_2\) with \(\sigvec{u} = \sigvec{v} =
    \vec{s}\), we have that
    \(u\) and \(v\) can be joined if and only if there is a string \(z \in L_1 \oplus
    L_2\) with
    \begin{equation}\label{eq:comp-sum}
        \comp{o}{\degvec{u}} + \comp{p}{\degvec{v}}
        \simeq \comp{o + p}{\degvec{z}}.
    \end{equation}
    If \(z\) exists, we have \(z = u \oplus v\).
\end{lemma}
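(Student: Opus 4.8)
The plan is to derive both directions from one auxiliary computation: that the $\sigma$-compression is \emph{additive} under componentwise addition of weight-vectors, while the two checksum coordinates make this additivity ``injective enough'' to decide whether $u$ and $v$ overflow their capacity when combined. Throughout I write $\bar S\coloneqq\fragment1n\setminus S$ and, for $\ell\in S$, abbreviate $C_\ell\coloneqq\lceil(\capvec{\vec s}\vposition\ell+1)/\mname\rceil$. I would first record two elementary properties of the remainder. Since $\remvec{\CW_S}{a}{b}\vposition\ell=\sum_{i\in\bar S}(a\vposition i-b\vposition i)(w_{1,\ell}\vposition i-w_{0,\ell}\vposition i)$ depends linearly on $a-b$ and reads only the coordinates of $a$ in $\bar S$, it is additive, $\remvec{\CW_S}{a}{b}\vposition\ell+\remvec{\CW_S}{a'}{b'}\vposition\ell=\remvec{\CW_S}{a+a'}{b+b'}\vposition\ell$, and is unchanged if its first argument is altered only on $S$. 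Moreover, for any $u\in L_1$ with $\sigvec u=\vec s$, \cref{rem:rem} applied to $L_1$ gives $\degvec u\vposition\ell-o\vposition\ell+\remvec{\CW_S}{\degvec u}{o}\vposition\ell\equiv_\mname 0$ for all $\ell\in S$, and similarly for $v\in L_2$ with $\sigvec v=\vec s$ relative to $p$; adding these and using additivity shows that $o+p$ meets the divisibility that \cref{def:comp} demands of an origin for every vector $\degvec a+\degvec b$ with $a\in L_1$, $b\in L_2$, $\sigvec a=\sigvec b=\vec s$. Since, by \cref{def:comb} and \eqref{eq:oplus-join}, every weight-vector of a string in $L_1\oplus L_2$ with $\sigma$-vector $\vec s$ is of this form, $\comp{o+p}{\degvec z}$ is well-defined for all such $z$, and so is $\comp{o+p}{\degvec u+\degvec v}$ for $u\in L_1$, $v\in L_2$ with $\sigvec u=\sigvec v=\vec s$ (even though $\degvec u+\degvec v$ need not be the weight-vector of any string).

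Next I would verify, coordinate by coordinate against \cref{def:comp}, the key identity
\[
  \comp{o}{\degvec u}+\comp{p}{\degvec v}=\comp{o+p}{\degvec u+\degvec v}
\]
for $u\in L_1$, $v\in L_2$ with $\sigvec u=\sigvec v=\vec s$ (equality in the compressed space, where the $\ell$-th coordinate is taken modulo $\allMax+1$ for $\ell\in\bar S$, modulo $C_\ell$ for $\ell\in S$, and modulo $2n(\allMax+1)$ for $\ell\in\{n+1,n+2\}$). On $\bar S$ both sides are $(\degvec u+\degvec v)\vposition\ell\bmod(\allMax+1)$. On $S$ the numerator of the middle line of \cref{def:comp} splits, by additivity of the remainder, into the numerators belonging to $(\degvec u,o)$ and $(\degvec v,p)$, each of which is divisible by $\mname$ by the previous paragraph, so the division distributes over $\ZZ$ and the identity reduces, modulo $C_\ell$, to the sum of the two compressed coordinates. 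The two checksum coordinates are additive over $\ZZ$, hence modulo $2n(\allMax+1)$.

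I would then handle the two implications. If $u$ and $v$ can be joined, put $z\coloneqq u\oplus v$; then $z\in L_1\oplus L_2$, $\sigvec z=\vec s$, and $\degvec z=\degvec u+\degvec v$ by \eqref{eq:oplus-join}, so the key identity yields \eqref{eq:comp-sum}. Conversely, let $z\in L_1\oplus L_2$ satisfy \eqref{eq:comp-sum}; as the compression on the right has type $\vec s$, necessarily $\sigvec z=\vec s$, hence $\degvec z\vposition\ell\le\capvec{\vec s}\vposition\ell$ for every $\ell$, and by the key identity \eqref{eq:comp-sum} reads $\comp{o+p}{\degvec z}=\comp{o+p}{\degvec u+\degvec v}$. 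Comparing $\bar S$-coordinates gives $\degvec z\vposition\ell\equiv(\degvec u+\degvec v)\vposition\ell\pmod{\allMax+1}$; as $0\le\degvec z\vposition\ell\le\allMax$ and $0\le(\degvec u+\degvec v)\vposition\ell\le2\allMax$, this forces $(\degvec u+\degvec v)\vposition\ell=\degvec z\vposition\ell+k_\ell(\allMax+1)$ with $k_\ell\in\{0,1\}$. The $(n+1)$-st coordinates agree, and since both $\sum_{i\in\bar S}\degvec z\vposition i$ and $\sum_{i\in\bar S}(\degvec u+\degvec v)\vposition i$ are smaller than $2n(\allMax+1)$, the checksum gives $\sum_{i\in\bar S}\degvec z\vposition i=\sum_{i\in\bar S}(\degvec u+\degvec v)\vposition i$, so $\sum_{i\in\bar S}k_i=0$ and every $k_i=0$; thus $\degvec z$ and $\degvec u+\degvec v$ agree on $\bar S$. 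Consequently $\remvec{\CW_S}{\degvec z}{o+p}\vposition\ell=\remvec{\CW_S}{\degvec u+\degvec v}{o+p}\vposition\ell$, so comparing $S$-coordinates gives $\degvec z\vposition\ell-(\degvec u+\degvec v)\vposition\ell\equiv0\pmod{\mname C_\ell}$; since $(\degvec u+\degvec v)\vposition\ell-\degvec z\vposition\ell\ge-\capvec{\vec s}\vposition\ell>-(\capvec{\vec s}\vposition\ell+1)\ge-\mname C_\ell$, each such difference is $\ge0$, and the $(n+2)$-nd checksum (all sums again below $2n(\allMax+1)$) forces $\sum_{\ell\in S}\bigl((\degvec u+\degvec v)\vposition\ell-\degvec z\vposition\ell\bigr)=0$, so every difference vanishes. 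Hence $\degvec u+\degvec v=\degvec z\le\capvec{\vec s}$ componentwise, which by \eqref{eq:oplus-decomp} means $u$ and $v$ can be joined, and since $\sigvec{u\oplus v}=\vec s=\sigvec z$ and $\degvec{u\oplus v}=\degvec u+\degvec v=\degvec z$, we conclude $z=u\oplus v$.

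The main obstacle is the converse implication: from the single equation \eqref{eq:comp-sum}, which only records congruences modulo the small numbers $\allMax+1$, $C_\ell$, and $2n(\allMax+1)$, one must recover the exact identity $\degvec u+\degvec v=\degvec z$ — that is, rule out that $\degvec u+\degvec v$ silently overflows its capacity and wraps around to a vector different from $\degvec z$. This is exactly why the two checksum coordinates were built into \cref{def:comp}, and the argument has to interleave the ``local'' congruences on $S$ and on $\bar S$ with the ``global'' checksum equalities, crucially using $\mname\le\allMax+1$, $\lvert S\rvert,\lvert\bar S\rvert\le n$, and $\degvec z\vposition\ell\le\capvec{\vec s}\vposition\ell$.
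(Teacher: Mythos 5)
Correct, and this is essentially the paper's proof: the substantive steps — verifying per-coordinate additivity of the compression on $\bar S$, on $S$ (via the divisibility guaranteed by \cref{rem:rem} together with the linearity of $\remvec{\CW_S}{\cdot}{\cdot}$ in its first argument), and on the two checksum coordinates, and then in the converse combining the per-coordinate congruences on $\bar S$ and $S$ with the checksum equalities and the bound $\degvec{z}\vposition{\ell}\le\capvec{\vec s}\vposition{\ell}$ to rule out overflow — all coincide with the paper's. Your reorganization around the standalone identity $\comp{o}{\degvec u}+\comp{p}{\degvec v}=\comp{o+p}{\degvec u+\degvec v}$ (extended to the not-necessarily-realized vector $\degvec u+\degvec v$) is a mild presentational streamlining rather than a different route.
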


\begin{proof}
    Fix  two strings \(u \in L_1, v \in L_2\) with \(\sigvec{u} = \sigvec{v} = \vec{s}\), and
    suppose that they can be joined. \Cref{eq:oplus-join} now yields
    \(\degvec{u} + \degvec{v} = \degvec{u \oplus v}\) and, in particular, for each position
    \(\ell \in \fragment1n\), that\[
        \degvec{u}\vposition{\ell} + \degvec{v}\vposition{\ell} \le
        \capvec{\vec{s}}\vposition{\ell} \le \allMax.
    \]
    Now, for \cref{eq:comp-sum},
    only positions \(\ell \in S\) warrant a short justification; for all other positions
    the result is immediate from \cref{def:comp} by observing that ``overflows'' in the checksums cannot occur since those coordinates are taken modulo $3$.
    Hence, for a position \(\ell \in S\), first observe that we have
    \begin{align*}
        \remvec{\CW_S}{\degvec{u}}{o}\vposition{\ell} +
        \remvec{\CW_S}{\degvec{v}}{p}\vposition{\ell}
        &= \remvec{\CW_S}{\degvec{u} + \degvec{v}}{o + p}\vposition{\ell}\\
        &= \remvec{\CW_S}{\degvec{u \oplus v}}{o + p}\vposition{\ell}.
    \end{align*}
    Now, we obtain
    \begin{align*}
        0& \equiv_{\mname}
        \big({\degvec{u}}\vposition{\ell} - o\vposition{\ell}
        + \remvec{\CW_S}{\degvec{u}}{o}\vposition{\ell}\big)
        +
        \big({\degvec{v}}\vposition{\ell} - p\vposition{\ell}
        + \remvec{\CW_S}{\degvec{v}}{p}\vposition{\ell}\big)\\
        &=
        {\degvec{u \oplus v}}\vposition{\ell} - (o + p)\vposition{\ell}
        + \remvec{\CW_S}{\degvec{u \oplus v}}{o + p}\vposition{\ell},
    \end{align*} which yields the claim.

    For the other direction,
    fix  two strings \(u \in L_1, v \in L_2\) with \(\sigvec{u} = \sigvec{v} = \vec{s}\), and
    suppose that there is a string \(z \in L_1 \oplus L_2\)
    with \(\comp{o}{\degvec{u}} + \comp{p}{\degvec{v}} \simeq \comp{o + p}{\degvec{z}}\).
    We proceed to show that then, indeed, \(u\) and \(v\) can be joined and \(z = u \oplus v\).

    First, consider the positions in the set \(\bar{S}\), and in particular, fix an \(\ell \in
    \bar{S}\). Now, we have
    \[
        \degvec{z}\vposition{\ell}
        = \comp{o + p}{\degvec{z}}\vposition{\ell}
        \equiv_{\allMax + 1}
        \comp{o}{\degvec{u}}\vposition{\ell} + \comp{p}{\degvec{v}}\vposition{\ell}
        = \degvec{u}\vposition{\ell} + \degvec{v}\vposition{\ell}.
    \]
    In combination with
    \(0 \le \degvec{z}\vposition{\ell} \le \capvec{\vec{s}}\vposition{\ell} \le \allMax + 1\) and \(0 \le \degvec{u}\vposition{\ell}
        + \degvec{v}\vposition{\ell}\),
    we obtain
    \begin{equation}
        \label{eq:ineq-B-positions}
        \degvec{z}\vposition{\ell}
        \le
        \degvec{u}\vposition{\ell} + \degvec{v}\vposition{\ell}.
    \end{equation}
    Now, we exploit the ``checksums''.
    For $k \in \fragment1d$, we have
    \[\comp{o}{\degvec{u}}\vposition{n + k} + \comp{p}{\degvec{v}}\vposition{n + k} = (\comp{o}{\degvec{u}} + \comp{p}{\degvec{v}})\vposition{n + k}\]
    since those coordinates are taken modulo $3$ and both
    $\comp{o}{\degvec{u}}\vposition{n + k} \in \{0,1\}$ and
    $\comp{p}{\degvec{v}}\vposition{n + k} \in \{0,1\}$.
    Further, as \(u \in L_1\), \(v \in L_2\), and \(z \in L_1 \oplus L_2\), we have that, for
    all positions \(i \in \fragment1n\),
    \begin{align*}
        &0\le \degvec{u}\vposition{i} \le \capvec{\vec{s}}\vposition{i} \le \allMax,\\
        &0 \le\degvec{v}\vposition{i} \le \capvec{\vec{s}}\vposition{i} \le \allMax,\\
        \quad\text{and}\quad
        &0 \le \degvec{z}\vposition{i} \le \capvec{\vec{s}}\vposition{i} \le \allMax,
    \end{align*}
    and hence,
    \[
        0 \leq \sum_{i \in \bar{S}}\degvec{u}\vposition{i}
        + \sum_{i \in \bar{S}} \degvec{v}\vposition{i}  < 2n(\allMax+1)
        \;\,\text{and}\;\,
        0 \leq \sum_{i \in \bar{S}}\degvec{z}\vposition{i} < 2n(\allMax+1).
    \]
    Together, it follows that
    \begin{align*}
        \sum_{i \in \bar{S}}\degvec{z}\vposition{i}
        &= \sum_{k \in \fragment1d}\comp{o + p}{\degvec{z}}\vposition{n + k}\cdot 2^{k-1}
        \\
        &= \sum_{k \in \fragment1d}(\comp{o}{\degvec{u}} + \comp{p}{\degvec{v}})\vposition{n + k}\cdot 2^{k-1}
        \\
        &= \sum_{k \in \fragment1d}\comp{o}{\degvec{u}}\vposition{n + k} \cdot 2^{k-1} + \comp{p}{\degvec{v}}\vposition{n + k}\cdot 2^{k-1}
        \\
        &= \sum_{i \in \bar{S}}\degvec{u}\vposition{i}
        + \sum_{i \in \bar{S}}\degvec{v}\vposition{i}.
    \end{align*}
    Hence, in combination with \cref{eq:ineq-B-positions}, we obtain the desired
    \begin{equation*}
        \degvec{u}\vposition{\ell} + \degvec{v}\vposition{\ell}
        = \degvec{z}\vposition{\ell} \le \capvec{\vec{s}}\vposition{\ell}.
    \end{equation*}
    Therefore, indeed
    \(
    \degvec{u \oplus v}\vposition{\ell} =
    \degvec{u}\vposition{\ell} + \degvec{v}\vposition{\ell} =
    \degvec{z}\vposition{\ell}.\)

    Next, consider the positions in the set \(S\), and in particular, fix an \(\ell \in S\).
    The overall proof strategy is the same as before. We only need to adapt to the
    slightly more complicated definition of \(\comp{\star}{\star}\vposition{\ell}\).
    Writing \(\mname' \coloneqq \lceil(\capvec{\vec{s}}\vposition{\ell} +
    1)/\mname\rceil\) and applying \cref{rem:decompress}, we obtain
    \begin{align*}
        \degvec{z}\vposition{\ell}
        &\equiv_{\mname\cdot\mname'}
        \mname \cdot \comp{o + p}{\degvec{z}}\vposition{\ell}
        + (o + p)\vposition{\ell}
    - \remvec{\CW_S}{\degvec{u} + \degvec{v}}{o + p}\vposition{\ell} \\
        &\equiv_{\mname\cdot\mname'}
        \mname \cdot (\comp{o}{\degvec{u}}\vposition{\ell} + \comp{p}{\degvec{v}}\vposition{\ell})
        + (o\vposition{\ell} + p\vposition{\ell})
    - \remvec{\CW_S}{\comp{o + p}{\degvec{z}}}{o + p}\vposition{\ell}\\
        &\equiv_{\mname\cdot\mname'}
            \big(\mname \cdot \comp{o}{\degvec{u}}\vposition{\ell}
            + o\vposition{\ell}
        - \remvec{\CW_S}{\comp{o}{\degvec{u}}}{o}\vposition{\ell}\big)\\
        &\qquad+ \big(\mname \cdot \comp{p}{\degvec{v}}\vposition{\ell}
            + p\vposition{\ell}
        - \remvec{\CW_S}{\comp{p}{\degvec{v}}}{p}\vposition{\ell}\big)\\
        &\equiv_{\mname\cdot\mname'}
        \degvec{u}\vposition{\ell} + \degvec{v}\vposition{\ell}.
    \end{align*}
    In combination with
    \(0 \le \degvec{z}\vposition{\ell} \le \capvec{\vec{s}}\vposition{\ell} \le \mname\cdot
    \mname'\) and \(0 \le \degvec{u}\vposition{\ell}
        + \degvec{v}\vposition{\ell}\),
    we obtain
    \begin{equation}
        \label{eq:ineq-A-positions}
        \degvec{z}\vposition{\ell} \le
        \degvec{u}\vposition{\ell} + \degvec{v}\vposition{\ell}.
    \end{equation}
    Again, we exploit the ``checksums''.
    For $k \in \fragment1d$, we have
    \[\comp{o}{\degvec{u}}\vposition{n + d + k} + \comp{p}{\degvec{v}}\vposition{n + d + k} = (\comp{o}{\degvec{u}} + \comp{p}{\degvec{v}})\vposition{n + d + k}\]
    since those coordinates are taken modulo $3$ and
    both $\comp{o}{\degvec{u}}\vposition{n + d + k} \in \{0, 1\}$ and
    $\comp{p}{\degvec{v}}\vposition{n + d + k} \in \{0,1\}$.
    Also, as \(u \in L_1\), \(v \in L_2\), and \(z \in L_1 \oplus L_2\), we have that, for
    all positions \(i \in \fragment1n\),
    \begin{align*}
        &0\le \degvec{u}\vposition{i} \le \capvec{\vec{s}}\vposition{i} \le \allMax,\\
        &0 \le\degvec{v}\vposition{i} \le \capvec{\vec{s}}\vposition{i} \le \allMax,\\
        \quad\text{and}\quad
        &0 \le \degvec{z}\vposition{i} \le \capvec{\vec{s}}\vposition{i} \le \allMax,
    \end{align*}
    and hence,
    \[
        0 \leq \sum_{i \in {S}}\degvec{u}\vposition{i}
        + \sum_{i \in {S}} \degvec{v}\vposition{i}  < 2n(\allMax+1)
        \;\,\text{and}\;\,
        0 \leq \sum_{i \in {S}}\degvec{z}\vposition{i} < 2n(\allMax+1).
    \]
    Together, it follows that
    \begin{align*}
        \sum_{i \in S}\degvec{z}\vposition{i}
        &= \sum_{k \in \fragment1d}\comp{o + p}{\degvec{z}}\vposition{n + d + k}\cdot 2^{k-1}
        \\
        &= \sum_{k \in \fragment1d}(\comp{o}{\degvec{u}} + \comp{p}{\degvec{v}})\vposition{n + d + k}\cdot 2^{k-1}
        \\
        &= \sum_{k \in \fragment1d}\comp{o}{\degvec{u}}\vposition{n + d + k} \cdot 2^{k-1} + \comp{p}{\degvec{v}}\vposition{n + d + \ell}\cdot 2^{k-1}
        \\
        &= \sum_{i \in S}\degvec{u}\vposition{i}
        + \sum_{i \in S}\degvec{v}\vposition{i}.
    \end{align*}
    Hence, in combination with \cref{eq:ineq-A-positions}, we obtain the desired
    \begin{equation*}
        \degvec{u}\vposition{\ell} + \degvec{v}\vposition{\ell}
        = \degvec{z}\vposition{\ell} \le \capvec{\vec{s}}\vposition{\ell}.
    \end{equation*}
    Therefore, indeed
    \(
    \degvec{u \oplus v}\vposition{\ell} =
    \degvec{u}\vposition{\ell} + \degvec{v}\vposition{\ell} =
    \degvec{z}\vposition{\ell}.\)

    Overall, we obtain that $u$ and $v$ can be joined
    and that $z = u \oplus v$, which completes the proof.
\end{proof}

Finally, we are ready to give algorithms.
First, we discuss how to compute a
\(\sigma\)-defining set \(S\) (as well as witness strings that certify that \(S\) is indeed a
minimal set).

\begin{lemma}\label{lem:sig-def-set}
    Given a language \(L \subseteq \allStates^n\),
    we can compute a \(\sigma\)-defining set \(S\) for \(\sigvec{L}\),
    as well as a set of witness vectors \(\CW_{S}\) for \(S\),
    in time \(\Oh( |L| \cdot n^4)\).
\end{lemma}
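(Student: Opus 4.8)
The plan is to compute \(S\) by greedy deletion. Call a set \(S' \subseteq \fragment1n\) \emph{separating} if \(\sigvec{u}\vposition{S'} = \sigvec{v}\vposition{S'}\) implies \(\sigvec{u} = \sigvec{v}\) for all \(u, v \in L\); by \cref{def:sig-set} a \(\sigma\)-defining set for \(\sigvec{L}\) is precisely an inclusion-minimal separating set, and the full index set \(\fragment1n\) is trivially separating. So I would start from \(\fragment1n\) and delete positions one at a time, keeping a deletion only when the resulting set is still separating. First I would precompute the multiset \(V \coloneqq \sigvec{L}\) of \(\sigma\)-vectors, by iterating over \(L\) and applying \cref{def:vectors} coordinatewise; this produces at most \(|L|\) vectors in \(\{0,1\}^n\) in time \(\Oh(|L|\cdot n)\), and duplicates are harmless since equal vectors never witness a failure of separation. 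The workhorse is a \emph{separation test}: given a candidate \(S' \subseteq \fragment1n\), decide whether \(S'\) is separating and, if not, return a violating pair. I would implement this by radix-sorting \(V\) according to the projection \(x \mapsto x\vposition{S'}\) so that vectors with equal \(S'\)-projection become contiguous, then scanning each maximal contiguous block and comparing its first vector to every other vector in the block (by transitivity this detects any disagreement, and across blocks the \(S'\)-projections differ so no disagreement is possible). This runs in time \(\Oh(|L|\cdot n)\), returning on failure two vectors with equal \(S'\)-projection but different full projection.

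The main loop processes the positions \(1, \dots, n\) in order, maintaining a current separating set \(S\) initialised to \(\fragment1n\): for each \(i\), run the separation test on \(S \setminus \{i\}\) and replace \(S\) by \(S \setminus \{i\}\) if and only if the test succeeds. This costs \(n\) invocations of the subroutine, i.e.\ \(\Oh(|L|\cdot n^2)\), and the final \(S\) is clearly separating. Inclusion-minimality (hence \(\sigma\)-definingness) follows from the monotonicity fact that every superset of a separating set is separating (immediate from the definition): for any \(i\) in the final \(S\), at the step where \(i\) was processed the then-current set \(S_i \supseteq S\) had \(S_i \setminus \{i\}\) non-separating (otherwise the deletion would have been performed), so by monotonicity its subset \(S \setminus \{i\}\) is non-separating too. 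Thus \(S \setminus \{i\}\) is not \(\sigma\)-defining for any \(i \in S\).

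To produce the witnesses \(\CW_S\), I would run one extra pass: for each \(i \in S\), call the separation test on \(S \setminus \{i\}\), which must fail by the previous paragraph, and take the returned pair \(\sigvec{u}, \sigvec{v} \in \sigvec{L}\). These agree on \(S \setminus \{i\}\) but \(\sigvec{u} \neq \sigvec{v}\); since \(S\) is separating we have \(\sigvec{u}\vposition{S} \neq \sigvec{v}\vposition{S}\), so the disagreement on \(S\) is confined to position \(i\), i.e.\ \(\{\sigvec{u}\vposition{i}, \sigvec{v}\vposition{i}\} = \{0,1\}\). Setting \(w_{1,i}\) to be whichever of \(\sigvec{u}, \sigvec{v}\) carries a \(1\) at position \(i\) and \(w_{0,i}\) the other yields precisely the witness vectors required in \cref{rem:witness}. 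This pass again costs \(\Oh(|L|\cdot n^2)\), so the total running time is \(\Oh(|L|\cdot n^2)\), well within the claimed \(\Oh(|L|\cdot n^4)\).

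The only mildly delicate points are the monotonicity argument that greedy deletion lands on an inclusion-minimal separating set, and the observation that a pair witnessing non-separation of \(S \setminus \{i\}\) differs in exactly the coordinate \(i\); both are short and sketched above. There is no real obstacle here — the remaining work (the sorting-based test, bookkeeping the running time) is routine, and the \(\Oh(|L|\cdot n^4)\) budget leaves considerable slack, so one could even afford a far coarser implementation (e.g.\ testing all pairs inside each block).
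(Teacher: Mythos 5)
Your proof is correct, and it sharpens the paper's argument in a small but genuine way. The paper's algorithm \emph{repeatedly} sweeps over the positions and deletes whenever possible, stopping only at a fixpoint; since each of the up to \(n\) sweeps costs \(\Oh(n)\) separation tests at \(\Oh(|L|\cdot n^2)\) each (the paper's test is a full sort), this gives the stated \(\Oh(|L|\cdot n^4)\) bound. You instead observe the monotonicity of separation --- any superset of a separating set is separating, equivalently any subset of a non-separating set is non-separating --- and use it to argue that a \emph{single} left-to-right pass already yields an inclusion-minimal separating set: once a position is found non-deletable relative to the then-current set \(S_i\), it stays non-deletable relative to the final \(S \subseteq S_i\). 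This is a clean invariant that the paper leaves implicit (or, more precisely, sidesteps by iterating to a fixpoint), and it brings the running time down to \(\Oh(|L|\cdot n^2)\). Your witness-extraction step is also correct: a pair returned by a failed test on \(S\setminus\{i\}\) must, by separatingness of \(S\), disagree on \(S\) while agreeing on \(S\setminus\{i\}\), hence disagree precisely at \(i\). Both approaches are sound and both fit within the claimed \(\Oh(|L|\cdot n^4)\) budget; yours is simply tighter and makes the minimality argument more transparent.
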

\begin{proof}
    Given a language $L \subseteq \allStates^n$, we first compute the set
    $\sigvec{L}$ of all $\sigma$-vectors.
    Then, starting with \(S \coloneqq \fragment1n\), we repeatedly
    iterate over the positions \(1\) to
    \(n\); for each position \(i\), we check if \(i\) can be removed from \(S\), that is, whether
    removing the \(i\)-th position from each vector in \(\sigvec{L}\) does not decrease the size
    of \(\sigvec{L}\). If removing the \(i\)-th position would decrease the size of
    \(\sigvec{L}\), we
    also store two witness vectors that differ only at position \(i\), but not at the
    remaining positions in \(S\).
    We stop when no further positions can be removed; we return the resulting
    set~\(S\), as well as the corresponding pairs of witness vectors.

    We can check if a position \(i \in S\) can be removed by checking if the (multi-)set
    \[\sigvec{L}_{S,i} \coloneqq \big\{ v\position{S\setminus \{i\}} \mid v \in \sigvec{L}\big\}\] (where position
    \(i\) is removed) contains a duplicate element. This we can do by a linear scan over
    \(\sigvec{L}_{S,i}\)
    to construct \(\sigvec{L}_{S,i}\), sorting \(\sigvec{L}_i\), and then another linear scan
    over \(\sigvec{L}_{S,i}\).
    Observe that if we indeed detect a duplicate, we have also found the required witness.

    For the correctness, observe that during the algorithm we maintain that the positions
    in \(S\) uniquely identify the vectors in \(\sigvec{L}\); as we maintain the size of
    \(\sigvec{L}\),
    the resulting set does indeed also uniquely identify the vectors in the initial set
    \(\sigvec{L}\). Further, our algorithm trivially ensures that \(S\) is minimal, and hence, the
    returned set \(S\) is indeed \(\sigma\)-defining for \(\sigvec{L}\).

    For the running time, we can compute \(\sigvec{L}\)
    in time \(\Oh(|L| \log |\sigvec{L}| \cdot n)
    = \Oh( |L| \cdot n^2 )\)
    by iterating over \(L\) and computing each \(\sigma\)-vector separately; filtering out
    duplicates by using an appropriate data structure.
    Next, observe that we iterate over all positions in \(S\) at most \(n\) times; in each
    iteration, we check for at most \(n\) positions whether they can be removed from \(S\).
    The check if we can remove a position from \(S\) runs in the time it takes to sort
    \(\sigvec{L}\), which we can bound by \(\Oh( |\sigvec{L}| \log |\sigvec{L}| \cdot n ) = \Oh( |L| \cdot n^2 )\).
    Hence, in total the algorithm runs in the claimed running time of \(\Oh(|L|\cdot n^4)\), which
    completes the proof.
\end{proof}

Lastly, we prove the promised main result, which we restate here for convenience.

\joinalgo*

\begin{proof}
    Given the languages \(L_1\) and \(L_2\), we first compute \(\sigvec{L_1} \cap
    \sigvec{L_2}\) and drop any strings from \(L_1\) and \(L_2\) whose \(\sigma\)-vectors
    are not in \(\sigvec{L_1} \cap \sigvec{L_2}\).
    Next, we use \cref{lem:sig-def-set} to compute a \(\sigma\)-defining set \(S\) for
    \(\sigvec{L_1} \cap \sigvec{L_2}\) as well as a set of witness vectors \(\CW_S\).
    Now, for each \(\sigma\)-vector \(\vec{s} \in \sigvec{L_1} \cap \sigvec{L_2}\), we
    compute the sets \[
        \degvec{L_{1,\vec{s}}} \coloneqq \{ \degvec{u} \mid u \in L_1 \text{ and } \sigvec{u} = \vec{s}\}
        \quad\text{and}\quad
        \degvec{L_{2,\vec{s}}} \coloneqq \{ \degvec{v} \mid v \in L_2 \text{ and } \sigvec{v} = \vec{s}\}.
    \] Next, we pick arbitrary vectors \(o \in L_{1,\vec{s}}\) and \(p\in L_{2,\vec{s}}\), and compute
    the functions \(f_1, f_2 : \CZ_{S, \vec{s}} \to \ZZ\)
    \begin{equation}
     \label{eq:def-function-for-convolution}
        f_1(x) \coloneqq \begin{cases}
                          1 &\text{if } x = \comp{o}{u} \text{ for some \(u \in L_{1, \vec{s}}\)},\\
                          0 &\text{otherwise;}
                         \end{cases}
        \quad\text{and}\quad
        f_2(y) \coloneqq \begin{cases}
                          1 &\text{if }y = \comp{p}{v} \text{ for some \(v \in L_{2, \vec{s}}\)},\\
                          0 &\text{otherwise.}
                         \end{cases}
    \end{equation}
    Using \cref{thm:alg-convolution}, we compute the function \(h : \CZ_{S, \vec{s}} \to \ZZ,\)
    \[h(a) \coloneqq \sum_{x + y = a} f_1(x) \cdot f_2(y).\]
    Using \cref{rem:decompress}, we compute the set \(\degvec{L_{1,2,\vec{s}}}\) of all
    weight-vectors whose compression has a positive value for \(h\):
    \[
        \degvec{L_{1,2,\vec{s}}} \coloneqq \{ z \mid \exists a \in \CZ_{S,\vec{s}}\colon h(a) > 0 \text{ and } \comp{o + p}{z} \simeq a\}.
    \]
    Note that, for each $a \in \CZ_{S,\vec{s}}$, there is unique candidate vector $\comp{o + p}{z} \simeq a$ obtained from $a$ by ``normalizing'' the checksums to the standard binary representation.
    Iterating over \(\degvec{L_{1,2,\vec{s}}}\), we uniquely reconstruct a string
    from the weight vector and \(\vec{s}\) in the straightforward way to obtain\[
        L_{1,2,\vec{s}} \coloneqq \{ z \in \allStates^n \mid \sigvec{z} = \vec{s} \text{ and }
        \degvec{z} \in \degvec{L_{1,2,\vec{s}}} \}.
    \]
    Finally, we return the union \(L_{1,2}\) of all sets \(L_{1,2,\vec{s}}\) computed,\[
        L_{1,2} \coloneqq \bigcup_{\vec{s}\in \sigvec{L_1} \cap \sigvec{L_2}} L_{1,2,\vec{s}}.
    \]

    For the correctness, first observe that by \cref{lem:comb-1}, any string \(z \in
    L_1 \oplus L_2\) has the same \(\sigma\)-vector as a string in \(L_1\) and a string in
    \(L_2\). Hence, we can indeed compute the strings in \(L_1 \oplus L_2\) for each
    \(\sigma\)-vector separately.
    Next, by \cref{lem:comp-add,rem:decompress}, the set \(\degvec{L_{1,2,\vec{s}}}\) is indeed
    the set of all weight-vectors of strings in \(L_1 \oplus L_2\) with \(\sigma\)-vector
    \(\vec{s}\):\[
        \degvec{L_{1,2,\vec{s}}} = \degvec{\{ z \in L_1 \oplus L_2 \mid \sigvec{z} = \vec{s} \}}.
    \]
    Hence, in total, the algorithm does indeed compute \(L_1 \oplus L_2 = L_{1,2}\).

    For the running time,
    we can compute the sets \(\sigvec{L_1}\) and \(\sigvec{L_2}\)
    in total time \[\Oh\big((|L_1| + |L_2|) \log(|\sigvec{L_1}| + |\sigvec{L_2}|) \cdot n\big)
    = \Oh( \max\{|L_1|, |L_2|\} \cdot n^2 )\]
    by iterating over \(L_1\) (or \(L_2\))
    and computing each \(\sigma\)-vector separately; filtering out
    duplicates by using an appropriate data structure (note that $|\sigvec{L_1}|,|\sigvec{L_2}| \leq 2^n$).
    Afterward, we can compute \(\sigvec{L_1} \cap \sigvec{L_2}\) in the same running time
    by using standard algorithms for merging sets.

    Using the algorithm from \cref{lem:sig-def-set}, we can compute
    the \(\sigma\)-defining set \(S\) (as well as the corresponding witness vectors) in
    time \(\Oh(\max\{|L_1|, |L_2|\} \cdot n^4)\).
    As \(S\) is \(\sigma\)-defining for \(\sigvec{L_1}\cap\sigvec{L_2}\),
    we have \(|\sigvec{L_1}\cap\sigvec{L_2}| \le 2^{|S|}\).
    Now, for a fixed \(\sigma\)-vector \(\vec{s} \in \sigvec{L_1}\cap\sigvec{L_2}\),
    write \(k(\vec{s}) \coloneqq |\{i \in S \mid \vec{s}\vposition{i} = 1\}|\)
    for the number of entries $1$ of the vector $\vec{s}$ on positions from $S$.
    Recalling \cref{rem:comp-imp}, we see that the application of
    \cref{thm:alg-convolution} takes time
    \[
        (n + \allMax)^{\Oh(1)} \cdot (\allMax + 1)^{n - |S|}
        \left\lceil\frac{\sigMax + 1}{\mname}\right\rceil^{k(\vec{s})}
        \left\lceil\frac{\rhoMax + 1}{\mname}\right\rceil^{|S|-k(\vec{s})}.
    \]
    Using the notation of \cref{rem:find-prime-alg}, we set $M \coloneqq |\CZ_{S, \vec{s}}|$ and observe that $D' = (\allMax)^{\Oh(1)}$
    to compute the desired prime $p$ and the required roots of unity in the desired time.

    Finally, recovering \(L_{1,2,\vec{s}}\) can be done with a linear pass over \(\CZ_{S,\vec{s}}\) in
    the same running time; combining the recovered (disjoint) sets can then be done in
    linear time of the returned result \(L_1 \oplus L_2\).

    In total, the algorithm thus runs in time
    \begin{align*}
        &(n + 2^{\allMax})^{\Oh(1)} \cdot \max\{|L_1|, |L_2|, |L_1 \oplus L_2|\}\\
        &\quad +
        (n + 2^{\allMax})^{\Oh(1)} \cdot
        {(\allMax + 1)^{n - |S|} \cdot \sum_{k = 0}^{|S|} \binom{|S|}{k}
        \left\lceil\frac{\sigMax + 1}{\mname}\right\rceil^{k}
        \left\lceil\frac{\rhoMax + 1}{\mname}\right\rceil^{|S|-k}}.
    \end{align*}
    Using \cref{claim:upper-bound,lem:comb-upper}, the running time simplifies to \[
        (n + 2^{\allMax})^{\Oh(1)} \cdot
        {(\allMax + 1)^{n - |S|} \cdot \sum_{k = 0}^{|S|} \binom{|S|}{k}
        \left\lceil\frac{\sigMax + 1}{\mname}\right\rceil^{k}
        \left\lceil\frac{\rhoMax + 1}{\mname}\right\rceil^{|S|-k}}.
    \]
    Finally, \cref{lem:upper-bound-nicified} yields the tidier upper bound, which completes the
    proof.
\end{proof}

\subsection{Faster Algorithms for Generalized Dominating Set Problems}

Finally, we use \cref{thm:join-algorithm} to obtain faster algorithms for \srDomSet; that
is, we prove \cref{thm:main-alg-m-str}, which we restate here for convenience.

\mainalgmstr*
\begin{proof}
    For ease of notation, let us define $\tau \coloneqq \allMax + 1$ if $\mname \geq 3$ or
    $\allMax$ is odd or $\min\{\sigMax,\rhoMax\} < \allMax$,
    and $\tau \coloneqq \allMax +
    2$ if $\mname = 2$, $\allMax$ is even, and $\sigMax = \rhoMax = \allMax$.

    Let $(T,\beta)$ denote the nice tree decomposition of $G$.
    For $t \in V(T)$, we set $X_t \coloneqq \beta(t)$ and
    write $V_t$ for the set of vertices contained in bags below $t$ (including $t$ itself).

    For each node $t \in V(T)$ and each $i \in \fragmentco{0}{\mname}$, we compute the
    language $L_{t,i} \subseteq \allStates^{X_t}$ of all strings $x \in \allStates^{X_t}$
    that are compatible with $(G\position{V_t},X_t)$\footnote{To follow standard notation
    for dynamic programming algorithms on tree decompositions, we use $X_t$ here to denote the set of portal vertices.} via a witnessing solution set
    $S_x$ such that $|S_x \setminus X_t| \equiv_{\mname} i$.
    We have that $L_{t,i} \times L_{t,i} \subseteq \CR^{|X_t|}$ by
    \cref{la:hamming-distance-parity-mod-p-general}, and hence, \cref{thm:upper-bound} yields
    \begin{equation}
        \label{eq:language-size-bound-dp}
        |L_{t,i}| \leq \tau^{|X_t|} \leq \tau^{\tw + 1}.
    \end{equation}

    We compute the sets $L_{t,i}$ for nodes $t \in V(T)$ in a bottom-up fashion
    starting at the leaves of $T$.

    For a leaf $t$ of $T$, we have $X_t = V_t = \emptyset$ and
    $L_{t,i} = \{\varepsilon\}$ for every $i \in \fragmentco{0}{\mname}$
    (where $\varepsilon$ denotes the empty string).

    For an internal node $t$, suppose we already computed all sets
    $L_{t',i}$ for all children $t'$ of $t$.
    We proceed depending on the type of \(t\).

    \begin{description}
        \item[Forget:]
            First, suppose $t$ is a forget-node, and let $t'$ denote the unique child of $t$.
            Also, assume that $X_{t'} = X_t \cup \{v\}$, that is,
            $v \in V(G)$ is the vertex forgotten at $t$.
            We say that a string $x \in \allStates^{X_{t'}}$ \emph{is $\rho$-happy at $v$}
            if $x\position{v} = \rho_c$ and $c \in \rho$.
            Also, we say that a string $x \in \allStates^{X_{t'}}$ \emph{is $\sigma$-happy at $v$}
            if $x\position{v} = \sigma_d$ and $d \in \sigma$.
            It is easy to see that
            \begin{align*}
             L_{t,i} = \;\;\; &\{x\position{X_t} \mid x \in L_{t',i} \text{ such that $x$ is $\rho$-happy at }v\}\\
                       \cup\; &\{x\position{X_t} \mid x \in L_{t',i-1} \text{ such that $x$ is $\sigma$-happy at }v\},
            \end{align*}
            where the index $i-1$ is taken modulo $\mname$.
            Hence, using \cref{eq:language-size-bound-dp},
            for each \(i \in \fragmentco0{\mname}\),
            we can compute the set $L_{t,i}$ in time
            $\tau^\tw \cdot (\allMax + \tw)^{\Oh(1)}$.
        \item[Introduce:]
            Next, consider the case that $t$ is an introduce-node, and let $t'$ denote
            the unique child of $t$.
            Suppose that $X_t = X_{t'} \cup \{v\}$, that is,
            $v \in V(G)$ is the vertex introduced at $t$.
            Note that we have $N_G(v) \cap V_t \subseteq X_{t'}$.
            We say a string $x \in \allStates^{X_t}$ \emph{$\rho$-extends} a
            string $y \in \allStates^{X_{t'}}$ if
            \begin{enumerate}
                \item $x\position{v} = \rho_c$ for some $c \in \{0,\dots,\rhoMax\}$,
                \item $c = |\{w \in N_G(v) \mid y\position{w} \in \sigStates\}|$, and
                \item $x\position{X_{t'}} = y\position{X_{t'}}$.
            \end{enumerate}
            A string $x \in \allStates^{X_t}$ \emph{$\sigma$-extends} a
            string $y \in \allStates^{X_{t'}}$ if
            \begin{enumerate}
                \item $x\position{v} = \sigma_d$ for some $d \in \{0,\dots,\sigMax\}$,
                \item $d = |\{w \in N_G(v) \mid y\position{w} \in \sigStates\}|$,
                \item $\sigvec{x}\position{w} = \sigvec{y}\position{w}$ for all $w \in X_{t'}$,
                \item $\degvec{x}\position{w} = \degvec{y}\position{w}$ for all
                    $w \in X_{t'} \setminus N_G(v)$, and
                \item $\degvec{x}\position{w} = \degvec{y}\position{w} + 1$
                    for all $w \in X_{t'} \cap N_G(v)$.
            \end{enumerate}
            Finally, a string $x \in \allStates^{X_t}$ \emph{extends} a
            string $y \in \allStates^{X_{t'}}$ if it $\rho$-extends $y$ or it $\sigma$-extends $y$.
            We have that
            \begin{align*}
                L_{t,i} &= \{x \in \allStates^{X_t} \mid x \text{ extends some $y \in L_{t',i}$}\}.
            \end{align*}
            Since, for every string $y \in \allStates^{X_{t'}}$, there are at most two strings
            $x \in \allStates^{X_t}$ such that $x$ extends $y$, all of the languages
            $L_{t,i}$ can be computed by iterating over all sets $L_{t',i}$ once.
            This takes time $\tau^\tw \cdot (\allMax + \tw)^{\Oh(1)}$ using
            \cref{eq:language-size-bound-dp}.
        \item[Join:]
            Finally, suppose that $t$ is a join-node, and let $t_1,t_2$ denote the two children
            of $t$.
            Observe that $X_t = X_{t_1} = X_{t_2}$.

            To compute the sets $L_{t,i}$, we intend to rely on the algorithm from
            \cref{thm:join-algorithm}.
            However, this is not directly possible since the weight-vectors of the resulting
            strings would not be correct.
            Indeed, suppose that $x_1,x_2 \in \allStates^{X_t}$ are strings that are
            compatible with $(G\position{V_{t_1}},X_{t_1})$ and $(G\position{V_{t_2}},X_{t_2})$,
            respectively.
            Also, suppose that $x_1$ and $x_2$ can be joined.
            Then, $x_1 \oplus x_2$ is not (necessarily) compatible with $(G\position{V_t},X_t)$
            since, for every $v \in X_t$, the vertices from $N_G(v) \cap X_t$
            that are contained in the solution set are counted twice.
            For this reason, we first modify all the strings from the languages
            $L_{t_1,i}$ such that indices do not take solution vertices from the set
            $X_t$ into account.

            For each \(i \in \fragmentco0{\mname}\) and each
            $x \in L_{t_1,i}$, we perform the following steps.
            We define the string $\widehat{x} \in \allStates^{X_t}$ as
            \[\widehat{x}\position{v} \coloneqq \begin{cases}
                                                 \rho_{\widehat c} &\text{if } x\position{v} = \rho_{c} \text{ and } c = \widehat c + |\{w \in N_G(v) \cap X_t \mid x\position{w} \in \sigStates\}|,\\
                                                 \sigma_{\widehat d} &\text{if } x\position{v} = \sigma_{d} \text{ and } d = \widehat d + |\{w \in N_G(v) \cap X_t \mid x\position{w} \in \sigStates\}|
                                                \end{cases}
            \]
            and add $\widehat{x}$ to the set $\widehat L_{t_1,i}$.
            Observe that, for each \(i \in \fragmentco{0}{\mname}\), we can compute
            the set $\widehat L_{t_1,i}$ in time $\tau^\tw \cdot (\allMax + \tw)^{\Oh(1)}$
            using \cref{eq:language-size-bound-dp}.

            Now, we easily observe that
            \[L_{t,i} = \bigcup_{j \in \fragmentco{0}{\mname}} \widehat L_{t_1,j} \oplus
                L_{t_2,i-j},\]
            where the index $i-j$ is taken modulo $\mname$.
            We iterate over all choices of $i,j \in \fragmentco{0}{\mname}$
            and compute $\widehat L_{t_1,j} \oplus L_{t_2,i-j}$ using \cref{thm:join-algorithm}.
            To that end, we need to ensure that the requirements of
            \cref{thm:join-algorithm} are satisfied.
            We have already argued that $L_{t_2,i-j} \times L_{t_2,i-j} \subseteq \CR^{|X_t|}$.
            Further, $\widehat L_{t_1,j}$ is realized by $(G\position{V_t} - E(X_t,X_t),X_t)$
            (that is, the graph obtained from $G\position{V_t}$ by removing all edges within
            $X_t$), and hence, $\widehat L_{t_1,j} \times \widehat L_{t_1,j} \subseteq \CR^{|X_t|}$ using
            \cref{la:hamming-distance-parity-mod-p-general}.

            Overall, this allows us to compute the join in time
            $\tau^\tw \cdot \tw^{\Oh(1)} \cdot 2^{(\allMax)^{\Oh(1)}}$
            using \cref{thm:join-algorithm}.
    \end{description}
    Since processing a single node of $T$ takes time
    $\tau^\tw \cdot \tw^{\Oh(1)} \cdot 2^{(\allMax)^{\Oh(1)}}$
    and we have $|V(T)| = \Oh(\tw \cdot |V(G)|)$, we see that all sets $L_{t,i}$ can be
    computed in the desired time.

    To decide whether $G$ has a $(\sigma,\rho)$-set, we consider the root node
    $t \in V(T)$ for which $X_t = \emptyset$ and $V_t = V(G)$.
    Then, $G$ has a $(\sigma,\rho)$-set if and only if $\varepsilon \in L_{t,i}$ for some
    $i \in \fragmentco{0}{\mname}$, which completes the proof.
\end{proof}

Next, we explain how to extend the algorithm to the optimization and counting version of the problem.
For the optimization version, it is easy to see that we can keep track of the size of partial solutions in the dynamic programming tables.
This increases the size of all tables by a factor of $|V(G)|$.
Hence, we obtain the following theorem for the optimization version.

\begin{theorem}\label{thm:main-alg-m-str-opt}
    Let $(\sigma,\rho)$ denote finite, $\mname$-structured sets for some $\mname \geq 2$.
    Then, there is an algorithm $\CA$ that, given a graph $G$, an integer $k$, and a nice tree decomposition
    of $G$ of width $\tw$, decides whether $G$ has a $(\sigma,\rho)$-set of size at most (at least) $k$.

    If $\mname \geq 3$ or $\allMax$ is odd
    or $\min\{\sigMax,\rhoMax\} < \allMax$, then
    algorithm $\CA$ runs in time
    \[
      (\allMax + 1)^\tw \cdot 2^{(\allMax)^{\Oh(1)}} \cdot \tw^{\Oh(1)} \cdot |V(G)|^2.
    \]

    If $\mname = 2$, $\allMax$ is even, and $\sigMax = \rhoMax = \allMax$,
    then algorithm $\CA$ runs in
    time
    \[
      (\allMax + 2)^\tw \cdot 2^{(\allMax)^{\Oh(1)}} \cdot \tw^{\Oh(1)} \cdot |V(G)|^2.
    \]
\end{theorem}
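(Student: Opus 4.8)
The statement is the optimization counterpart of Theorem~\ref{thm:main-alg-m-str}, so the natural route is to refine the dynamic program constructed there by adding one coordinate that records the \emph{size} of the partial solution. Concretely, for every node $t$ of the given nice tree decomposition, every residue $i \in \fragmentco{0}{\mname}$, and every $k \in \fragment{0}{|V(G)|}$, I would maintain the language $L_{t,i,k} \subseteq \allStates^{X_t}$ of all strings $x$ compatible with $(G\position{V_t},X_t)$ via some witness $S_x$ with $|S_x \setminus X_t| \equiv_{\mname} i$ \emph{and} $|S_x| = k$. Since $L_{t,i,k} \subseteq L_{t,i}$ (the table of the decision algorithm), Lemma~\ref{la:hamming-distance-parity-mod-p-general} still gives $L_{t,i,k} \times L_{t,i,k} \subseteq \CR^{|X_t|}$, so Theorem~\ref{thm:upper-bound} yields $|L_{t,i,k}| \le \tau^{|X_t|} \le \tau^{\tw+1}$, where $\tau = \allMax+1$ or $\tau = \allMax+2$ exactly as in the proof of Theorem~\ref{thm:main-alg-m-str}; only the \emph{number} of table entries grows, by the factor $|V(G)|+1$.

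\textbf{Node types.} For a leaf, $L_{t,0,0} = \{\varepsilon\}$ and all other tables are empty. For a forget node the witnessing set is unchanged, so $L_{t,i,k}$ is computed from $L_{t',i,k}$ exactly as in the proof of Theorem~\ref{thm:main-alg-m-str} (keeping the $\rho$- and $\sigma$-happy filtering) without touching $k$. For an introduce node with new vertex $v$: a $\rho$-extension keeps the witness, hence $k$; a $\sigma$-extension adds $v$ to the witness, hence increments $k$ by one; so $L_{t,i,k}$ is assembled from $L_{t',i,k}$ ($\rho$-extensions) and $L_{t',i,k-1}$ ($\sigma$-extensions). Each such step runs in time $\tau^\tw \cdot (\allMax+\tw)^{\Oh(1)} \cdot |V(G)|$. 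For a join node with children $t_1,t_2$ one uses $V_{t_1} \cap V_{t_2} = X_t$: for joinable $x_1,x_2$ (necessarily $\sigvec{x_1} = \sigvec{x_2} =: \vec{s}$, hence $S_{x_1} \cap X_t = S_{x_2} \cap X_t$) the combined witness $S_{x_1} \cup S_{x_2}$ has size $|S_{x_1}| + |S_{x_2}|$ minus the number of portal vertices selected in $x_1$, which equals the number of $\sigma$-states in $x_1$ and therefore depends only on $\vec{s}$. After the usual degree-correction step $x \mapsto \widehat{x}$ (which does not alter the witness, hence not $k$), giving $L'_{t_1,j,k}$, one has $L_{t,i,k} = \bigcup_{j \in \fragmentco{0}{\mname}} \bigl( L'_{t_1,j,\star} \oplus L_{t_2,i-j,\star} \bigr)_{k}$, where the size of a combination is the sum of the two sizes minus the constant attached to $\vec{s}$.

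\textbf{Join and wrap-up.} Rather than iterating over all $\Oh(|V(G)|^2)$ pairs of sizes, I would run the join algorithm of Theorem~\ref{thm:join-algorithm} with one \emph{additional} convolution coordinate of range $\Oh(|V(G)|+\tw)$ that tracks the size: Fact~\ref{thm:alg-convolution} is already a multidimensional convolution, the relation $\CR^n$ constrains only the $\allStates$-coordinates, so the hypotheses of Theorem~\ref{thm:join-algorithm} remain satisfied, and the constant shift per $\sigma$-vector $\vec{s}$ is applied at decompression time. This multiplies the running time of the join by a factor $\Oh(|V(G)|)$, so every node is processed in $\tau^\tw \cdot (\allMax+\tw)^{\Oh(1)} \cdot |V(G)|$ time. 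At the root $r$ (with $X_r = \emptyset$, $V_r = V(G)$), $G$ has a $(\sigma,\rho)$-set of size exactly $k$ iff $\varepsilon \in L_{r,i,k}$ for some $i$, and answering ``size at most $k$'' (resp.\ ``at least $k$'') is a single pass over the $\le |V(G)|+1$ size values. Summing over the $\Oh(\tw \cdot |V(G)|)$ nodes of the decomposition gives the claimed $\tau^\tw \cdot (\allMax+\tw)^{\Oh(1)} \cdot |V(G)|^2$ bound in both cases of $\tau$. The only genuinely new point over Theorem~\ref{thm:main-alg-m-str}, and the step to handle with care, is verifying that the extra size coordinate is compatible with the compression-and-``checksum'' machinery underlying Theorem~\ref{thm:join-algorithm} together with the per-$\vec{s}$ constant shift; everything else is bookkeeping.
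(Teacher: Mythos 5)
Your proposal is correct and matches the paper's intended approach: the paper dispatches this theorem with the one-line remark that one ``keeps track of the size of partial solutions in the dynamic programming tables,'' and your elaboration (per-size tables $L_{t,i,k}\subseteq L_{t,i}$ so Theorem~\ref{thm:upper-bound} still applies, the constant per-$\vec{s}$ size shift at join nodes, and the extra convolution coordinate of range $\Oh(|V(G)|)$ in the join) is exactly the bookkeeping required to realize the claimed $|V(G)|^2$ bound. In particular you correctly identify that a naive double loop over $(k_1,k_2)$ at join nodes would be too slow and that the size must be folded in as one more coordinate of the convolution of Fact~\ref{thm:alg-convolution}, which is also how the paper handles the counting version.
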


For the counting version of the problem (that is, we wish to compute the number of solution sets), the situation is more complicated.
The main challenge for the counting version stems from the application of \cref{thm:alg-convolution} for which we need to find an appropriate prime $p$ as well as certain roots of unity.
In Remark~\ref{rem:find-prime-alg}, we explained how to find these objects in time roughly linear in $p$ (ignoring various lower-order terms that are not relevant to the discussion here).
However, for the counting version, we would need $p$ to be larger than the number of solutions, which results in a running time that is exponential in $|V(G)|$ in the worst case.

Luckily, we can circumvent this problem using the Chinese Remainder Theorem.
The basic idea is to compute the number of solutions modulo $p_i$ for a sufficiently large number of distinct small primes $p_i$.
Assuming $\prod_i p_i > 2^{|V(G)|}$, the number of solutions can be uniquely recovered using the Chinese Remainder Theorem.

\begin{theorem}[{Chinese Remainder Theorem \cite[Section 5.4]{GathenG13}}]
 Let $m_1,\dots,m_\ell$ denote a sequence of integers that are pairwise coprime, and define $M
 \coloneqq \prod_{i \in \nset{\ell}}m_i$.
 Also, let $0 \leq a_i < m_i$ for all $i \in \nset{\ell}$.
 Then, there is a unique number $0 \leq s < M$ such that
 \[s \equiv a_i \pmod {m_i}\]
 for all $i \in \nset{\ell}$.
 Moreover, there is an algorithm that, given $m_1,\dots,m_\ell$ and $a_1,\dots,a_\ell$, computes the number $s$ in time $\Oh((\log M)^2)$.
\end{theorem}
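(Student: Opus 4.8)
The plan is to establish uniqueness by an elementary divisibility argument, and then prove existence \emph{constructively} by an iterative merging of the congruences, which simultaneously yields the claimed algorithm; finally I would account for the bit-complexity assuming schoolbook integer arithmetic.

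For uniqueness, suppose $s$ and $s'$ both lie in $\{0,\dots,M-1\}$ and satisfy $s \equiv s' \equiv a_i \pmod{m_i}$ for all $i \in \nset{\ell}$. Then $m_i \mid (s - s')$ for every $i$, so $\mathrm{lcm}(m_1,\dots,m_\ell)$ divides $s - s'$; since the $m_i$ are pairwise coprime, $\mathrm{lcm}(m_1,\dots,m_\ell) = M$, and hence $M \mid (s - s')$. Combined with $|s - s'| < M$, this forces $s = s'$.

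For existence, I would build the solution incrementally, which is also the algorithm. Set $M_1 \coloneqq m_1$ and $s_1 \coloneqq a_1$; clearly $0 \le s_1 < M_1$ and $s_1 \equiv a_1 \pmod{m_1}$. Assume inductively that for some $k$ with $1 \le k < \ell$ we have an integer $s_k$ with $0 \le s_k < M_k \coloneqq \prod_{i \in \nset{k}} m_i$ and $s_k \equiv a_i \pmod{m_i}$ for all $i \le k$. Look for $s_{k+1}$ of the form $s_{k+1} = s_k + M_k\, t$; any such number still satisfies the first $k$ congruences, and the congruence modulo $m_{k+1}$ reads $M_k\, t \equiv a_{k+1} - s_k \pmod{m_{k+1}}$. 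Since each $m_i$ with $i \le k$ is coprime to $m_{k+1}$, we get $\gcd(M_k, m_{k+1}) = 1$, so $M_k$ is invertible modulo $m_{k+1}$; the inverse $M_k^{-1}$ (and the check that the gcd equals $1$) is produced by the extended Euclidean algorithm applied to $M_k \bmod m_{k+1}$ and $m_{k+1}$. Put $t \coloneqq \big((a_{k+1} - s_k)\cdot M_k^{-1}\big) \bmod m_{k+1} \in \{0,\dots,m_{k+1}-1\}$ and $s_{k+1} \coloneqq s_k + M_k\, t$. Then $0 \le s_{k+1} \le (M_k - 1) + M_k(m_{k+1}-1) < M_k m_{k+1} = M_{k+1}$, and $s_{k+1}$ satisfies all congruences up to index $k+1$. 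After $\ell - 1$ such steps, $s \coloneqq s_\ell$ is the desired number.

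For the running time, in the $k$-th step the work is: reduce $M_k$ modulo $m_{k+1}$; run extended Euclid on two integers of $\Oh(\log m_{k+1})$ bits; do a constant number of multiplications and reductions involving $M_k$ and an integer below $m_{k+1}$; and update $M_{k+1} = M_k\, m_{k+1}$. With schoolbook arithmetic these cost $\Oh(\log M_k \cdot \log m_{k+1})$ and $\Oh\big((\log m_{k+1})^2\big)$ respectively, and all intermediate integers are bounded by $M_{k+1} \le M$ (so also by any discarding of moduli equal to $1$, at most $\log_2 M$ nontrivial steps occur). Summing over $k$ and using $\log M_k \le \log M$ together with $\sum_{k} \log m_{k+1} = \log\big(\prod_i m_i\big) = \log M$ — whence also $\sum_k (\log m_{k+1})^2 \le \big(\sum_k \log m_{k+1}\big)^2 = (\log M)^2$ — yields a total of $\Oh\big((\log M)^2\big)$.

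This statement is classical and could alternatively be cited from a standard reference; the only points that warrant care are the implication "pairwise coprime $\Rightarrow$ product equals lcm" in the uniqueness argument and the complexity bookkeeping needed to land exactly at $\Oh((\log M)^2)$ rather than a slightly weaker bound, so I do not expect a genuine obstacle here.
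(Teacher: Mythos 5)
The paper states this as a classical fact with no proof; you supply a complete one, which is a reasonable thing to want. Your argument is the standard iterative ``merging'' construction and it is correct. Uniqueness via $\mathrm{lcm}(m_1,\dots,m_\ell) = M$ is fine; the inductive step with $s_{k+1} = s_k + M_k t$ and $t \equiv (a_{k+1} - s_k) M_k^{-1} \pmod{m_{k+1}}$ is the right move; and the bound $0 \le s_{k+1} < M_{k+1}$ is established cleanly. The complexity accounting also works: per step you pay $\Oh(\log M_k \cdot \log m_{k+1})$ for the reductions, products, and the update of $M_k$, plus $\Oh\bigl((\log m_{k+1})^2\bigr)$ for extended Euclid on integers below $m_{k+1}$; summing the first term against $\sum_k \log m_{k+1} = \log M$ and the second against $\sum_k (\log m_{k+1})^2 \le \bigl(\sum_k \log m_{k+1}\bigr)^2$ gives $\Oh\bigl((\log M)^2\bigr)$ under schoolbook arithmetic, matching the stated bound. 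The remark about discarding moduli equal to $1$ is a good hygiene point (otherwise $\ell$ itself need not be bounded in terms of $M$), though one could equally just note that the input size is $\Omega(\ell)$ and absorb those trivial steps into reading the input. In short, no gap: this is a sound, self-contained proof of a statement the paper leaves as a citation.
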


More generally, we can build on the following extension of \cref{thm:alg-convolution} which may also be interesting in its own right.

\begin{theorem}
 \label{thm:alg-convolution-extended}
 Let $d_1,\dots,d_n \geq 2$, and let $D \coloneqq \prod_{i=1}^nd_i$.
 Suppose $d_1',\dots,d_\ell'$ is the list of integers obtained from $d_1,\dots,d_n$ by removing duplicates, and let $D' \coloneqq \prod_{i=1}^\ell d_i'$.
 Also, let $f,g\colon \ZZ_{d_1} \times \dots \times \ZZ_{d_n} \rightarrow \ZZ$ denote a function, and
 let $h\colon\ZZ_{d_1} \times \dots \times \ZZ_{d_n} \rightarrow \FF_p$ denote the convolution
 \[h(a) \coloneqq \sum_{a_1 + a_2 = a} f(a_1) \cdot g(a_2).\]
 Moreover, let $M$ denote a non-negative integer such that all images of $f,g$ and $h$ are contained in $\{0,\dots,M\}$.
 Then, the function $h$ can be computed in time $D \cdot (\log D + n + M')^{\Oh(1)}$ where
 \[M' \coloneqq \max\Big\{\log M,8 \cdot 10^9,\exp(D')\Big\}.\]
\end{theorem}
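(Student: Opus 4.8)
The plan is to reduce the integer-valued convolution $h$ to a small number of finite-field convolutions, each handled by \cref{thm:alg-convolution}, and then to reassemble $h$ via the Chinese Remainder Theorem. We may assume $M \geq \max\{1, D\}$, since replacing $M$ by this maximum only weakens the hypothesis (and if the original $M$ is $0$, then $h \equiv 0$ and there is nothing to compute). Write $D' \coloneqq \operatorname{lcm}(d_1, \dots, d_n)$, so that $2 \leq D' \leq D$ and $d_i \mid D'$ for every $i$, and set $\ell \coloneqq \lceil \log_2(M+1) \rceil$.

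First I would pick the primes. Exactly as in \cref{rem:find-prime-alg} (invoking the Prime Number Theorem for arithmetic progressions), I would compute $\ell$ pairwise distinct primes $p_1, \dots, p_\ell$, all congruent to $1$ modulo $D'$: searching the candidates $1 + D' j$ for $j = 1, 2, \dots$ and testing each for primality suffices, and by the effective density of primes in the progression $1 \bmod D'$ the $\ell$-th such prime is reached among the first $(D' \cdot \ell)^{\Oh(1)}$ candidates, each of bit-length $\Oh(\log D + \log\log M)$; hence $p_1, \dots, p_\ell$ are of size $(D + \log M)^{\Oh(1)}$ and can be found in total time $(D + \log M)^{\Oh(1)}$. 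Since $d_i \mid D' \mid p_j - 1$ and $\FF_{p_j}^*$ is cyclic of order $p_j - 1$, it contains an element of order exactly $d_i$; as described in \cref{rem:find-prime-alg}, a primitive $d_i$-th root of unity $\omega_i^{(j)} \in \FF_{p_j}$ can be found in time $(d_i + \log p_j)^{\Oh(1)}$, so computing all needed roots also costs $(D + \log M)^{\Oh(1)}$. Finally, since each $p_j \geq 2$, the product $P \coloneqq \prod_{j=1}^\ell p_j$ satisfies $P \geq 2^\ell \geq M + 1$.

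Next, for each $j \in \nset{\ell}$, I would compute $h$ modulo $p_j$. Let $\bar f_j, \bar g_j \colon \ZZ_{d_1} \times \dots \times \ZZ_{d_n} \to \FF_{p_j}$ be obtained by reducing every value of $f$, resp.\ $g$, modulo $p_j$, and apply \cref{thm:alg-convolution} with the roots $\omega_1^{(j)}, \dots, \omega_n^{(j)}$ to compute the convolution $h_j$ of $\bar f_j$ and $\bar g_j$. Since reduction modulo $p_j$ is a ring homomorphism $\ZZ \to \FF_{p_j}$, we get $h_j(a) \equiv h(a) \pmod{p_j}$ for every $a$. One call to \cref{thm:alg-convolution} uses $\Oh(D \log D)$ arithmetic operations in $\FF_{p_j}$, each costing $(\log p_j)^{\Oh(1)} = (\log D + n + \log M)^{\Oh(1)}$, so across all $\ell = \Oh(\log M)$ primes this phase runs in $D \cdot (\log D + n + \log M)^{\Oh(1)}$ time. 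Then, for each of the $D$ arguments $a$, I would recover $h(a)$ from the residues $\big(h_1(a), \dots, h_\ell(a)\big)$ by the Chinese Remainder Theorem: since $0 \leq h(a) \leq M < P$, the number $h(a)$ is exactly the unique element of $\{0, \dots, P-1\}$ with these residues, computed in time $\Oh((\log P)^2) = (\log D + n + \log M)^{\Oh(1)}$; over all $a$ this costs $D \cdot (\log D + n + \log M)^{\Oh(1)}$. Summing the three phases (the prime-and-root search is a lower-order term) yields the claimed bound $D \cdot (\log D + n + \log M)^{\Oh(1)}$.

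The only genuinely delicate point is the selection of the primes $p_j$: they must simultaneously (i) each satisfy $d_i \mid p_j - 1$ for all $i$, so that \cref{thm:alg-convolution} applies; (ii) be numerous enough that $\prod_j p_j > M$, so that the Chinese Remainder Theorem reconstructs $h(a)$ uniquely; and — crucially — (iii) stay bounded by a polynomial in $D$ and $\log M$ (never in $M$ itself), so that both locating them and doing arithmetic modulo them is cheap. Restricting the search to the single progression $1 \bmod \operatorname{lcm}(d_1, \dots, d_n)$ gives (i) automatically, while (ii) only needs $\lceil \log_2(M+1)\rceil$ primes, which is within budget since each is at least $2$; (iii) is then exactly the effective-density estimate already used in \cref{rem:find-prime-alg}. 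Everything else — reducing modulo $p_j$, invoking \cref{thm:alg-convolution}, and the per-argument Chinese-Remainder reconstruction — is routine.
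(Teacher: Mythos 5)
Your proof is correct and takes essentially the same route as the paper: choose $\Theta(\log M)$ primes in an arithmetic progression $1 \bmod D'$ so each admits the needed roots of unity, run the $\FF_p$-convolution of \cref{thm:alg-convolution} modulo each, and reassemble the integer values pointwise via the Chinese Remainder Theorem, noting $\prod p_j > M$ ensures uniqueness. The only cosmetic difference is that you search in the progression modulo $\operatorname{lcm}(d_1,\dots,d_n)$ rather than modulo $D = \prod d_i$ as the paper does; both give primes of size $(D+\log M)^{\Oh(1)}$ and so the same overall bound.
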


\begin{proof}
 Let $m \coloneqq \lceil M' \rceil$.
 We compute the list of the first $m$ primes $p_1 < \dots < p_m$ such that $p_i \equiv 1 \pmod {D'}$ for all $i \in \nset{m}$.
 By the Prime Number Theorem for Arithmetic Progressions \cite[Theorem 1.5]{BennettMOR18}
 we get that $p_i = \Oh(\varphi(D') \cdot i \cdot \log i)$,
 for all~$i\in \nset{m}$,
 where $\varphi$ denotes Euler's totient function.
 In particular, the largest prime~$p_m$ satisfies
 $p_m = \Oh(m \cdot (\log m)^2)$ because $\varphi(D') \leq D'$ and $m \geq \exp(D')$.
 Since prime testing can be done in polynomial time, we can find the sequence $p_1,\dots,p_m$ in time $\Oh(m \cdot (\log m)^c)$ for some constant $c$.

 Next, for every $i \in \nset{m}$ and $j \in \nset{n}$, we compute a $d_j$-th root of unity in $\FF_{p_i}$ as follows.
 First, observe that such a root of unity exists since $d_j$ divides $p_i - 1$.
 Now, we simply iterate over all elements $x \in \FF_{p_i}$ and test whether a given element $x$ is a $d_j$-th root of unity in time $(d_j + \log p_i)^{\Oh(1)}$.
 So, overall, computing all roots of unity can be done in time
 \[p_m \cdot (n + m)^{\Oh(1)} = (n + m)^{\Oh(1)}.\]

 Now, for every $i \in \nset{m}$ and $a \in \ZZ_{d_1} \times \dots \times \ZZ_{d_n}$, we compute
 \[h_i(a) \coloneqq h(a) \pmod{p_i}\]
 using \cref{thm:alg-convolution} taking $\Oh(m \cdot D \cdot \log D)$ many arithmic operations.
 Since each arithmetic operation can be done time $(\log p_m)^{\Oh(1)}$, we obtain a total running time of
 \[D \cdot (m + \log D)^{\Oh(1)}.\]
 Finally, we can recover all numbers $h(a)$ by the Chinese Remainder Theorem in time $m^{\Oh(1)}$.
 Note that $\prod_{i \in \nset{m}}p_i > 2^m \geq M$, which implies that all numbers are indeed uniquely recovered.
 In total, this achieves the desired running time.
\end{proof}

Now, to obtain an algorithm for the counting version, we follow the algorithm from \cref{thm:main-alg-m-str} and replace the application of \cref{thm:alg-convolution} by \cref{thm:alg-convolution-extended}.
Also, we change the definition of the functions $f_i$ in Equation \eqref{eq:def-function-for-convolution} to give the number of partial solutions.
Note that we can set $M \coloneqq 2^{|V(G)|}$ since the number of solutions is always bounded by $2^{|V(G)|}$.
Also observe that $D' = (\allMax)^{\Oh(1)}$ which implies that $M' \leq |V(G)| + 2^{(\allMax)^{\Oh(1)}}$.
Additionally, similarly to the optimization version, we keep track of the size of solutions.
Overall, we obtain the following theorem for the counting version.

\begin{theorem}\label{thm:main-alg-m-str-count}
    Let $(\sigma,\rho)$ denote finite, $\mname$-structured sets for some $\mname \geq 2$.
    Then, there is an algorithm $\CA$ that, given a graph $G$, an integer $k$, and a nice tree decomposition
    of $G$ of width $\tw$, computes the number of $(\sigma,\rho)$-sets of size exactly $k$ in $G$.

    If $\mname \geq 3$ or $\allMax$ is odd
    or $\min\{\sigMax,\rhoMax\} < \allMax$, then
    algorithm $\CA$ runs in time
    \[
      (\allMax + 1)^\tw \cdot 2^{(\allMax)^{\Oh(1)}} \cdot (\tw + |V(G)|)^{\Oh(1)}.
    \]

    If $\mname = 2$, $\allMax$ is even, and $\sigMax = \rhoMax = \allMax$,
    then algorithm $\CA$ runs in
    time
    \[
      (\allMax + 2)^\tw \cdot 2^{(\allMax)^{\Oh(1)}} \cdot (\tw + |V(G)|)^{\Oh(1)}.
    \]
\end{theorem}

For the counting version, we omit a more detailed analysis on the dependence on the number of vertices.
However, due to the application of the Chinese Remainder Theorem, the running time increases at least by a factor of $|V(G)|^2$ in comparsion to \cref{thm:main-alg-m-str-opt}.

We obtain \cref{thm:alg-main-intro} by combining \cref{thm:vanrooij-intro,thm:main-alg-m-str-count}.

\section{Faster Algorithms via Representative Sets}
\label{sec:representative-set}

Next, we present a second algorithm for \srDomSet which is designed
for the decision version and the case that one of the sets $\sigma,\rho$ is cofinite.
More precisely, the aim of this section is to prove \cref{thm:dp-representative-set-intro}.
Let us stress again that the algorithm given in this section
works for all finite or cofinite sets $\sigma, \rho$,
but in the case where both $\rho$ and $\sigma$ are finite,
it is slower than existing algorithms (see \cref{thm:vanrooij-intro}).
The algorithm is based on representative sets.
Intuitively speaking, for a graph $G$ and a set $U \subseteq V(G)$,
the idea is to not compute the entire set $L \subseteq \allStates^U$ of strings
that are compatible with $(G,U)$,
but only a \emph{representative set} $R \subseteq L$ such that,
if there is a partial solution for some $x \in L$
that can be extended to a full solution via some $y \in \allStates^{U}$,
then there is also a partial solution $x' \in R$
that can be extended to a full solution via $y$.
If one of the sets $\sigma, \rho$ is cofinite,
then it is possible to obtain representative sets $R \subseteq L$
that are much smaller than the number of partial solutions
that are maintained by standard dynamic programming algorithms
(see, e.g., \cite{Rooij20}).

For technical reasons, it turns out to be more convenient to work with the alphabet $\allStates_n = \{\sigma_0,\dots,\sigma_n,\rho_0,\dots,\rho_n\}$, where $n$ denotes the number of vertices of the graph $G$ under investigation.

To obtain the representative sets, we build on ideas that were already used in \cite{MarxSS25}.
In the following, let $\omega < 2.37286$ denote the matrix multiplication exponent \cite{AlmanW21}.

Let us first restrict ourselves to the case where both $\rho$ and $\sigma$ are cofinite.
Let $k \geq 1$ and let $F_1,\dots,F_k \subseteq \NN$ denote finite sets of \emph{forbidden elements}.
Intuitively speaking, for a set $X \subseteq V(G)$ consisting of $k$ vertices $v_1,\dots,v_k$, we  set $F_i \coloneqq \ZZ_{\geq 0} \setminus \sigma$ if $v_i$ is selected into a partial solution, and $F_i \coloneqq \ZZ_{\geq 0} \setminus \rho$ otherwise.

\begin{definition}
 \label{def:compatibility-graph-forbidden}
 The \emph{compatibility graph for forbidden sets $\mathbf F = (F_1,\dots,F_k)$}
 is the infinite graph $\CC = \CC(\mathbf F)$ with
 \begin{itemize}
  \item $V(\CC) \coloneqq U^k \cup V^k$ where $U,V$ are disjoint sets both identified with $\NN$, and
  \item $E(\CC) \coloneqq \{((a_1,\dots,a_k),(b_1,\dots,b_k)) \mid \forall i \in \nset{k}\colon a_i + b_i \notin F_i\}$.
 \end{itemize}
 Let $\CS \subseteq \NN^k$ denote a finite set.
 We say that $\CS' \subseteq \CS$ is an \emph{$\mathbf F$-representative set of $\CS$} if, for every $b \in \NN^k$, we have that
 \begin{equation}
  \exists a \in \CS\colon (a,b) \in E(\CC(F_1,\dots,F_k)) \;\;\;\Longleftrightarrow\;\;\; \exists a' \in \CS'\colon (a',b) \in E(\CC(F_1,\dots,F_k)).
 \end{equation}
\end{definition}

Now, the basic idea is that, for a set $X = \{v_1,\dots,v_k\}$ and a fixed $\sigma$-vector $\vec{s} \in \{0,1\}^{|X|}$, it suffices to keep an $(F_1,\dots,F_k)$-representative set of the weight vectors of those partial solutions that have $\sigma$-vector $\vec{s}$.
Here, $F_i \coloneqq \ZZ_{\geq 0} \setminus \sigma$ if $\vec{s}\position{v_i} = 1$, and $F_i \coloneqq \ZZ_{\geq 0} \setminus \rho$ if $\vec{s}\position{v_i} = 0$.

\begin{lemma}[{\cite[Lemma 3.2 \& 3.5]{MarxSS25}}]
 \label{la:representative-set-forbidden}
 Let $F_1,\dots,F_k \subseteq \NN$ denote finite sets such that $|F_i| \leq t$ for all $i \in
 \nset{k}$.
 Further, let $\CS \subseteq \NN^k$ denote a finite set.
 Then, one can compute an $(F_1,\dots,F_k)$-representative set $\CS'$ of $\CS$ such that $|\CS'| \leq (t+1)^k$ in time $\Oh(|\CS| \cdot (t+1)^{k(\omega - 1)}k)$.
\end{lemma}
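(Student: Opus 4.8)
The plan is to reprove this by the standard ``non-orthogonality transversal'' argument for representative sets, realized over a tensor product of Vandermonde-type spaces. First I would encode the coordinatewise compatibility condition algebraically. For each $i\in\nset{k}$ set $P_i(x)\coloneqq\prod_{f\in F_i}(x-f)$, a polynomial of degree $|F_i|\le t$ with coefficients $c_0^{(i)},\dots,c_t^{(i)}$, so that for non-negative integers $a_i,b_i$ we have $P_i(a_i+b_i)\neq 0$ iff $a_i+b_i\notin F_i$. Writing $\phi(c)\coloneqq(1,c,c^2,\dots,c^t)\in\mathbb{Q}^{t+1}$ and expanding $P_i(a_i+b_i)=\sum_{\ell+m\le t}c_{\ell+m}^{(i)}\binom{\ell+m}{\ell}a_i^\ell b_i^m$, we obtain a matrix $M^{(i)}\in\mathbb{Q}^{(t+1)\times(t+1)}$ with $M^{(i)}_{\ell m}=c_{\ell+m}^{(i)}\binom{\ell+m}{\ell}$ when $\ell+m\le t$ and $0$ otherwise, so that $\phi(a_i)^{\!\top}M^{(i)}\phi(b_i)=P_i(a_i+b_i)$.

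Next I would tensor over the $k$ coordinates. Set $\Phi(a)\coloneqq\bigotimes_{i=1}^k\phi(a_i)\in\mathbb{Q}^{(t+1)^k}$, let $M\coloneqq\bigotimes_{i=1}^k M^{(i)}$, and define the encoded vector $\chi(a)\coloneqq M^{\!\top}\Phi(a)$. By the mixed-product property of the tensor product, $\langle\chi(a),\Phi(b)\rangle=\Phi(a)^{\!\top}M\,\Phi(b)=\prod_{i=1}^k\phi(a_i)^{\!\top}M^{(i)}\phi(b_i)=\prod_{i=1}^k P_i(a_i+b_i)$, which is nonzero precisely when $a_i+b_i\notin F_i$ for all $i$, i.e.\ exactly when $(a,b)\in E(\CC(F_1,\dots,F_k))$; note also $\chi(a)\neq 0$ for every $a$, since each $M^{(i)}\phi(a_i)$ is nonzero (the pairing with $\phi(\,\cdot\,)$ realizes the nonzero polynomial $P_i(a_i+\,\cdot\,)$). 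Now pick $\CS'\subseteq\CS$ so that $\{\chi(a')\mid a'\in\CS'\}$ is a basis of $\operatorname{span}\{\chi(a)\mid a\in\CS\}$; then $|\CS'|\le\dim\mathbb{Q}^{(t+1)^k}=(t+1)^k$, and since both families span the same subspace, any vector $w$ is orthogonal to all $\chi(a')$, $a'\in\CS'$, iff it is orthogonal to all $\chi(a)$, $a\in\CS$. Specializing $w=\Phi(b)$ for each $b\in\NN^k$ yields exactly the defining equivalence of an $(F_1,\dots,F_k)$-representative set.

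For the running time I would construct each $\chi(a)=\bigotimes_{i=1}^k\bigl(M^{(i)}\phi(a_i)\bigr)$ by computing the $k$ small matrix--vector products and then forming the tensor product incrementally, in $\Oh((t+1)^k\cdot k)$ arithmetic operations, and then compute a maximal linearly independent subfamily of the $|\CS|$ resulting vectors in $\mathbb{Q}^{(t+1)^k}$ via the fast matrix-rank subroutine, which handles an $m\times d$ matrix with $m\ge d$ in $\Oh(m\,d^{\omega-1})$ operations; here $d=(t+1)^k$, giving $\Oh(|\CS|\cdot(t+1)^{k(\omega-1)})$, and summing the two contributions matches the claimed bound $\Oh(|\CS|\cdot(t+1)^{k(\omega-1)}k)$. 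Finally, one argues that working over $\mathbb{Q}$ (or modulo a suitably large prime) is harmless because $t$ and the relevant values $a_i,b_i$ are polynomially bounded, hence so are the entries of the $M^{(i)}$ and the products $\prod_iP_i(a_i+b_i)$, so each field operation is a polynomial-time bit operation.

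The main obstacle is twofold. The algebraic half --- the polynomial-plus-tensor encoding, and checking that folding the bilinear forms $M^{(i)}$ into one side of the pairing is compatible with the tensor structure --- is the conceptual crux but is essentially bookkeeping once set up. The efficiency half is the more delicate part: getting the exponent $\omega$ rather than a naive cubic bound requires invoking the fast maximal-independent-subset/rank routine as a black box, and one must keep the arithmetic honest (polynomial bit-length, correctness of zero-testing over the chosen field); these are precisely the points one would cite carefully from the representative-sets machinery of \cite{FominLPS16,FominLPS17} and the proof given in \cite{MarxSS21b}.
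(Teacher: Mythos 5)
Your proof is correct and follows the same route as the cited source \cite{MarxSS21b}; the paper itself only invokes this lemma and does not restate a proof, so the comparison is with the cited argument rather than an in-paper one. The polynomial encoding $P_i(a_i+b_i)=\phi(a_i)^{\top}M^{(i)}\phi(b_i)$, the Kronecker-product combination of the coordinates, the folding $\chi(a)=M^{\top}\Phi(a)$ so that $\langle\chi(a),\Phi(b)\rangle=\prod_i P_i(a_i+b_i)$, the span/orthogonality argument establishing the representative property, and the appeal to a fast maximal-linearly-independent-subset routine are exactly the ingredients that yield Lemmas 3.3 and 3.6 there, and your treatment of each is sound.

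One small accounting slip is worth flagging. You charge the construction of each $\chi(a)$ at $\Oh((t+1)^k\cdot k)$, but the $k$ matrix--vector products $M^{(i)}\phi(a_i)$ cost $\Oh\bigl(k(t+1)^2\bigr)$ if done as literal matrix--vector multiplications, and for $k=1$ this exceeds the per-vector budget $\Oh\bigl((t+1)^{\omega-1}\bigr)$ since $\omega<3$. The fix is to observe that $M^{(i)}\phi(a_i)$ is precisely the coefficient vector of the shifted polynomial $\prod_{f\in F_i}\bigl(y-(f-a_i)\bigr)$, which one can expand in $\Oh(t\log^2 t)$ field operations, well within $(t+1)^{\omega-1}$. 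For $k\ge 2$ your bound is correct as stated, because the tensor cost $(t+1)^k$ already dominates $(t+1)^2$ and $(t+1)^{k(\omega-1)}\ge(t+1)^k$ since $\omega\ge 2$.
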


We can use \cref{la:representative-set-forbidden} to compute representative sets of small size if both $\rho$ and $\sigma$ are cofinite.
To also cover finite sets, we need to extend the above results as follows.
Consider $k,\ell \in \NN$ such that $k + \ell \geq 1$ and let
$F_1,\dots,F_k,P_1,\dots,P_\ell \subseteq \NN$ denote finite sets of \emph{forbidden elements} and \emph{positive elements}.
The basic intuition is similar to before.
Consider a set $X \subseteq V(G)$ consisting of $k+\ell$ vertices $v_1,\dots,v_{k+\ell}$ and a fixed $\sigma$-vector $\vec{s} \in \{0,1\}^{|X|}$ that has $k$ entries corresponding to an infinite set, and $\ell$ entries corresponding to a finite set (e.g., if $\sigma$ is infinite and $\rho$ is finite, then there are $k$ many $1$-entries since they correspond to the infinite set $\sigma$).
With this intuition in mind, we generalize \cref{def:compatibility-graph-forbidden} as follows.

\begin{definition}
 The \emph{compatibility graph for forbidden sets $\mathbf F = (F_1,\dots,F_k)$ and positive sets $\mathbf P = (P_1,\dots,P_\ell)$} is the infinite graph $\CC = \CC(\mathbf F;\mathbf P)$ with
 \begin{itemize}
  \item $V(\CC) \coloneqq U^{k + \ell} \cup V^{k + \ell}$ where $U,V$ are disjoint sets both identified with $\NN$, and
  \item $E(\CC) \coloneqq \{((a_1,\dots,a_{k+\ell}),(b_1,\dots,b_{k+\ell})) \mid \forall i
      \in \nset{k}\colon a_i + b_i \notin F_i \text{ and } \forall j \in \nset{\ell}\colon \newline a_{k+j} + b_{k+j} \in P_j\}$.
 \end{itemize}
 Let $\CS \subseteq \NN^{k+\ell}$ be a finite set.
 We say that $\CS' \subseteq \CS$ is an \emph{$(\mathbf F;\mathbf P)$-representative set of $\CS$} if, for every $b \in \NN^{k+\ell}$, we have that
 \begin{equation}
  \label{eq:representation}
  \exists a \in \CS\colon (a,b) \in E(\CC) \;\;\;\Longleftrightarrow\;\;\; \exists a' \in \CS'\colon (a',b) \in E(\CC).
 \end{equation}
\end{definition}

By taking a brute-force approach to positions with positive sets, we can also generalize \cref{la:representative-set-forbidden}.

\begin{lemma}
 \label{la:representative-set-mixed}
 Let $F_1,\dots,F_k,P_1,\dots,P_\ell \subseteq \NN$ denote finite sets such that $|F_i| \leq
 t$ for all $i \in \nset{k}$ and $\max(P_j) \leq t$ for all $j \in \nset{\ell}$.
 Further, write $\CS \subseteq \NN^{k+\ell}$ for a finite set.
 Then, one can compute an $(\mathbf F; \mathbf P)$-representative set $\CS'$ of $\CS$
 where~$\mathbf F = (F_1,\dots,F_k)$ and~$\mathbf P = (P_1,\dots,P_\ell)$
 such that $|\CS'| \leq (t+1)^{k + \ell}$ in time $\Oh(|\CS| \cdot (t+1)^{\ell + k(\omega - 1)}(k + \ell))$.
\end{lemma}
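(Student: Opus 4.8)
The plan is to reduce to \cref{la:representative-set-forbidden} by a brute-force enumeration over the $\ell$ coordinates that are governed by the positive sets $P_1,\dots,P_\ell$. The first step is a cleanup: if $a\in\CS$ has $a_{k+j}>t\ge\max(P_j)$ for some $j\in\nset{\ell}$, then $a$ can never be an endpoint of an edge of $\CC(F_1,\dots,F_k;P_1,\dots,P_\ell)$, since such an edge would require $a_{k+j}+b_{k+j}\in P_j$ with $b_{k+j}\ge 0$. Hence I would discard all such elements up front; afterward every $a\in\CS$ satisfies $a_{k+j}\in\fragment{0}{t}$ for all $j\in\nset{\ell}$, and a representative set of the cleaned family is automatically one of the original $\CS$ as well. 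Let $\pi\colon\NN^{k+\ell}\to\NN^k$ denote the projection onto the first $k$ coordinates.

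Next I would partition the cleaned $\CS$ according to its ``positive part'' $c(a)\coloneqq(a_{k+1},\dots,a_{k+\ell})\in\fragment{0}{t}^\ell$, writing $\CS_c\coloneqq\{a\in\CS\mid c(a)=c\}$; there are at most $(t+1)^\ell$ nonempty parts, and they can be computed by radix/counting sort on the key $c(a)$ in time $\Oh(|\CS|\cdot(t+\ell))$. The key structural point is a decoupling: for a fixed $c$, any $a\in\CS_c$, and any $b\in\NN^{k+\ell}$, the pair $(a,b)$ is an edge of $\CC(F_1,\dots,F_k;P_1,\dots,P_\ell)$ if and only if (i) $c_j+b_{k+j}\in P_j$ for all $j\in\nset{\ell}$ --- a condition that depends only on $c$ and $b$ (call such a $b$ \emph{$c$-admissible}) --- and (ii) $(\pi(a),(b_1,\dots,b_k))$ is an edge of the smaller compatibility graph $\CC(F_1,\dots,F_k)$. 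Moreover, since all elements of $\CS_c$ agree on their last $\ell$ coordinates, $\pi$ is injective on $\CS_c$; write $\CT_c\coloneqq\pi(\CS_c)\subseteq\NN^k$.

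For each nonempty part I would invoke \cref{la:representative-set-forbidden} on $\CT_c$ with forbidden sets $F_1,\dots,F_k$, obtaining an $(F_1,\dots,F_k)$-representative set $\CT_c'\subseteq\CT_c$ with $|\CT_c'|\le(t+1)^k$ in time $\Oh(|\CT_c|\cdot(t+1)^{k(\omega-1)}k)$, and lift it back to $\CS_c'\coloneqq\pi^{-1}(\CT_c')\cap\CS_c$, which by injectivity has size at most $(t+1)^k$. The output is $\CS'\coloneqq\bigcup_c\CS_c'$, so $|\CS'|\le(t+1)^\ell\cdot(t+1)^k=(t+1)^{k+\ell}$ as required. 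Correctness follows by chasing equivalences through the decoupling: for a fixed $c$, ``$\exists a\in\CS_c\colon(a,b)\in E(\CC)$'' is equivalent to ``$b$ is $c$-admissible and $\exists\tau\in\CT_c\colon(\tau,(b_1,\dots,b_k))\in E(\CC(F_1,\dots,F_k))$'', which by the representative property of $\CT_c'$ is equivalent to the same statement with $\CT_c'$ in place of $\CT_c$, which decouples back to ``$\exists a'\in\CS_c'\colon(a',b)\in E(\CC)$''; taking the disjunction over all $c$ yields the representative property of $\CS'$. For the running time, the invocations of \cref{la:representative-set-forbidden} sum to $\sum_c\Oh(|\CS_c|\cdot(t+1)^{k(\omega-1)}k)=\Oh(|\CS|\cdot(t+1)^{k(\omega-1)}k)$ since $\sum_c|\CS_c|\le|\CS|$, and the cleanup, the bucketing, and the assembly of the output (total size $\Oh((t+1)^{k+\ell}(k+\ell))$) all fit inside the claimed bound $\Oh(|\CS|\cdot(t+1)^{\ell+k(\omega-1)}(k+\ell))$, using $\omega\ge 2$ so that $(t+1)^{k(\omega-1)}\ge(t+1)^k$ and noting that the extra $(t+1)^\ell$ factor generously absorbs any per-part overhead. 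The only genuinely delicate point is the decoupling claim together with the injectivity of $\pi$ on each part; everything else is bookkeeping.
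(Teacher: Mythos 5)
Your proof is correct and follows essentially the same route as the paper's: discard elements with an oversized positive coordinate, partition the rest by the last $\ell$ coordinates (your $c(a)$, the paper's \emph{positive-equivalence} classes of which there are at most $(t+1)^\ell$), reduce each part to \cref{la:representative-set-forbidden}, and take the union. The one place you go further than the paper is in writing out the projection $\pi$, the decoupling of edge conditions, and the injectivity of $\pi$ on each bucket; the paper simply says it applies \cref{la:representative-set-forbidden} to each class $\CS_i$ to get an $(F_1,\dots,F_k;P_1,\dots,P_\ell)$-representative set, which quietly relies on precisely the observation you make explicit, and your running-time tally is also marginally tighter (summing $|\CS_c|$ over buckets instead of bounding each by $|\CS|$ and multiplying by the bucket count) — both fit the claimed bound.
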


\begin{proof}
 We proceed in two steps.
 We first compute the set
 \[\CS'' \coloneqq \{(a_1,\dots,a_{k+\ell}) \in \CS \mid \forall j \in \nset{\ell}\colon a_{k+j} \leq t\}.\]
 Clearly, the set $\CS''$ can be computed in time $\Oh(|\CS| \cdot \ell)$.
 Also, every element $(a_1,\dots,a_{k+\ell}) \in \CS \setminus \CS''$ is an isolated
 vertex in $\CC = \CC(\mathbf F;\mathbf P)$ because $\max(P_j) \leq t$ for all
 $j \in \nset{\ell}$.
 Hence, it suffices to compute an $(\mathbf F; \mathbf P)$-representative set of $\CS''$.

 We say that two elements $(a_1,\dots,a_{k+\ell}),(b_1,\dots,b_{k+\ell}) \in \CS''$ are
 \emph{positive-equivalent} if $a_{k+j} = b_{k+j}$ for all $j \in \nset{\ell}$.
 Let $\CS_1,\dots,\CS_p$ denote the equivalence classes of this relation.
 Note that, by the definition of the set $\CS''$, we have $p \leq (t+1)^{\ell}$.
 We can compute the sets $\CS_1,\dots,\CS_p$ in time $\Oh(|\CS| \cdot (t+1)^{\ell}(k + \ell))$.
 For each $i \in \nset{p}$, we can compute an $(\mathbf F; \mathbf P)$-representative set $\CS_i'$ of $\CS_i$ using Lemma \ref{la:representative-set-forbidden} in time $\Oh(|\CS| \cdot (t+1)^{k(\omega - 1)}k)$.
 Then, $|\CS_i'| \leq (t+1)^k$.
 We define
 \[\CS' \coloneqq \bigcup_{i \in \nset{p}} \CS_i'.\]
 Observe that $|\CS'| \leq p \cdot (t+1)^k \leq (t+1)^{k + \ell}$.
 Moreover, computing $\CS'$ overall takes time $\Oh(|\CS| \cdot \ell + |\CS| \cdot (t+1)^{\ell}(k + \ell) + p \cdot |\CS| \cdot (t+1)^{k(\omega - 1)}k) = \Oh(|\CS| \cdot (t+1)^{\ell + k(\omega - 1)}(k + \ell))$.
\end{proof}

Before we state the final algorithm, we first define the following operations on representative sets and prove that they preserve $(\mathbf F; \mathbf P)$-representation.

\begin{lemma}
    \label{la:representative-set-mixed-operations}
    Let $F_1,\dots,F_k,P_1,\dots,P_\ell \subseteq \NN$ denote finite sets.
    Further, write $\CS \subseteq \NN^{k+\ell}$ for a finite set and let $\CS'$ be a $(\mathbf F; \mathbf P)$-representative set of~$\CS$
    where~$\mathbf F = (F_1,\dots,F_k)$ and $\mathbf P = (P_1,\dots,P_\ell)$.
    \begin{enumerate}[label = (\alph*)]
        \item 
        For every vector $d \in \ZZ^{k+\ell}$, the set $\CS' + d \coloneqq \{s + d \mid s \in \CS\}$ is an $(\mathbf F;\mathbf P)$-representative set of $\CS+d$.
        \item 
        For every $i \in \numb{k}$, the set $\CS' \ominus i \coloneqq \{s \fragment{1}{i-1}~s\fragment{i+1}{k+\ell} \mid s \in \CS', s\position{i} \notin F_i\}$
        is an $(\mathbf F';\mathbf P)$-representative set of $\CS \ominus i$ where~$\mathbf F' \coloneqq (F_1,\dots,F_{i-1},F_{i+1},\dots,F_k)$.
        \item 
        For every $j \in \numb{\ell}$, the set $\CS'\ominus (k+j) \coloneqq \{s\fragment{1}{k+j-1}~s\fragment{k+j+1}{k+\ell} \mid s \in \CS', s\position{k+j} \in P_j\}$
        is an $(\mathbf F;\mathbf P')$-representative set of $\CS\ominus (k+j)$ where~$\mathbf P' \coloneqq (P_1,\dots,P_{j-1},P_{j+1},\dots,P_\ell)$.
    \end{enumerate}
    For every finite set~$Q\subseteq \NN$, and
    \begin{enumerate}[resume, label = (\alph*)]
        \item for every $i \in \numb{k+1}$, the set $\CS' \oplus i \coloneqq \{s \fragment{1}{i-1}~0~s\fragment{i}{k+\ell} \mid s \in \CS'\}$
        is an $(\mathbf F';\mathbf P)$-representative set of $\CS \oplus i$ where~$\mathbf F' \coloneqq (F_1,\dots,F_{i-1},Q,F_i,\dots,F_k)$, and
        \item for every $j \in \numb{\ell+1}$, the set $\CS'\oplus (k+j) \coloneqq \{s\fragment{1}{k+j-1}~0~s\fragment{k+j}{k+\ell} \mid s \in \CS'\}$
        is an $(\mathbf F;\mathbf P')$-representative set of $\CS \oplus (k+j)$ where~$\mathbf P' \coloneqq (P_1,\dots,P_{j-1},Q,P_j,\dots,P_\ell)$.
    \end{enumerate}
    For every finite set $\CR \subseteq \NN^{k+\ell}$ and every $(\mathbf F;\mathbf P)$-representative set~$\CR'$ of~$\CR$,
    \begin{enumerate}[resume, label = (\alph*)]
        \item the set~$\CS' \cup \CR'$ is an $(\mathbf F;\mathbf P)$-representative set of $\CS \cup \CR$, and
        \item the set~$\CS' + \CR' \coloneqq \{ s + r \mid s \in \CS', r \in \CR'\}$ is an $(\mathbf F;\mathbf P)$-representative set of $\CS + \CR$.
    \end{enumerate}
\end{lemma}

\begin{proof}
    For two vectors~$a, b \in \NN^{k+\ell}$, we write $a \sim_{\mathbf F;\mathbf P} b$ if $(a,b) \in E(\CC(\mathbf F;\mathbf P))$,
    that is, if $a \pos{i} + b\pos{i} \notin F_i$ for all $i \in \numb{k}$ and $a \pos{k+j} + b\pos{k+j} \in P_j$ for all $j \in \numb{\ell}$.

    We first observe that we only have to prove the forward direction of \eqref{eq:representation} since $\CS' \subseteq \CS$.
    We prove the different cases independently.
    \begin{enumerate}[label = (\alph*)]
        \item Fix some vector~$d \in \ZZ^{k+\ell}$, and let~$a \in \CS + d$ and~$b \in \NN^{k+\ell}$ such that~$a \sim_{\mathbf F;\mathbf P} b$.
        By the definition of the set~$\CS+d$, there is some~$\hat a \in \CS$ such that~$a = \hat a + d$.
        So~$\hat a + d \sim_{\mathbf F;\mathbf P} b$ which is equivalent to $\hat a \sim_{\mathbf F;\mathbf P} b + d$.
        Since~$\CS'$ is an $(\mathbf F;\mathbf P)$-representative set of~$\CS$, there is some~$\hat a' \in \CS'$ such that $\hat a' \sim_{\mathbf F;\mathbf P} b + d$.
        It follows that $\hat a'+d \sim_{\mathbf F;\mathbf P} b$ and $\hat a'+d \in \CS' + d$, which concludes this case.

        \item Without loss of generality, we assume~$i = 1$.
        Let~$a \in \CS \ominus 1$ and~$b \in \NN^{k-1+\ell}$ such that~$a \sim_{\mathbf F';\mathbf P} b$ where~$\mathbf F' = (F_2,\dots,F_k)$.
        By the definition of the set~$\CS\ominus i$, there is some~$\hat a \in \CS$ such that~$\hat a\fragment{2}{k+\ell} = a$ and $\hat a \pos{1} \notin F_1$.
        In particular, $\hat a \sim_{\mathbf F;\mathbf P} 0~b$.
        Since $\CS'$ is an $(\mathbf F;\mathbf P)$-representative set of $\CS$,
        there is some~$\hat a' \in \CS'$ such that $\hat a' \sim_{\mathbf F;\mathbf P} 0~b$.
        This means~$\hat a' \pos{1} \notin F_1$ and thus, $\hat a'\fragment{2}{k+\ell} \sim_{\mathbf F';\mathbf P} b$ and~$\hat a'\fragment{2}{k+\ell} \in \CS'\ominus 1$.

        \item This case is analogous to the previous case except that we use~$\hat a \pos{k+j} \in P_j$.

        \item Fix a finite set~$Q \subseteq \NN$ and let~$i \in \numb{k}$.
        Assume without loss of generality that~$i = 1$.
        Moreover, let~$a \in \CS \oplus i$ and~$b \in \NN^{k+\ell+1}$ such that~$a \sim_{\mathbf F';\mathbf P} b$.
        By the definition of the set~$\CS\oplus 1$, we get~$a \pos{1} = 0$ and hence, $b \pos{1} \notin Q$.
        Moreover,~$a\fragment{2}{k+\ell+1} \sim_{\mathbf F;\mathbf P} b \fragment{2}{k+\ell+1}$.
        Since $\CS'$ is a $(\mathbf F;\mathbf P)$-representative set of~$\CS$, there is some vector~$a' \in \CS'$ such that $a' \sim_{\mathbf F;\mathbf P} b\fragment{2}{k+\ell+1}$.
        Then $0~a' \in \CS' \oplus 1$ and~$0~a' \sim_{\mathbf F';\mathbf P} b$ because $b \pos{1} \notin Q$.

        \item This case is analogous to the previous case except that we use~$a\pos{k+j} = 0$ and $b \pos{k+j} \in Q$.

        \item This case holds trivially since the vectors are not changed.

        \item Let~$a \in \CS + \CR$ and~$b \in \NN^{k+\ell}$ such that~$a \sim_{\mathbf F;\mathbf P} b$.
        By the definition of the set~$\CS + \CR$, there are~$a_1 \in \CS$ and~$a_2 \in \CR$ such that~$a_1 + a_2 = a$.
        This means $a_1 \sim_{\mathbf F;\mathbf P} b + a_2$ and, since~$\CS'$ is an $(\mathbf F;\mathbf P)$-representative set of~$\CS$,
        there is some~$a'_1 \in \CS'$ such that~$a'_1 \sim_{\mathbf F;\mathbf P} b + a_2$ which is equivalent to $a_2 \sim_{\mathbf F;\mathbf P} b + a'_1$.

        Since~$\CR'$ is an $(\mathbf F;\mathbf P)$-representative set of~$\CR$, we obtain a vector~$a'_2 \in \CR'$ such that~$a'_2 \sim_{\mathbf F;\mathbf P} b + a'_1$ which is equivalent to $a'_1 + a'_2 \sim_{\mathbf F;\mathbf P} b$.
        Also $a'_1 + a'_2 \in \CS' \oplus \CR'$.
    \qedhere
  \end{enumerate}
\end{proof}

Now, we have all the tools to present an algorithm
for \srDomSet on graphs of small treewidth
based on representative sets.
To state its running time, let us introduce the following cost measure for sets of natural numbers.
We write $\emptyset \neq \tau \subseteq \NN$ for a finite or cofinite set.
If $\tau$ is finite, then we define $\cost(\tau) \coloneqq \max(\tau)$.
Otherwise, $\tau$ is cofinite and we define $\cost(\tau) \coloneqq |\NN \setminus \tau|$.

\begin{theorem}[Theorem \ref{thm:dp-representative-set-intro} restated]
 \label{thm:dp-representative-set}
 Suppose $\sigma, \rho \subseteq \NN$ are finite or cofinite.
 Also, let $\allCost \coloneqq \max\{\cost(\sigma),\cost(\rho)\}$.
 Then, there is an algorithm $\CA$ that, given a graph $G$ and a nice tree decomposition of $G$ of width $\tw$, decides whether $G$ has a $(\sigma,\rho)$-set in time
 \[2^\tw \cdot (\allCost + 1)^{\tw(\omega + 1)} \cdot (\allCost + \tw)^{\Oh(1)} \cdot |V(G)|.\]
\end{theorem}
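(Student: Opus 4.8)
The plan is to run a standard dynamic programming along the given nice tree decomposition, but instead of storing, for every node $t$, the full realized language $L(G[V_t],X_t)\subseteq\allStates_n^{X_t}$, we store for every $\sigma$-vector $\vec s\in\{0,1\}^{X_t}$ a small \emph{representative set} of weight-vectors of partial solutions with that $\sigma$-vector. Concretely, fix $t$ and $\vec s$; for each $v\in X_t$ set $F_v\coloneqq\NN\setminus\sigma$ if $\vec s[v]=1$ and $F_v\coloneqq\NN\setminus\rho$ if $\vec s[v]=0$ (a forbidden set if the corresponding one of $\sigma,\rho$ is cofinite, a positive set $P_v\coloneqq\sigma$ resp.\ $\rho$ if it is finite). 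Writing $k$ for the number of positions whose relevant set is cofinite and $\ell$ for the number whose relevant set is finite (so $k+\ell=|X_t|\le\tw+1$), and noting $|F_v|\le\cost(\sigma)\le\allCost$ and $\max(P_v)\le\cost(\sigma)=\sigMax\le\allCost$ — likewise for $\rho$ — \cref{la:representative-set-mixed} lets us compute, from the set $\CS$ of weight-vectors of partial solutions with $\sigma$-vector $\vec s$, a representative subset $\CS'$ of size at most $(\allCost+1)^{k+\ell}\le(\allCost+1)^{\tw+1}$ in time $\Oh(|\CS|\cdot(\allCost+1)^{\ell+k(\omega-1)}(k+\ell))$. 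The table at $t$ is then indexed by pairs $(\vec s,\CS')$; since there are at most $2^{\tw+1}$ choices of $\vec s$, the table has at most $2^{\tw+1}(\allCost+1)^{\tw+1}$ entries.

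First I would set up the invariant and verify it is preserved at each node type. The invariant at node $t$ is: for every $\vec s$, the stored set contains, for each $b\in\NN^{X_t}$, a partial-solution weight-vector $a$ with $a+b\notin F_v$ (resp.\ $\in P_v$) coordinatewise whenever one exists in the full table — this is exactly \cref{def:compatibility-graph-forbidden} / its mixed variant applied to the degrees a partial solution with $\sigma$-vector $\vec s$ can still legally accumulate from outside $V_t$. At a \textbf{leaf} the table is trivial. At an \textbf{introduce} node we extend each $(\vec s,a)$ in the two possible ways (the new vertex selected or not), recomputing the degree of the new vertex from the states of its already-present neighbours and incrementing the degrees of those neighbours that are adjacent to the new selected vertex; because any extension $b$ that was compatible with an old entry is still compatible with its image (extending $b$ by a $0$ at the new coordinate, after accounting for the new internal edges), re-running \cref{la:representative-set-mixed} on the extended family restores a representative set of the right size. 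At a \textbf{forget} node we drop the forgotten coordinate $v$ but only after discarding states at $v$ that violate $\sigma$ (resp.\ $\rho$) — equivalently, restricting to weight-vectors whose $v$-coordinate already lies in $\sigma$ or $\rho$ — which is where the partial solution becomes a full local solution at $v$; correctness here uses that forgetting commutes with the representative-set property because $v$ receives no further neighbours. The \textbf{join} node is the crux: $X_t=X_{t_1}=X_{t_2}$, and a partial solution at $t$ with $\sigma$-vector $\vec s$ is obtained by pairing a partial solution at $t_1$ and one at $t_2$ sharing $\vec s$, adding their weight-vectors and then \emph{subtracting} for each $v$ the number of selected neighbours of $v$ inside $X_t$ (which have been double-counted). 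So for each $\vec s$ I would take the stored representative sets $\CS_1',\CS_2'$, form all sums $a_1+a_2-\delta_{\vec s}$ where $\delta_{\vec s}[v]$ is the (fixed, $\vec s$-determined) number of selected neighbours of $v$ in $X_t$, discard any coordinate that went below $0$ or above $\allCost$, and feed the resulting $O((\allCost+1)^{2(\tw+1)})$-size family into \cref{la:representative-set-mixed} once more to shrink it back to $(\allCost+1)^{\tw+1}$. The key lemma to state and prove is that the union/product of two representative families is representative for the set of legal combinations — this follows because the "extension" $b$ a combined solution awaits is exactly an extension that both constituent solutions (after the $\delta_{\vec s}$ bookkeeping) awaited, so picking representatives on each side and then re-representing loses nothing.

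The main obstacle I expect is making the join step precise and correct: one must be careful that the correction vector $\delta_{\vec s}$ is genuinely a function of $\vec s$ alone (it is, since adjacency within $X_t$ and which vertices of $X_t$ are selected are both encoded by $\vec s$), that the shift can indeed push coordinates out of the legal range and that discarding exactly those is sound (a coordinate exceeding $\allCost$ in the finite case already violates the set, and in the cofinite case exceeding $\allCost=\cost$ is harmless only after we have, as at the forget node, ensured the remaining slack is absorbed — so I would phrase representative sets over weight-vectors truncated at $\allCost+1$ to sidestep this, exactly as \cref{la:representative-set-mixed} already expects bounded entries), and that the representativeness is genuinely transitive under the "combine then re-represent" operation. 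Once the invariant is in place, the running-time bookkeeping is routine: processing one node costs $O\!\big((\allCost+1)^{2(\tw+1)}\cdot(\allCost+1)^{\ell+k(\omega-1)}(\allCost+\tw)^{O(1)}\big)$, which is dominated by $2^{\tw}(\allCost+1)^{\tw(\omega+1)}(\allCost+\tw)^{O(1)}$ after absorbing the $2^{\tw+1}$ factor over $\sigma$-vectors and using $\omega\ge 2$; multiplying by $|V(T)|=O(\tw\cdot|V(G)|)$ gives the claimed $2^{\tw}\cdot(\allCost+1)^{\tw(\omega+1)}\cdot(\allCost+\tw)^{O(1)}\cdot|V(G)|$. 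At the root, $X_r=\emptyset$, so the single stored entry is non-empty iff $G$ has a $(\sigma,\rho)$-set.
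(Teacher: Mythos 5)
Your proposal is correct and matches the paper's approach essentially step-by-step: you maintain a representative set of weight-vectors for each $\sigma$-vector $\vec s$, use \cref{la:representative-set-mixed} to shrink the tables after each node, combine tables at a join node by adding weight-vectors and subtracting the $\vec s$-determined correction $\delta_{\vec s}$ for double-counted neighbours in $X_t$, and filter on $\sigma$/$\rho$ membership at a forget node.

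One caution on the truncation remark at the end of your second paragraph: truncating weight-vectors at $\allCost+1$ would be unsound at cofinite positions (for instance with $\sigma=\NN\setminus\{5,10\}$ you have $\cost(\sigma)=2$, yet degrees $3$ and $4$ must remain distinguishable because only one of them becomes forbidden after one more selected neighbour). \Cref{la:representative-set-mixed} does \emph{not} require bounded entries at $F$-positions --- the lemma's proof filters only the $P$-positions to at most $t$ (entries larger than $\max(P_j)$ are isolated vertices in the compatibility graph), while $F$-positions pass through the algebraic machinery of \cref{la:representative-set-forbidden} unbounded. The returned representative set is a genuine subset of the input set of (untruncated) weight-vectors, so no truncation is applied or needed; you can simply store the weight-vectors as they are and let the lemma control the table size.
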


Before we dive into the proof, let us again compare the running times from this algorithm and the existing algorithm by van Rooij (Theorem \ref{thm:vanrooij-intro}).
To this end, we define a modified cost measure.
Write $\tau \subseteq \NN$ for a finite or cofinite set.
If $\tau$ is finite, then we define $\cost'(\tau) \coloneqq \max(\tau)$.
Otherwise, $\tau$ is cofinite and we define $\cost'(\tau) \coloneqq \max(\NN \setminus \tau) + 1$ if $\NN \setminus \tau \neq \emptyset$, and $\cost'(\NN) = 0$.
Observe the difference in the definition for cofinite sets.
Also, note that $\cost(\tau) \leq \cost'(\tau)$ for all finite or cofinite sets $\tau \subseteq \NN$.
Moreover, for a cofinite set $\tau \subseteq \NN$, we have that $\cost(\tau) = \cost'(\tau)$ if and only if $\tau = \{c,c+1,c+2,\dots\}$ for some number $c \in \NN$.

Using this cost measure, the running time of the algorithm from \cref{thm:vanrooij-intro} is
\[\Big(\cost'(\sigma) + \cost'(\rho) + 2\Big)^\tw |V(G)|^{\Oh(1)}.\]
On the other hand, the algorithm from Theorem \ref{thm:dp-representative-set} runs in time
\[\Big(2 \cdot \big(\max\{\cost(\sigma),\cost(\rho)\} + 1\big)^{\omega + 1}\Big)^{\tw} |V(G)|^{\Oh(1)}.\]
If $\sigma$ and $\rho$ are both finite, then the algorithm by van Rooij is clearly faster using that $\cost'(\sigma) = \cost(\sigma)$ and $\cost'(\rho) = \cost(\rho)$ (but, in this case, Theorem \ref{thm:main-alg-m-str} provides an improved algorithm for structured sets).
However, if one the sets $\sigma, \rho$ is cofinite,
then our algorithm may be substantially faster
since $\cost(\tau)$ can be arbitrarily smaller than $\cost'(\tau)$
for a cofinite set $\tau$.
As a concrete example, suppose that $\rho = \NN \setminus \{c\}$ and $\sigma = \NN \setminus \{d\}$.
Then, $\cost(\rho) = \cost(\sigma) = 1$, but $\cost'(\rho) = c+1$ and $\cost'(\sigma) = d+1$.
Hence, van Rooij's algorithm runs in time $(c+d+4)^{\tw}|V(G)|^{\Oh(1)}$, where the algorithm from Theorem \ref{thm:dp-representative-set} takes time $2^{\tw(\omega + 2)}|V(G)|^{\Oh(1)}$ which is at most $20.72^\tw |V(G)|^{\Oh(1)}$.
Observe that the second running time is independent of $c$ and $d$.

\begin{proof}[Proof of \Cref{thm:dp-representative-set}]
 Let $(T,\beta)$ denote the nice tree decomposition of $G$ and suppose $n = |V(G)|$.
 For ease of notation, let us set $\allStates \coloneqq \allStates_n = \{\rho_0,\dots,\rho_n,\sigma_0,\dots,\sigma_n\}$ for the remainder of this proof.
 For a node $t \in V(T)$, we denote by $X_t \coloneqq \beta(t)$ the bag of node $t$ and by $V_t$ the set of vertices contained in bags below $t$ (including $t$ itself).
 For each node $t \in V(T)$ and each $\vec{s} \in \{0,1\}^{X_t}$,
 we denote by $\widehat{L}_{t,\vec{s}} \subseteq \allStates^{X_t}$ the set of all strings $x \in \allStates^{X_t}$
 that are compatible with $(G\position{V_t},X_t)$ and satisfy $\sigvec{x} = \vec{s}$.
 To simplify notation, we write~$L_{t, \vec s} = \{\degvec{x} \mid x \in \widehat{L}_{t,\vec s}\}$ for the corresponding set of weight-vectors.

 Now, let us fix some $t \in V(T)$ and $\vec{s} \in \{0,1\}^{X_t}$.
 Also, suppose that $v \in X_t$.
 We say that $v$ is an \emph{F-position} if $\vec{s}\vposition{v} = 1$ and $\sigma$ is cofinite, or $\vec{s}\vposition{v} = 0$ and $\rho$ is cofinite.
 In the former case, we define $F_v \coloneqq \NN \setminus \sigma$, and in the latter case we define $F_v \coloneqq \NN \setminus \rho$.
 If $v$ is not an \emph{F-position}, then we say that $v$ is a \emph{P-position}.
 Note that $v$ is a P-position
 if $\vec{s}\vposition{v} = 1$ and $\sigma$ is finite, or $\vec{s}\vposition{v} = 0$ and $\rho$ is finite.
 In the former case, we define $P_v \coloneqq \sigma$, and in the latter case we define $P_v \coloneqq \rho$.
 By ordering elements in $X_t$ accordingly, we may assume that $X_t = \{v_1,\dots,v_k,v_{k+1},\dots,v_{k + \ell}\}$ such that $v_1,\dots,v_k$ are F-positions and $v_{k+1},\dots,v_{k + \ell}$ are P-positions.

 For ease of notation, we say that a set $R_{t,\vec{s}} \subseteq L_{t,\vec{s}}$ is a \emph{$(t,\vec{s})$-representative set of $L_{t,\vec{s}}$}
 if $R_{t,\vec s}$ is an $(\mathbf F; \mathbf P)$-representative set of $L_{t,\vec s}$ where $\mathbf F = (F_{v_1},\dots,F_{v_k})$ and $\mathbf P = (P_{v_{k+1}},\dots,P_{v_{k+\ell}})$.

 The algorithm computes, for each $t \in V(T)$ and $\vec{s} \in \{0,1\}^{X_t}$, a $(t,\vec{s})$-representative set $R_{t,\vec{s}}$ of $L_{t,\vec{s}}$ such that $|R_{t,\vec{s}}| \leq (\allCost+1)^{|X_t|}$.
 To compute these sets we proceed in a bottom-up fashion starting at the leaves of $T$.
 Suppose $t$ is a leaf of $T$.
 Then, $X_t = V_t = \emptyset$ and $L_{t,\vec{s}} = \{\varepsilon\}$.
 We set $R_{t,\vec{s}} \coloneqq \{\varepsilon\}$.

 Next, let $t$ denote an internal node and suppose the algorithm already computed all sets $R_{t',\vec{s}}$ for all children $t'$ of $t$.

 \begin{description}
  \item[Forget:]
   First, suppose $t$ is a forget-node and write $t'$ for the unique child of $t$.
   Also, assume that $X_{t'} = X_t \cup \{v\}$, i.e., $v \in V(G)$ is the vertex forgotten at $t$.
   Fix some $\vec{s} \in \{0,1\}^{X_t}$.
   For $i \in \{0,1\}$, we write $\vec{s}_i \in \{0,1\}^{X_t \cup \{v\}}$ for the extension of $\vec{s}$ for which $\vec{s}\vposition{v} = i$.
   Letting $j$ be the position of vertex~$v$ in~$\vec s_i$, we get
   \begin{align*}
    L_{t,\vec{s}} = (L_{t',\vec{s}_0}\ominus j)
               \cup (L_{t',\vec{s}_1}\ominus j).
   \end{align*}

   We set
   \begin{align*}
    \widehat{R}_{t,\vec{s}}
     = (R_{t',\vec{s}_0}\ominus j) \cup (R_{t',\vec{s}_1}\ominus j).
   \end{align*}
   By \cref{la:representative-set-mixed-operations}, the set $\widehat{R}_{t,\vec{s}}$ is a $(t,\vec{s})$-representative set of $L_{t,\vec{s}}$.
   Observe that $\widehat{R}_{t,\vec{s}}$ can be computed in time $\Oh((\allCost+1)^{|X_{t'}|}\cdot |X_t|) = \Oh((\allCost+1)^{\tw} \cdot (\allCost + \tw)^{\Oh(1)})$.

   Finally, we obtain $R_{t,\vec{s}}$ by computing a $(t,\vec{s})$-representative set of $\widehat{R}_{t,\vec{s}}$ using Lemma \ref{la:representative-set-mixed}.
   Note that $|R_{t,\vec{s}}| \leq (\allCost+1)^{|X_t|}$ as desired.
   Also,  $R_{t,\vec{s}}$ is a $(t,\vec{s})$-representative set of $L_{t,\vec{s}}$ since $\widehat{R}_{t,\vec{s}}$ is a $(t,\vec{s})$-representative set of $L_{t,\vec{s}}$.
   This step takes time
   \begin{align*}
       &\Oh(|\widehat{R}_{t,\vec{s}}| \cdot (\allCost+1)^{|X_t|(\omega - 1)} \cdot |X_t|)\\
    =\ &(\allCost+1)^{\tw} \cdot (\allCost+1)^{\tw(\omega - 1)} \cdot (\allCost + \tw)^{\Oh(1)}\\
    =\ &(\allCost+1)^{\tw\cdot \omega} \cdot (\allCost + \tw)^{\Oh(1)}.
   \end{align*}

   So overall, computing $R_{t,\vec{s}}$ for every $\vec{s} \in \{0,1\}^{X_t}$ takes time $2^\tw \cdot (\allCost+1)^{\tw\cdot \omega} \cdot (\allCost + \tw)^{\Oh(1)}$.
  \item[Introduce:]
   Next consider the case that $t$ is an introduce-node and write $t'$ for the unique child of $t$.
   Suppose that $X_t = X_{t'} \cup \{v\}$, that is, $v \in V(G)$ is the vertex introduced at $t$.
   Note that $N_G(v) \cap V_t \subseteq X_{t'}$.

   Now, fix some $\vec{s} \in \{0,1\}^{X_t}$.
   We define a string~$z_{\vec s} \in \NN^{X_t}$ via
   \[z_{\vec s} \position{v} \coloneqq |\{w \in N_G(v) \cap X_t \mid \vec{s}\position{w} = 1\}|\]
   and
   \[z_{\vec s} \position{w} \coloneqq \begin{cases}
                                        1 &\text{if } \vec{s}\position{v} = 1 \text{ and } w \in N_G(v),\\
                                        0 &\text{otherwise}
                                       \end{cases}\]
   for all $w \in X_{t'}$.
   Then, letting $j$ be the position of vertex~$v$ in~$\vec s$, we have
   \[L_{t,\vec{s}} = (L_{t', \vec s \position{X_{t'}}} \oplus j) + z_{\vec s}\]
   We compute
   \[\widehat R_{t,\vec s} \coloneqq (R_{t',\vec{s}\position{X_{t'}}} \oplus j) + z_{\vec s}\]

   Again, by \cref{la:representative-set-mixed-operations}, the set $\widehat{R}_{t,\vec{s}}$ is a $(t,\vec{s})$-representative set of $L_{t,\vec{s}}$.

   We obtain $R_{t,\vec{s}}$ by computing a $(t,\vec{s})$-representative set
   of $\widehat{R}_{t,\vec{s}}$ using \cref{la:representative-set-mixed}.
   Note that $|R_{t,\vec{s}}| \leq (\allCost+1)^{|X_t|}$ as desired.
   Also,  $R_{t,\vec{s}}$ is a $(t,\vec{s})$-representative set of $L_{t,\vec{s}}$ since $\widehat{R}_{t,\vec{s}}$ is a $(t,\vec{s})$-representative set of $L_{t,\vec{s}}$.
   This step takes time
   \begin{align*}
       &\Oh(|\widehat{R}_{t,\vec{s}}| \cdot (\allCost+1)^{|X_t|(\omega - 1)} \cdot |X_t|)\\
    =\ &(\allCost+1)^{\tw} \cdot (\allCost+1)^{\tw(\omega - 1)} \cdot (\allCost + \tw)^{\Oh(1)}\\
    =\ &(\allCost+1)^{\tw\cdot \omega} \cdot (\allCost + \tw)^{\Oh(1)}.
   \end{align*}

   So, in total, computing $R_{t,\vec{s}}$ for every $\vec{s} \in \{0,1\}^{X_t}$ takes time $2^\tw \cdot (\allCost+1)^{\tw\cdot \omega} \cdot (\allCost + \tw)^{\Oh(1)}$.

  \item[Join:]
   Finally, suppose that $t$ is a join-node and write $t_1,t_2$ for the two children of $t$.
   Note that $X_t = X_{t_1} = X_{t_2}$.

   Again, let us fix some $\vec{s} \in \{0,1\}^{X_t}$.
   We define the string~$z_{\vec s} \in \NN^{X_t}$ via
   \[z_{\vec s} \position{v} \coloneqq -|\{w \in N_G(v) \cap X_t \mid \vec{s}\position{w} = 1\}|.\]
   Now,
   \[L_{t,\vec{s}} = (L_{t_1,\vec{s}} + L_{t_2,\vec{s}}) + z_{\vec s}.\]
   Again, the algorithm computes
   \[\widehat{R}_{t,\vec{s}} \coloneqq  (R_{t_1,\vec s} + R_{t_2,\vec s}) + z_{\vec s}.\]
   This can be done in time $|R_{t_1,\vec{s}}| \cdot |R_{t_1,\vec{s}}| \cdot (\allCost + \tw)^{\Oh(1)} = (\allCost + 1)^{2\tw} \cdot (\allCost + \tw)^{\Oh(1)}$.
   By \cref{la:representative-set-mixed-operations}, the set $\widehat{R}_{t,\vec{s}}$ is a $(t,\vec{s})$-representative set of $L_{t,\vec{s}}$.
´
   As usual, we obtain $R_{t,\vec{s}}$ by computing a $(t,\vec{s})$-representative set of $\widehat{R}_{t,\vec{s}}$ using \cref{la:representative-set-mixed}.
   Note that $|R_{t,\vec{s}}| \leq (\allCost+1)^{|X_t|}$ as desired.
   Also, $R_{t,\vec{s}}$ is a $(t,\vec{s})$-representative set of $L_{t,\vec{s}}$ since $\widehat{R}_{t,\vec{s}}$ is a $(t,\vec{s})$-representative set of $L_{t,\vec{s}}$.
   This step takes time
   \begin{align*}
       &\Oh(|R_{t_1,\vec{s}}| \cdot |R_{t_2,\vec{s}}| \cdot (\allCost+1)^{|X_t|(\omega - 1)} \cdot |X_t|)\\
    =\ &(\allCost+1)^{2\tw} \cdot (\allCost+1)^{\tw(\omega - 1)} \cdot (\allCost + \tw)^{\Oh(1)}\\
    =\ &(\allCost+1)^{\tw\cdot(\omega + 1)} \cdot (\allCost + \tw)^{\Oh(1)}.
   \end{align*}

   So, in total, computing $R_{t,\vec{s}}$ for every $\vec{s} \in \{0,1\}^{X_t}$ takes time $2^\tw \cdot (\allCost+1)^{\tw\cdot(\omega + 1)} \cdot (\allCost + \tw)^{\Oh(1)}$.
 \end{description}
 Since processing a single node of $T$ takes time $2^\tw \cdot (\allCost + 1)^{\tw(\omega + 1)} \cdot (\allCost + \tw)^{\Oh(1)}$ and $|V(T)| = \Oh(\tw \cdot |V(G)|)$, it follows that all sets $L_{t,\vec{s}}$ can be computed in the desired time.

 To decide whether $G$ has a $(\sigma,\rho)$-set, the algorithm considers the root node $t \in V(T)$ for which $X_t = \emptyset$ and $V_t = V(G)$.
 Then, $G$ has a $(\sigma,\rho)$-set if and only if $\varepsilon \in R_{t,\vec{s}}$, where $\vec{s}$ denotes the empty vector.
\end{proof}

Similarly to the previous section, we can also obtain an algorithm for the optimization version by incorporating the size of solution sets.

\begin{theorem}
 \label{thm:dp-representative-set-opt}
 Suppose $\sigma, \rho \subseteq \NN$ are finite or cofinite.
 Also, let $\allCost \coloneqq \max\{\cost(\sigma),\cost(\rho)\}$.
 Then, there is an algorithm $\CA$ that, given a graph $G$, an integer $k$, and a nice tree decomposition of $G$ of width $\tw$, decides whether $G$ has a $(\sigma,\rho)$-set of size at most (at least) $k$ in time
 \[2^\tw \cdot (\allCost + 1)^{\tw(\omega + 1)} \cdot (\allCost + \tw)^{\Oh(1)} \cdot |V(G)|^2.\]
\end{theorem}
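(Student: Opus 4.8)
The plan is to follow the proof of \cref{thm:dp-representative-set} essentially verbatim, but to refine every dynamic-programming table by the size of the underlying partial solution. Concretely, for each node $t$ of the nice tree decomposition, each $\vec s\in\{0,1\}^{X_t}$, and each $j\in\{0,\dots,|V(G)|\}$, I would define $L_{t,\vec s,j}\subseteq\allStates^{X_t}$ to be the set of strings $x$ with $\sigvec x=\vec s$ that are compatible with $(G[V_t],X_t)$ via some witness $S_x\subseteq V_t$ with $|S_x|=j$ (a single string may lie in several $L_{t,\vec s,j}$, one for each attainable witness size). The algorithm maintains, for every triple $(t,\vec s,j)$, a $(t,\vec s)$-representative set $R_{t,\vec s,j}\subseteq L_{t,\vec s,j}$ with $|R_{t,\vec s,j}|\le(\allCost+1)^{|X_t|}$, where ``representative'' means exactly as in the proof of \cref{thm:dp-representative-set}: for every $y\in\allStates^{X_t}$ with $\sigvec y=\vec s$, there is $x\in L_{t,\vec s,j}$ with $x\sim_{t,\vec s}y$ iff there is $x'\in R_{t,\vec s,j}$ with $x'\sim_{t,\vec s}y$. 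At the root $r$ (with $X_r=\emptyset$), $L_{r,\emptyset,j}=\{\varepsilon\}$ precisely when $G$ has a $(\sigma,\rho)$-set of size exactly $j$, so the algorithm answers ``at most $k$'' (resp.\ ``at least $k$'') by checking whether $R_{r,\emptyset,j}\neq\emptyset$ for some $j\le k$ (resp.\ $j\ge k$).

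The content of the argument is to re-run the three node-type computations from \cref{thm:dp-representative-set} inside a fixed size class, tracking how $j$ transforms. For a forget node forgetting $v$, the witness is unchanged by forgetting an already-classified vertex, so $\widehat R_{t,\vec s,j}$ is assembled from $R_{t',\vec s_0,j}$ and $R_{t',\vec s_1,j}$ (the ``happy at $v$'' restrictions), where $\vec s_i$ extends $\vec s$ by $\vec s_i\position{v}=i$; the verification that $\widehat R_{t,\vec s,j}$ is representative is the decision argument applied with the extra index $j$ carried along. For an introduce node introducing $v$, $\rho$-extensions leave the witness size unchanged (so $L_{t,\vec s,j}$ for $\vec s\position{v}=0$ comes from $L_{t',\vec s\position{X_{t'}},j}$) while $\sigma$-extensions add $v$ to the witness (so $L_{t,\vec s,j}$ for $\vec s\position{v}=1$ comes from $L_{t',\vec s\position{X_{t'}},j-1}$); otherwise the construction and its correctness proof are unchanged. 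For a join node with children $t_1,t_2$, a witness for $x_1\oplus_{\vec s}x_2$ is $S_{x_1}\cup S_{x_2}$ with $S_{x_1}\cap S_{x_2}=S_{x_1}\cap X_t=S_{x_2}\cap X_t$ (the vertices of $X_t$ selected according to $\vec s$), hence of size $|S_{x_1}|+|S_{x_2}|-|\{v\in X_t\mid\vec s\position{v}=1\}|$; so $\widehat R_{t,\vec s,j}$ is the union of $\{x_1\oplus_{\vec s}x_2:x_1\in R_{t_1,\vec s,j_1},\,x_2\in R_{t_2,\vec s,j_2}\}$ over all pairs $(j_1,j_2)$ with $j_1+j_2=j+|\{v\in X_t\mid\vec s\position{v}=1\}|$, and the two-step representative-set replacement (replace $x_1$, then $x_2$) from the decision proof goes through unchanged inside this size class, since it only manipulates the relation $\sim_{t,\vec s}$ and preserves witness sizes. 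In each case $R_{t,\vec s,j}$ is then obtained from $\widehat R_{t,\vec s,j}$ by \cref{la:representative-set-mixed}.

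For the running time, the forget and introduce nodes each cost a factor $\Oh(|V(G)|)$ more than in \cref{thm:dp-representative-set} because of the extra index $j$. The join nodes require a small additional observation: summing the table sizes $\sum_j|\widehat R_{t,\vec s,j}|$ at a single join node equals $\bigl(\sum_{j_1}|R_{t_1,\vec s,j_1}|\bigr)\bigl(\sum_{j_2}|R_{t_2,\vec s,j_2}|\bigr)$, and since $\sum_{\text{join }t}|V_{t_1}|\cdot|V_{t_2}|=\Oh(|V(G)|^2)$ (each ordered pair of vertices is ``separated'' at essentially one join node, with an $\Oh(\tw^2\,|V(G)|)$ correction for in-bag pairs), the aggregate cost of all join nodes is $2^\tw\cdot(\allCost+1)^{\tw(\omega+1)}\cdot(\allCost+\tw)^{\Oh(1)}\cdot|V(G)|^2$, which dominates. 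I expect the only genuinely new point — and the one most worth stating carefully — to be exactly this bookkeeping at join nodes: confirming that the size classes combine as $j_1+j_2-|\{v\in X_t\mid\vec s\position{v}=1\}|=j$ and that tracking them does not break the representative-set replacement, together with the $\Oh(|V(G)|^2)$ bound on $\sum|V_{t_1}||V_{t_2}|$ that keeps the overall running time at the claimed level rather than $|V(G)|^3$. (The same refinement of the tables by $j$ is also what one uses to obtain a counting-style bound, but that is not needed here.)
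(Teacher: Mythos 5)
Your approach — refining each representative-set table by the witness size $j$, and carrying $j$ through the three node types while keeping the $\sim_{t,\vec s}$-representativity within each size class — is exactly the ``incorporate the size of solution sets'' refinement the paper intends, so you are on the same route. The parts that need checking all check out: the forget case leaves $j$ fixed; the introduce case shifts $j$ by one on $\sigma$-extensions and leaves it on $\rho$-extensions; at a join the witnesses combine as $|S_{x_1}|+|S_{x_2}|-|\{v\in X_t\mid\vec s\position{v}=1\}|$ because $S_{x_1}\cap S_{x_2}=S_{x_1}\cap X_t$; and the two-step replacement $x_1\mapsto x_1'$, $x_2\mapsto x_2'$ from the decision proof only manipulates $\sim_{t_i,\vec s}$ and picks replacements from the same size class $j_i$, so it stays inside $\widehat R_{t,\vec s,j}$ and the argument is unchanged. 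The root check for ``size $\le k$'' (resp.\ ``$\ge k$'') is also correct.

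One slip in the join-time bookkeeping: the ``in-bag'' correction to $\sum_{\text{join }t}|V_{t_1}|\cdot|V_{t_2}|$ is $\Oh(\tw^2\,|V(G)|^2)$, not $\Oh(\tw^2\,|V(G)|)$. A pair $(u,v)$ with $u\in X_t$ can be separated at many joins, and the number of (join $t$, $u\in X_t$) pairs is $\Oh(\tw^2|V(G)|)$; each then contributes up to $|V_{t_2}|\le|V(G)|$ choices of $v$. So the correct statement is $\sum_{\text{join }t}|V_{t_1}|\cdot|V_{t_2}| = \Oh(\tw^{\Oh(1)}\cdot|V(G)|^2)$. This does not affect your conclusion: the extra $\tw^{\Oh(1)}$ is absorbed into the $(\allCost+\tw)^{\Oh(1)}$ factor, and the claimed running time follows.
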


However, in contrast to the previous section, the representative set approach cannot be extended to the counting version of the problem.
Note that this is by design, since the fundamental idea of this approach is to not keep all the partial solutions which would be necessary for the counting version.
Actually, \cref{thm:lower-main-intro} implies that there is no algorithm
counting $(\sigma,\rho)$-sets in time $(f(\allCost))^{\tw} \cdot |V(G)|^{\Oh(1)}$
for any function $f$ assuming \#SETH.

\section{Conclusion}
\label{sec:conclusion}

For every pair of finite or cofinite sets $(\sigma,\rho)$,
the present work together with the accompanying paper \cite{FockeMMNSSW23ii} determines (assuming the Counting Strong Exponential Time Hypothesis)
the best possible value $c_{\sigma,\rho}$ such that there is an algorithm
that counts $(\sigma,\rho)$-sets in time $(c_{\sigma,\rho})^\tw\cdot n^{\O(1)}$
(if a tree decomposition of width $\tw$ is given in the input).
In doing so, we obtain improved algorithms for both counting $(\sigma,\rho)$-sets, as well as deciding whether there is a $(\sigma,\rho)$-set for $\mname$-structured pairs $(\sigma,\rho)$ where $\mname \geq 2$.

For finite sets $\sigma$ and $\rho$, the lower bounds \cite{FockeMMNSSW23ii} extend to the decision version (assuming $0 \notin \rho$).
In contrast, for the decision problem with cofinite sets, we show that significant improvements are possible for certain pairs $(\sigma,\rho)$ using the technique of representative sets.
More precisely, we prove that, in this setting, the base of the running time depends only on $\cost(\sigma) + \cost(\rho)$, where $\cost(\tau)$ counts the number of elements that are missing from a cofinite set $\tau$. Thus, $\cost(\tau)$ might be much smaller than the largest missing integer from $\tau$ (which determined the value $c_{\sigma,\rho}$).

Of course, the most natural open problem is to determine the precise complexity of deciding whether a given graph of bounded treewidth has a $(\sigma,\rho)$-set (of a certain size).
However, as already pointed out above, for a tight result, one would need to overcome at least two major challenges: proving tight upper bounds on the size of representative sets, and understanding whether they can be handled without using matrix-multiplication based methods.

A much more approachable problem seems to be to obtain tight bounds for arbitrary finite
and \emph{simple} cofinite sets (that is, cofinite sets of the form $\tau = \{k,k+1,k+2,\dots\}$).
For such pairs of sets, the representative set approach does not lead to faster algorithms, and many interesting problems such as \textsc{Dominating Set} and \textsc{Independent Set} are still covered.
Note that, for such pairs, the decision problem (i.e., the problem of deciding whether there is a $(\sigma,\rho)$-set) becomes polynomial-time solvable in many cases (e.g., for \textsc{Dominating Set} and \textsc{Independent Set} the decision version is trivial).
So, in order to obtain meaningful bounds for all the relevant problems, one would also have to consider the maximization or minimization versions (i.e., given a graph, find a $(\sigma,\rho)$-set of maximal or minimal size).

\small
\bibliographystyle{plainnat}
\bibliography{gen-dom-set}

\end{document}